\newcommand{\green}[1]{\textcolor{black!50!green}{#1}}
\def\qed{\hfill \vrule height 7pt width 7pt depth 0pt\medskip}
\newtheorem{theorem}{Theorem}
\newtheorem{definition}{Definition}
\newtheorem{proposition}{Proposition}
\newtheorem{lemma}{Lemma}
\newtheorem{corollary}{Corollary}
\newtheorem{example}{Example}
\newtheorem{remark}{Remark}
\newcommand{\ba}{\begin{array}}
\newcommand{\ea}{\end{array}}
\newcommand{\be}{\begin{equation}}
\newcommand{\ee}{\end{equation}}
\newcommand{\abs}[1]{\lvert#1\rvert}
\newcommand{\mc}{\mathcal}
\newcommand{\ov}{\overline}
\newcommand{\ul}{\underline}
\newcommand{\nash}{\mc X_h^{*}}
\newcommand{\nashconsensus}{\mc X_h^{\bullet}}
\newcommand{\nashcoexistent}{\mc X_h^{\circ}}
\newcommand{\Z}{\mathbb{Z}}
\def\1{\boldsymbol{1}}
\newcommand{\R}{\mathbb{R}}
\renewcommand{\P}{\mathbb{P}}
\newcommand{\se}{\text{ if }}
\DeclareMathOperator*{\argmax}{argmax}
\DeclareMathOperator{\sgn}{sgn}
\def\Z{\mathbb{Z}}
\def\R{\mathbb{R}}
\def\P{\mathbb{P}}
\def\BibTeX{{\rm B\kern-.05em{\sc i\kern-.025em b}\kern-.08em
    T\kern-.1667em\lower.7ex\hbox{E}\kern-.125emX}}
\title{Robust Coordination of Linear Threshold Dynamics on Directed Weighted Networks}
\author{Laura~Arditti, 
Giacomo~Como,~\IEEEmembership{Member,~IEEE,}
        Fabio~Fagnani, and Martina Vanelli,~\IEEEmembership{Member,~IEEE}
\thanks{Some of the results in the paper appeared in preliminary form in \cite{Arditti.ea:2021}.}
\thanks{The authors are with the  Department of Mathematical Sciences ``G.L.~Lagrange,'' Politecnico di Torino, 10129 Torino, Italy  (e-mail: {\{laura.arditti;\,giacomo.como;\,fabio.fagnani;\,martina.vanelli\}@polito.it}). G. Como is also with the Department of Automatic Control, Lund University, 22100 Lund, Sweden.}
\thanks{This research was carried on within the framework of the MIUR-funded {\it Progetto di Eccellenza} of the {\it Dipartimento di Scienze Matematiche G.L.~Lagrange}, Politecnico di Torino, CUP: E11G18000350001. It received partial support from the MIUR  Research Project PRIN 2017 ``Advanced Network Control of Future Smart Grids'' (http://vectors.dieti.unina.it), and by the {\it Compagnia di San Paolo}. }
}
\begin{document}

\maketitle
\thispagestyle{empty}
\begin{abstract}
We study asynchronous dynamics in a network of interacting agents updating their binary states according to a time-varying threshold rule. Specifically, agents revise their state asynchronously by comparing the weighted average of the current states of their neighbors in the interaction network with possibly heterogeneous time-varying threshold values. Such thresholds are determined by an exogenous signal representing an external influence field modeling the different agents' biases towards one state with respect to the other one. We prove necessary and sufficient conditions for global stability of consensus equilibria, i.e., equilibria where all agents have the same state,  robustly with respect to the (constant or time-varying) external field. Our results apply to general weighted directed interaction networks and build on super-modularity properties of certain network coordination games whose best response dynamics coincide with the linear threshold dynamics. In particular, we introduce a novel notion of \emph{robust improvement paths} for such games and characterize conditions for their existence. 
\end{abstract}

\textbf{Index terms:} Linear threshold dynamics, coordination games, network games, network robustness, best response dynamics, robust stability. 
\section{Introduction}\label{sec:introduction}
Robustness, meant as the ability of a system to maintain its performance under a range of different operating conditions,
is undoubtedly a fundamental issue that has long been studied in control \cite{Zhou.Doyle.Glover:1996}. While playing a key role in several domains, robustness and the related notion of resilience have lately become central in multi-agent and network systems, such as infrastructure systems \cite{Rinaldi.ea:2001,NAS-2012,Savla.ea:2014,Como:2017}, financial networks \cite{Eisenberg.Noe:2001,Haldane.May:2011,Acemoglu.ea:2015a,Massai.ea:2021}, as well as social and economic networks \cite{Acemoglu.ea:2011,Acemoglu.ea:2013,Acemoglu.ea:2012,Baqaee:2018}. In such contexts, robustness is typically presented as the capability of the system to react to localized perturbations by absorbing their effect  locally and preventing the global propagation of cascading failures that could prove detrimental for the whole system. A characteristic feature that has been recognized is that  the topology of the interconnection pattern is a key factor determining  the robustness or fragility of such network systems \cite{Acemoglu.ea:2015b,Como.ea:2013b,Como.Lovisari.Savla:2015,Sarkar.Roozbehani.Dahleh:2019,Savla.Shamma.Dahleh:2020}. 

In this paper, we focus on linear threshold dynamics (LTD), a prototypical family of nonlinear network systems first introduced  in \cite{mG:1978}  for fully mixed populations of agents and later extended in various directions  \cite{sM:2000, djW:2002, fVR:2007, EK:2010}. While LTD can be defined in different ways, their core structure consists of a set of agents identified with nodes of an interaction network that strategically change their binary state ($\pm1$) according to a threshold rule. Specifically, agents adopt state $+1$ if and only if the fraction of their neighbors in the interaction network that do so is greater than or equal to a certain exogenous threshold. Various studies of LTD models \cite{sM:2000, hA:2010, mL:2009, mL:2012, Moharrami:2016:EC, Rossi.ea:2019} have concerned topological conditions guaranteeing or preventing full contagion (i.e., convergence to configuration where all agents are in state $+1$) starting from an initial condition of relatively few agents in state $+1$. Most of these studies concern random networks of a specific type. A remarkable exception is \cite{sM:2000} that introduces the concept of cohesiveness of a subset of nodes in a network, through which one can in principle characterize the extent of a spreading phenomenon. 

In the literature, LTD models are assumed to be closed systems without explicit input or output signals. The basic challenge of this paper is to study LTD intrinsically equipped with an external field modeling a possibly node-specific influence from the external environment. As in the classical LTD models without external field the asymptotic outcomes are always consensus equilibria, our analysis concentrates on when a possibly time-varying  external filed can modify this behavior. Precisely, our results are of two types:
\begin{itemize}
\item robust stability results showing that the LTD converges to a consensus for every possibly time-varying external field taking values in a certain range;
\item control results showing that a suitable control signal is capable of preventing the system from reaching consensus by steering it to a different polarized configuration or by forcing persistent  oscillations.
\end{itemize}
Such behaviors will depend on the topology of the interaction network (building on suitable  generalizations of the concept of cohesiveness) and the constraints on the input signal.

LTD can be interpreted as the best response dynamics in a network game whereby agents choose strategically between two states and their payoff is an increasing function of the number of their neighbors choosing the same state. Such games are known as network coordination games and represent one of the most popular models to describe network systems with interactions of strategic complements type \cite{Milgrom.ROberts:1990,Vives:1990}. They find numerous applications in modeling  social and economic behaviors like the emergence of social norms and conventions or the adoption of new technologies \cite{Morris:2000,Young:2001,Young:1993,Young:2006,Montanari.Saberi:2010}. 

Optimal seeding and other intervention problems for network coordination games have been studied in \cite{Kempe.Kleinberg.Tardos:2003} and, in the more general setting of super-modular games, in \cite{Como.Durand.Fagnani:2020}. 
Our goal is different in this paper, as we are mainly interested in understanding the resilience of the system against external attacks. 
Recently, vulnerability of network coordination games against adversarial attacks has been investigated in \cite{Paarpon.ea:2021-TAC,Paarpon.ea:2021-TCNS}, while \cite{Jackson.Storms:2019} uses network coordination games as a micro-foundation for community structure in networks. 

Our analysis strongly relies on the interpretation of the LTD as the best response dynamics of a network coordination game. We then build on super-modularity of such games, i.e., the increasing difference property \cite{Topkins:1979,Topkins:1998}. Specifically, the  convenience for a player to switch from a state to an alternative state is monotone in the fraction of players in their neighborhood already playing the alternative state. 
 Such property continues to hold true under the influence of an external  field.  
 A variation of the external field modifies the threshold of the agents, in extreme cases transforming them into stubborn agents, i.e., agents whose best response is always the same state, regardless of her fellow agents' states. 

In particular, we study conditions under which a system converges to a consensus equilibrium, independently from the values taken by an external field. As it turns out, two conditions need to be satisfied for such robust stability property to hold true. The first condition, to be referred to as robust indecomposability, is a generalization of the lack of cohesive partitions \cite{Morris:2000} to parametrized families of heterogeneous network coordination games. It is equivalent (see Theorem \ref{theorem:robust-indecomposability}) to the lack of coexistent equilibria for any value of the external field within a certain range. On the other hand, the second condition guarantees that the external field is incapable of creating stubborn agents for both states. While the necessity of these two conditions for convergence to a consensus is quite intuitive, the proof of sufficiency is more involved and resides on the possibility to find best response paths for the game that are robust to modifications of the external field. This is achieved in Theorem \ref{theo:robustreachable} that is one of our main results and uses in a crucial way the super-modularity of the game. 

The rest of the paper is organized as follows. We report some basic notation in the remaining part of this section. In Section \ref{sec:main} we present the problem. We introduce the LTD with external field and the fundamental concept of indecomposability (Definition \ref{def:decomposition}). We then state two main results on the asymptotic of such model, Proposition \ref{prop:indecomposable-necessary} and Theorem \ref{theo:time-varying}, and we illustrate the outcomes through a number of examples and simulations. Section \ref{sec:coordination-games} is completely devoted to the analysis of network coordination games, especially the structure of the their set of Nash equilibria that play a crucial role in our study of the LTD. Section \ref{sec:robust-coordination} contains the core technical part of the paper. In particular, Theorem \ref{theo:robustreachable} contains robust reachability and stability results for network coordination games that are the fundamental ingredients to then prove Theorem \ref{theo:time-varying}. The paper is completed with a Section of conclusions and an Appendix containing some of the most technical proofs.

\subsection{Notation} 
For a finite set $\mc I$, we consider vector spaces $\R^{\mc I}$ 
equipped with the 
partial order 
$$	x \leq y \quad \Leftrightarrow \quad x_i \leq y_i, \quad \forall i \in \mc I \,.$$
We use the notation $x \lneq y$ when $x \leq y$ and $x_i < y_i$ for some $i$ in $\mc I$.
A function $f:\R^{\mc I}\to \R^{\mc J}$ is referred to as monotone nondecreasing (nonincreasing) if it preserves (reverses) the partial order $\leq$, i.e., if 
$f(x)\leq f(y)$ ($f(x)\geq f(y)$) for every $x\leq y$. 
For a vector $x$ in  $\R^{\mc I}$, $|x|$ in $\R^{\mc I}$ stands for the vector with entries $(|x|)_i=|x_i|$ for every $i$ in $\mc I$. The symbol $\1$ indicates a vector with all entries equal to $1$.

\section{Problem statement and main results} \label{sec:main}

Throughout the paper, we model networks as finite directed weighted graphs $\mc G=(\mc V, \mc E, W)$, with set of nodes $\mc V$, set of directed links $\mc E\subseteq\mc V\times\mc V$, and weight matrix $W$ in $\R_+^{\mc V\times\mc V}$, whose entries are such that $W_{ij}>0$ if and only if $(i,j)\in\mc E$. We do not allow for the presence of self-loops, equivalently, we assume that the weight matrix $W$ has zero diagonal. 
We refer to the network as undirected in the special case when the weight matrix $W=W'$ is symmetric, so that in particular there is a link $(i,j)$  directed from node $i$ to node $j$ in $\mc E$ if and only if there is also there is also the reverse link $(j,i)$ directed from node $j$ to node $i$ in $\mc E$.   

For a network $\mc G=(\mc V, \mc E, W)$ and a subset of nodes $\mc S\subseteq\mc V$, we denote by  
$$w_i^{\mc S} = \sum_{j\in \mc S} W_{ij}\,,$$
the $\mc S$-\textit{restricted out-degree} of a node $i$ in $\mc V$. 
In the special case when $\mc S=\mc V$ coincides with the whole node set, we simply refer to $w_i=w_i^{\mc V}$ as the \textit{out-degree} of a node $i$ in $\mc V$ and let $w=W\1$ be the vector of out-degrees.

The nodes of the network represent interacting agents. Every agent $i$ in $\mc V$ is endowed by a binary time-varying state $X_i(t)$.  A link $(i,j)$ in $\mc E$ is meant as directed from its tail node $i$ to its head node $j$, and represents a direct influence of agent $j$ on agent $i$, with its weight $W_{ij}$ to be interpreted as a measure of such influence. Let $\mc A=\{\pm1\}$ be the binary state set of each agent, and let $\mc X=\mc A^{\mc V}$ be the configuration space: a configuration $x$ in $\mc X$ is a vector whose entries $x_i$ represent the states of the single agents. The constant vectors $x=\pm\1$ will be referred to as \emph{consensus} configurations. On the other hand, we shall refer to every $x$ in $\mc X\setminus\{\pm\1\}$ as a \emph{co-existent } configuration.

We consider asynchronous time-varying (ATV) \emph{linear threshold dynamics} (LTD) on a network $\mc G=(\mc V,\mc E,W)$, whereby agents $i$ in $\mc V$ update their binary state $X_i(t)$ in $\mc A$ as described below. For a nonempty $\mc H\subseteq\mc R^{\mc V}$, let $h(t)$ in $\mc H$ for $t\ge0$ be an exogenous signal modeling a time-varying external field. Let every agent $i$ in $\mc V$ be equipped with an independent rate-$1$ Poisson clock.\footnote{The assumption that all Poisson clocks have rate $1$ is made merely for the sake of simplicity of the exposition. In fact, it is not hard to show that all results in the paper continue to hold true as stated in the more general setting where every agent $i$'s Poisson clock has rate $\lambda_i>0$.} If agent $i$'s clock ticks at some time $t\ge0$,\footnote{Observe that, with probability $1$, no two agents' clocks will ever tick at the same time $t$. } then agent $i$ modifies her current state $X_i(t^-)$ into a new state $X_i(t)$ such that 
\be\label{dynamics}X_i(t)=\left\{\ba{lcl}+1&\se&\sum_jW_{ij}X_j(t)+h_i(t)>0\\[3pt]X_i(t^-)&\se&\sum_jW_{ij}X_j(t)+h_i(t)=0\\[3pt]-1&\se&\sum_jW_{ij}X_j(t)+h_i(t)<0\,.\ea\right.\ee
The update rule above can be rewritten in the following equivalent way. 
For $i$ in $\mc V$ and time $t\ge0$, let 
\be\label{eq:ri} r_i(t)=\frac12-\frac{h_i(t)}{2w_i}\,,\ee
be a time-varying threshold for agent $i$. 
Also, for a configuration $x$ in $\mc X$, let 
\be\label{wi+wi-}w_i^-(x)=\sum_{j:x_j=-1}W_{ij}\,,\qquad w_i^+(x)=\sum_{j:x_j=1}W_{ij}\,,\ee
be the aggregate weight of links pointing from agent $i$ to agents in state $-1$ and, respectively, to those in state $+1$. 
Then,  \eqref{dynamics} is equivalent to 
\be\label{dynamics-bis}X_i(t)=\left\{\ba{lcl}
+1&\se&w_i^+(X(t^-))>w_ir_i(t)\\[3pt]
X_i(t^-)&\se&w_i^+(X(t^-))=w_ir_i(t)\\[3pt]
-1&\se&w_i^+(X(t^-))<w_ir_i(t)\,,\ea\right.\ee
i.e., if agent $i$ gets activated at time $t\ge0$, then: (a) she updates her state $X_i(t)$ to $+1$ if the weighted fraction $w_i^+(X(t^-))/{w_i}$ of her out-neighbors currently in state $+1$ is above the time-varying threshold $r_i(t)$ (equivalently, if the weighted fraction $w_i^-(X(t^-))/{w_i}$ of her out-neighbors in state $-1$ is below the complementary threshold $1-r_i(t)$); (b)  she updates her state $X_i(t)$ to $-1$ if $w_i^+(X(t^-))/{w_i}$ is below  $r_i(t)$
; or (c) she keeps her current state $X_i(t)=X_i(t^-)$ if $w_i^-(X(t^-))/{w_i}=r_i(t)$.


If we stack the agents' states in a vector $X(t)$ in $\mc X$, then $X(t)$ is a continuous-time inhomogeneous Markov chain on the configuration space $\mc X$. 
In the rest of the paper, we shall focus on the asymptotic behavior of the ATV-LTD $X(t)$ on a network $\mc G=(\mc V,\mc E,W)$ with external field $h(t)$, as described above. 
We shall refer to a configuration $x^*$ in $\mc X$ as \emph{absorbing} for the ATV-LTD, if  $X(t^*)=x^*$ for some $t^*\ge0$ implies that $X(t)=x^*$ for every $t\ge t^*$. 

Specifically, we shall determine necessary and sufficient conditions for almost sure convergence (i.e., convergence with probability one) to a consensus configuration. In particular, our main result concerns robust convergence to consensus for ATV-LTD on a network $\mc G$ when the external field $h(t)$ is an arbitrary (unknown) signal whose range is a hyper-rectangle in the form
 \be\label{H}\mc H=\{h:\,h^-\le h\le h^+\}=\prod\limits_{i\in\mc V}\left[h^-_i,h^+_i\right]\,,\ee
for two (known) vectors $h^-$ and $h^+$ in $\R^{\mc V}$ such that $h^-\le h^+$. 

The conditions for robust almost sure convergence to consensus of the ATV-LTD will be determined in terms of graph-theoretic properties of the network $\mc G$. In particular, we have the following definition. 
\begin{definition}\label{def:decomposition} 
Let $h^-$ and $h^+$ in $\mc R^{\mc V}$ be two vectors such that $h^-\le h^+$. 
Then, a network $\mc G=(\mc V,\mc E,W)$ is $(h^-,h^+)$-\emph{indecomposable} if
for every nontrivial binary partition of the node set 
\be\label{eq:binary-partition}\mc V = \mc V^+ \cup \mc V^-\,,\quad \mc V^-\cap\mc V^+=\emptyset\,,\quad \mc V^+\ne\emptyset\,,\quad \mc V^-\ne\emptyset\,,\ee  
there exist  $s$ in $\{-,+\}$ and a node $i$ in $\mc V^s$ such that
	\begin{equation}\label{eq:robustly_indec}
	w_i^{s} + sh_i^{s}< w_i^{-s}\,,
	\end{equation}
	where
$w^s_i=w^{\mc V^{s}}_i$. 
In the special case when $h^-=h^+=h$, we shall more briefly refer to the network $\mc G$ as $h$-indecomposable. 
\end{definition}
\begin{figure}\begin{center}
\includegraphics[width=5cm]{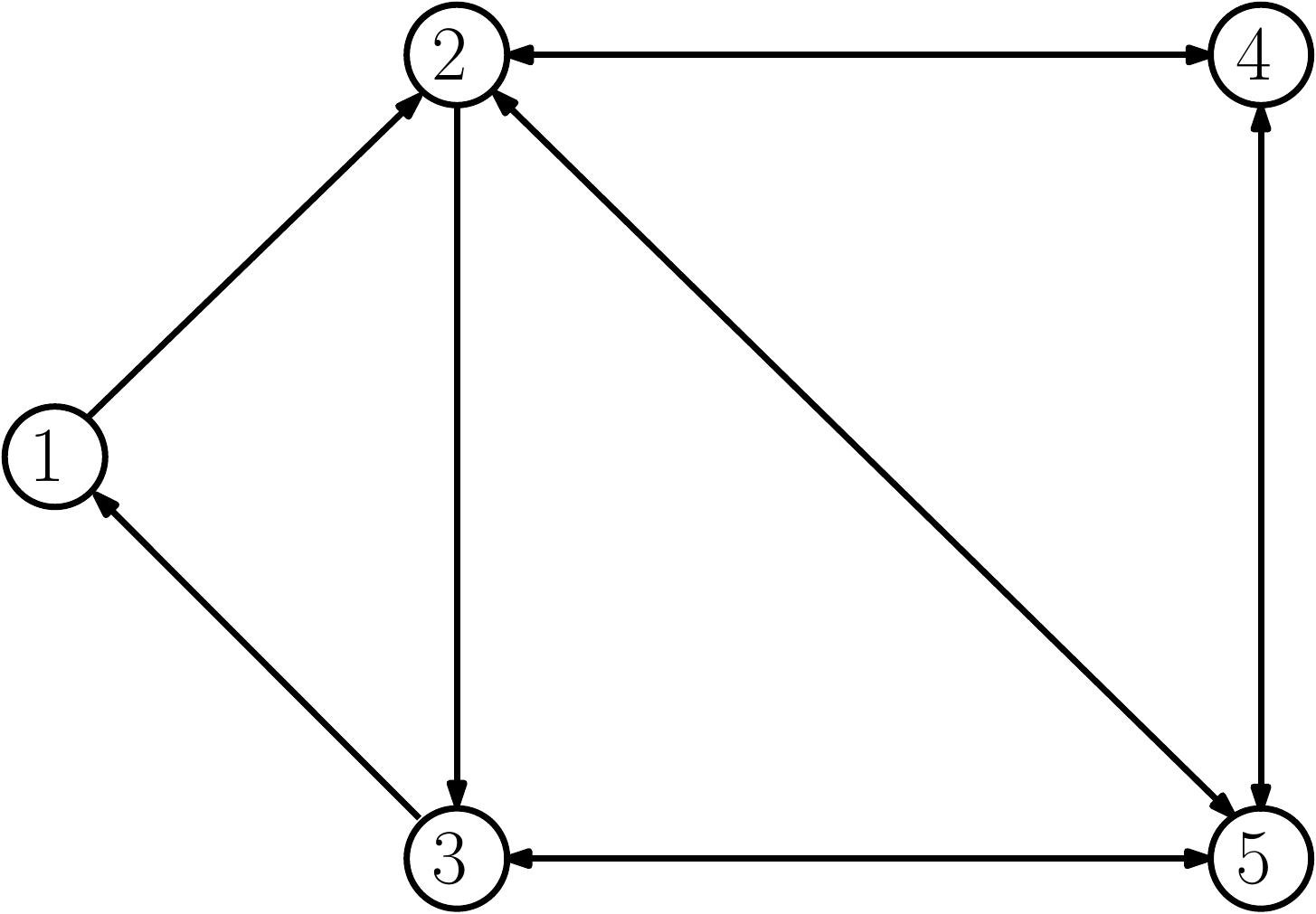} 
\end{center}
\caption{\label{fig:graph-5}}\end{figure}

The following example illustrates the notion of indecomposability introduced in Definition \ref{def:decomposition} above in a simple case. 
\begin{example}\label{ex:graph-5}
Consider the network $\mc G=(\mc V,\mc E,W)$ displayed in Figure \ref{fig:graph-5}, with set of nodes $\mc V=\{1,2,3,4,5\}$ and 
weight matrix 
$$W=\left(\ba{ccccc}
0&1&0&0&0\\
0&0&1&1&1\\
1&0&0&0&1\\
0&1&0&0&1\\
0&1&1&1&0
\ea\right)\,.$$
The out-degree vector is then $w=W\1=(1,3,2,2,3)\,.$   

(i) First, we verify that $\mc G$ is $0$-indecomposable. To see it, first notice that if, for some $s$ in $\{-,+\}$, either $\mc V_s=\{i\}$ for some $i$ in $\mc V$ or $\mc V_s=\{i,j\}$ for some $i\ne j$ in $\mc V$ such that $(i,j)\notin\mc E$, then we have $w_i^s=0<1\le w^{-s}_i$, so that \eqref{eq:robustly_indec} is satisfied. 
It is then sufficient to consider binary partitions as in \eqref{eq:binary-partition} where,  $\mc V_s=\{i,j\}$ for some $s$ in $\{-,+\}$ and  $i\ne j$ in $\mc V$ such that both $(i,j)$ and $(j,i)$ belong to $\mc E$. This leaves us with four possibilities, corresponding to the four undirected links in the network: (a) for $\mc V_s=\{2,4\}$, we have that $5\in\mc V_{-s}$ and $w_5^s=2>1=w^{-s}_5$ so that \eqref{eq:robustly_indec} is satisfied;  for $\mc V_s=\{2,5\}$, we have that $4\in\mc V_{-s}$ and $w_4^s=2>0=w^{-s}_4$ so that \eqref{eq:robustly_indec} is satisfied; (c) for  both $\mc V_s=\{3,5\}$ and $\mc V_s=\{4,5\}$, we have that $2\in\mc V_{-s}$ and $w_2^s=2>1=w^{-s}_2$ so that \eqref{eq:robustly_indec} is satisfied. 

(ii) Second, we verify that $\mc G$ is not $\delta^1$-indecomposable. Indeed, let us fix $\mc V_-=\{2,4,5\}$ and $\mc V_+=\{1,3\}$. Then, 
$w_1^++1=1=w_1^-$, $w_2^-=2>1=w_2^+$, $w_3^+=1=w_3^-$, $w_4^-=2>0=w_4^+$, and $w_5^-=2>1=w_5^+$, so that \eqref{eq:robustly_indec} is violated by every $i$ in $\mc V_s$ and $s$ in $\{-,+\}$. 

(iii) Now, we show that   $\mc G$ is not $(h^-,h^+)$-indecomposable for $h^-=(0,-1,0,0,0)$ and $h^-=(0,0,0,0,1)$. Indeed, let us fix $\mc V_-=\{1,2,3\}$ and $\mc V_+=\{4,5\}$. Then, we have that 
$w_1^{-} -h_1^{-}=1>0=w_1^{+}$, $w_2^{-} -h_2^{-}=1+1=2=w_2^{+}$, $w_3^{-} -h_3^{-}=1= w_3^{+}$, $w_4^{+} + h_4^{+}=1=w_4^{-}$, and also $w_5^{+} + h_5^{+}=1+1=2=w_5^{-}$, so that \eqref{eq:robustly_indec} is violated by every $i$ in $\mc V_s$ and $s$ in $\{-,+\}$. Notice that, in contrast, it can be verified that $\mc G$ is both $h^-$-indecomposable and $h^+$-indecomposable in this case.

(iv) Finally, let $h^-=0$ and $h^+=(0,2,0,0,2)$. We now show that   $\mc G$ is $(h^-,h^+)$-indecomposable. To verify that, first notice that if $\mc V_s=\{i\}$ for some  $s$ in $\{-,+\}$ and $i$ in $\mc V$, then $w_i^s+sh_i^s=sh_i^s< w_i=w^{-s}_i$, so that \eqref{eq:robustly_indec} is satisfied. Similarly, if $1\in\mc V_s$ and $2\in\mc V_{-s}$, then $w_1^{s} + sh_1^{s}=0<1=w_1^{-s}$. 
Moreover, if $\{2,5\}\subseteq\mc V_s$ for some   $s$ in $\{-,+\}$, then  \eqref{eq:robustly_indec} is satisfied by every $i$ in $\mc V_{-s}\cap\{1,4\}$. 
This leaves us with four possibilities: (a) $\mc V_-=\{1,2,3\}$ and $\mc V_{+}=\{4,5\}$; (b)  $\mc V_-=\{1,2,4\}$ and $\mc V_{+}=\{3,5\}$; (c) $\mc V_-=\{4,5\}$ and $\mc V_{+}=\{1,2,3\}$; (d) $\mc V_-=\{3,5\}$ and $\mc V_{+}=\{1,2,4\}$. In both cases (a) and (b), we have $w_2^{-} -h_2^{-}=1<2=w_2^{+}$ so that \eqref{eq:robustly_indec} is satisfied, whereas in both cases (c) and (d), we have $w_5^{-} -h_5^{-}=1<2=w_5^{+}$ so that \eqref{eq:robustly_indec} is satisfied. 
Therefore, $\mc G$ is $(h^-,h^+)$-indecomposable.
%
%
\end{example}


Our first result stated below shows that $(h^-,h^+)$-indecomposability of the network $\mc G$ is indeed a necessary condition for robust convergence to a consensus configuration of ATV-LTD when the external field $h(t)$ is a arbitrary signal whose range is the hyper-rectangle \eqref{H}. 

\begin{proposition}\label{prop:indecomposable-necessary}
Let $\mc G=(\mc V,\mc E,W)$ be a network. For two vectors $h^-$ and $h^+$ in $\mc R^{\mc V}$ such that  $h^-\le h^+$, let $\mc H$ be as in \eqref{H}.
If $\mc G$ is not $(h^-,h^+)$-indecomposable, then there exist $h^*$ in $\mc H$ and a coexistent configuration $x^*$ in $\mc X\setminus\{\pm\1\}$ such that $x^*$ is an absorbing configuration for the ATV-LTD on $\mc G$ with constant external field $h(t)=h^*$.  
\end{proposition}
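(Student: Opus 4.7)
The plan is to reverse-engineer a witness of non-indecomposability into an absorbing coexistent configuration under an extremal constant external field. Assuming that $\mc G$ fails to be $(h^-,h^+)$-indecomposable, Definition \ref{def:decomposition} directly yields a nontrivial partition $\mc V=\mc V^+\cup\mc V^-$ (with both parts nonempty) such that for every $s\in\{-,+\}$ and every $i\in\mc V^s$,
\[
w_i^s + s\, h_i^s \;\geq\; w_i^{-s},
\]
where $w_i^s=w_i^{\mc V^s}$ denotes the restricted out-degree relative to the partition.

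From this partition I would define the candidate configuration $x^*\in\mc X\setminus\{\pm\1\}$ componentwise by $x_i^*=+1$ for $i\in\mc V^+$ and $x_i^*=-1$ for $i\in\mc V^-$; since both sides of the partition are nonempty, $x^*$ is genuinely coexistent. Then I would pick the extremal field $h^*\in\mc H$ that reinforces each node's current sign, namely
\[
h_i^* = h_i^+ \text{ for } i\in\mc V^+, \qquad h_i^* = h_i^- \text{ for } i\in\mc V^-,
\]
so that by construction $h^-\le h^*\le h^+$, i.e.\ $h^*\in\mc H$.

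To verify that $x^*$ is absorbing for the ATV-LTD with constant field $h(t)\equiv h^*$, by \eqref{dynamics} it suffices to check that, for every $i\in\mc V$, the sign of $\sum_j W_{ij}x_j^*+h_i^*$ does not contradict $x_i^*$. Rewriting $\sum_j W_{ij}x_j^* = w_i^+(x^*)-w_i^-(x^*) = w_i^{\mc V^+}-w_i^{\mc V^-}$, for $i\in\mc V^+$ the input becomes $(w_i^+ + h_i^+) - w_i^- \geq 0$, while for $i\in\mc V^-$ it becomes $w_i^+ - (w_i^- - h_i^-) \leq 0$; both follow immediately from the displayed failure-of-indecomposability inequality. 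Hence an agent whose Poisson clock ticks in configuration $x^*$ never flips, and $x^*$ is absorbing.

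I do not foresee any real obstacle: the argument is a direct unpacking of Definition \ref{def:decomposition}, and the only subtle point is aligning the non-strict inequality $\geq$ obtained by negating \eqref{eq:robustly_indec} with the tie-breaking clause of \eqref{dynamics} that keeps the current state whenever $\sum_j W_{ij}X_j+h_i=0$. None of the super-modularity or robust-improvement-path machinery of Theorem \ref{theo:robustreachable} is needed for this necessity direction; the heavy lifting is reserved for the sufficiency part of Theorem \ref{theo:time-varying}.
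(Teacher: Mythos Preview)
Your proposal is correct and matches the paper's own proof essentially step for step: the paper also negates Definition~\ref{def:decomposition} to obtain a nontrivial partition, defines the coexistent configuration $x^*$ by the partition signs, chooses the extremal field $h^*_i=h_i^{x^*_i}$, and verifies $x_i^*(\sum_j W_{ij}x_j^*+h_i^*)\ge 0$ so that the tie-breaking clause of \eqref{dynamics} makes $x^*$ absorbing. There is nothing to add.
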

\begin{proof} See Appendix \ref{sec:proof-prop-indecomposable-necessary}.
\end{proof}

\begin{example}\label{ex:graph-5-bis}
Consider the network $\mc G$ shown in Figure \ref{fig:graph-5}  and let $h^-=(0,-1,0,0,0)$ and $h^-=(0,0,0,0,1)$. As verified in Example \ref{ex:graph-5},  $\mc G$ is not $(h^-,h^+)$-indecomposable, so that Proposition \ref{prop:indecomposable-necessary} implies the existence of a vector $h^*$ such that $h^-\le h^*\le h^+$ and of a coexistent configuration $x^*$ in $\mc X\setminus\{\pm\1\}$ such that $x^*$ is a fixed point for the LTD on $\mc G$ with constant external field $h(t)=h^*$.  
Specifically, in this case we can take $h^*=(0,-1,0,0,1)$ and $x^*=(-1,-1,-1,+1,+1)$.  
\end{example}

While Proposition \ref{prop:indecomposable-necessary} states that, if the network $\mc G$ is not $(h^-,h^+)$-indecomposable, robust convergence to a consensus configuration is not ensured for the ATV-LTD on $\mc G$,  the following result establishes necessary and sufficient conditions for robust convergence to a consensus configuration when the network $\mc G$ is $(h^-,h^+)$-indecomposable. 

\begin{theorem}\label{theo:time-varying} 
Let $\mc G=(\mc V,\mc E,W)$ be a network. For two vectors $h^-$ and $h^+$ in $\mc R^{\mc V}$ such that  $h^-\le h^+$, let $\mc H$ be as in \eqref{H}.
If $\mc G$ is $(h^-,h^+)$-indecomposable, then the ATV-LTD on $\mc G$ with any external field $h(t)\in\mc H$ for $t\ge0$ is such that, for every initial configuration $X(0)$ in $\mc X$,  with probability $1$ there exists $t^*\ge0$ such that
\begin{enumerate}
\item[(i)]
\be\label{reachable}X(t^*)\in\{\pm\1\};\ee 
\end{enumerate}
and 
\begin{enumerate}
\item[(ii)] if $w>-h^-$ and $w>h^+$, then 
\be\label{eq:caseii}X(t)\in\{\pm\1\}\,,\qquad\forall t\geq t^*\,;\ee
\item[(iii)] 
if 
$w>-ah^{-a}$ and  $w\ngeq a h^{-a}$ 
for some $a=\pm1$, then
\be\label{eq:casei}X(t)=a\1\,,\qquad\forall t\geq t^*\,.\ee

\end{enumerate}
Moreover: 
\begin{enumerate}
\item[(iv)] if 
$w\ngeq - h^{-}$ and $w\ngeq h^{+}$, 
then there exists a signal $h(t)\in\mc H$ for  $t\ge0$ such that, for every initial configuration $X(0)$ in $\mc X$, with probability $1$ the ATV-LTD on $\mc G$  with external field $h(t)$ visits both consensus configurations $+\1$ and $-\1$ infinitely often. 
\end{enumerate}
\end{theorem}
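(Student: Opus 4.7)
The proof plan is to establish the four claims in sequence, built on three ingredients: Theorem \ref{theo:robustreachable} on robust reachability of consensus via best-response paths; an analysis of the absorbing properties of the consensus configurations $\pm\1$; and a standard asynchronous Poisson-clock argument. The latter says that any prescribed finite sequence of single-agent updates consistent with strict best-response under the current field is realized by the Markov process $X(t)$ with positive probability, bounded below uniformly over all sequences of bounded length within any fixed positive time interval. Combined with the strong Markov property and the finiteness of $\mc X$, any configuration reachable by a robust best-response path is hit almost surely in finite time.

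For part (i), I would invoke Theorem \ref{theo:robustreachable}: $(h^-, h^+)$-indecomposability guarantees, from every $x \in \mc X$, a robust best-response path to some consensus in $\{\pm\1\}$ of length at most $|\mc V|$, valid under every $h \in \mc H$. For part (ii), the conditions $w > -h^-$ and $w > h^+$ render both consensus absorbing under any $h \in \mc H$: at $+\1$, a tick of agent $i$ keeps her at $+1$ because $w_i + h_i(t) \geq w_i + h_i^- > 0$, and symmetrically at $-\1$ using $h_i(t) \leq h_i^+ < w_i$. For part (iii) with $a = +1$, the same argument makes $+\1$ absorbing, while $w \ngeq h^-$ yields at least one agent $i_+$ with $h_{i_+}^- > w_{i_+}$, whose best response under every $h \in \mc H$ is permanently $+1$ (since $\sum_j W_{i_+ j} X_j + h_{i_+}(t) \geq -w_{i_+} + h_{i_+}^- > 0$); hence $-\1$ is not absorbing and is vacated on $i_+$'s next tick. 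Since $+\1$ is the unique absorbing consensus, an application of Theorem \ref{theo:robustreachable} giving a robust best-response path to $+\1$ from any state (seeded by the stubbornness of $i_+$), together with the clock argument, yields $X(t) = +\1$ for all $t$ past the first hitting time, a.s. The case $a = -1$ is symmetric.

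For part (iv), I would construct $h(t)$ as an alternation between $h^+$ on phases $[\tau_{2k}, \tau_{2k+1})$ and $h^-$ on phases $[\tau_{2k+1}, \tau_{2k+2})$. On an $h^+$-phase, the network is $h^+$-indecomposable (a consequence of $(h^-, h^+)$-indecomposability, since the $(h^-,h^+)$-condition at each partition is stronger than the $h^+$-condition because $h^- \leq h^+$), and $w \ngeq h^+$ yields a stubborn $+1$ agent under $h^+$; the argument of part (iii) applied to the singleton range $\{h^+\}$ shows that the probability of hitting $+\1$ within a phase of length $L$ is bounded below by some $p(L) \to 1$ as $L \to \infty$, uniformly over starting configurations. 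Symmetric reasoning handles the $h^-$-phases and $-\1$. Choosing the phase lengths $\tau_{k+1} - \tau_k$ to grow fast enough that $\sum_k (1 - p(\tau_{k+1} - \tau_k)) < \infty$ and applying Borel--Cantelli via the strong Markov property at each $\tau_k$ gives a.s.\ visits to $+\1$ in all but finitely many $h^+$-phases, and likewise to $-\1$, establishing infinitely many visits to both.

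The main obstacle I expect is the robust reachability of a \emph{specific} consensus (not merely some consensus) in parts (iii) and (iv). This is where Theorem \ref{theo:robustreachable} does the heavy lifting; its proof relies on supermodularity of the underlying network coordination game and a careful ordering of best-response updates that exploits monotonicity, guaranteeing that commitments to $+1$ propagate from the stubborn seeds without ever being undone. A secondary subtlety in (iv) is ensuring that the $h^+$-phase analysis is not blocked by agents stubborn for $-1$ under $h^+$: one must verify that $(h^-, h^+)$-indecomposability together with the (iv)-hypotheses still permits a robust path to $+\1$ under $h^+$, possibly via a refined sub-scheduling within each phase rather than a strictly constant field.
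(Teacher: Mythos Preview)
Your proposal is correct and follows essentially the same route as the paper: parts (i)--(iii) are handled exactly as the authors do, by combining Theorem~\ref{theo:robustreachable} with the uniform lower bound on the probability that the jump chain follows a prescribed robust $I$-path of length at most $n$, and with the observation that the hypotheses of (ii) and (iii) make the relevant consensus configurations strict equilibria (hence absorbing) for every $h\in\mc H$.

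Two small remarks. First, in (iv) the paper uses a \emph{fixed} period~$\tau$ rather than growing phases: since $\mc G$ is $h^+$-indecomposable and $w\ngeq h^+$, Corollary~\ref{coro:stability}(iii) gives $\mc X^*_{h^+}=\{+\1\}$ globally BR-stable, so the probability $\alpha_+$ of hitting $+\1$ within any interval of length $\tau$ under constant field $h^+$ is strictly positive uniformly over starting states; similarly $\alpha_->0$. A periodic square-wave field then yields, in each period, probability at least $\alpha_+\alpha_->0$ of visiting both consensuses, and the conclusion follows without Borel--Cantelli. Your growing-phase scheme works too, but is unnecessary. Second, your ``secondary subtlety'' is not an obstacle: $h^+$-indecomposability (which follows from $(h^-,h^+)$-indecomposability via Theorem~\ref{theorem:robust-indecomposability}) together with $\mc S_{+1}(h^+)\ne\emptyset$ forces $\mc S_{-1}(h^+)=\emptyset$, since otherwise the game with field $h^+$ would be frustrated and hence polarizable, contradicting Proposition~\ref{prop:polarization}. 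So no refined sub-scheduling is needed---the constant field $h^+$ suffices on each phase.
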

\begin{proof} See Section \ref{sec:coordination-dynamics}. \end{proof}

Theorem  \ref{theo:time-varying} (i) is to be interpreted as a converse to Proposition \ref{prop:indecomposable-necessary}, as it states that, if the network $\mc G$ is $(h^-,h^+)$-indecomposable, then the set $\{\pm\1\}$ is reached in finite time with probability $1$ from every initial configuration $X(0)$. Theorem  \ref{theo:time-varying} (i) and Proposition \ref{prop:indecomposable-necessary} together then guarantee that $(h^-,h^+)$-indecomposability of the network $\mc G$ is a necessary and sufficient condition for global robust reachability of the set of consensus configurations for the ATV-LTD. 

Moreover, because of the form of the update rule \eqref{dynamics} the condition $w>-h^-$ implies that the consensus configuration $+\1$ is absorbing. Symmetrically, the condition $w>h^-$ implies that the consensus configuration $-\1$ is absorbing. Hence, point (ii) of Theorem  \ref{theo:time-varying} states that if, both $w>-h^-$ and $w>h^-$, then 
for every initial configuration $X(0)$ in $\mc X$, the ATV-LTD gets absorbed in finite time in one of these two consensus configurations. In fact, the probability with which $X(t)$ is absorbed in the consensus configuration $+\1$ rather than in $-\1$ will depend on the initial configuration $X(0)$, the network $\mc G$, and the particular external field $h(t)$. 

On the other hand, let us consider the case $a=+1$ in point (iii) of Theorem \ref{theo:time-varying} (the case $a=-1$ being completely symmetrical). Then, as discussed above, the condition $w>-h^{-}$ ensures that the consensus configuration $x^*=+\1$ is absorbing. On the other hand, the condition $w\ngeq h^{-}$ is equivalent to the existence of some agent $i$ in $\mc V$ such that $w_i<h^-_i$. From the form of the update rule \eqref{dynamics}, we then deduce that any such agent $i$ will switch her action to $+1$ the first time she gets activated and stick to $X_i(t)=+1$ ever after. Point (iii) of Theorem  \ref{theo:time-varying} then states that with probability $1$ all other agents will follow such agent $i$ and switch to state $+1$ in a cascade until the absorbing consensus configuration $+\1$ is reached in finite time. 

Finally, in contrast to points (i)-(iii), point (iv) of Theorem \ref{theo:time-varying} does not describe a robust behavior. Rather, the two conditions $w\ngeq - h^{-}$ and $w\ngeq h^{+}$ ensure that there exist two (not necessarily distinct) agents $i$ and $j$ in $\mc V$ such that $w_i<-h_i^-$ and $w_j<h^+_j$, respectively. This implies that agent $i$ will always switch to $-1$ the first time she gets activated under the external field $h^-$, while agent $j$ will always switch to $+1$ the first time she gets activated under the external field $h^+$. Point (ii) of  Theorem  \ref{theo:time-varying} then states that, by manoeuvring the external field $h(t)$ within its range $\mc H$, one is then able to make the system oscillate infinitely often between the two consensus configurations.

%

The proof of Theorem \ref{theo:time-varying} is one of the main contributions of this paper. The key technical challenges are twofold: on the one hand, we are considering LTD with time-varying external field $h(t)$ and seeking robustness results with respect to $h(t)$, on the other hand, considering weighted directed networks prevents one from appealing to potential games arguments. 
We will address these challenging by first focusing on the special case of LTD with constant external field and studying it from a super-modular game theory perspective in Section \ref{sec:coordination-games}. We will then introduce and characterize the key notion of robust improvement path in Section \ref{sec:robust-coordination} and finally apply it to prove Theorem  \ref{theo:time-varying} in Section \ref{sec:coordination-dynamics}.

\begin{figure}
\begin{center}
\includegraphics[width=7cm]{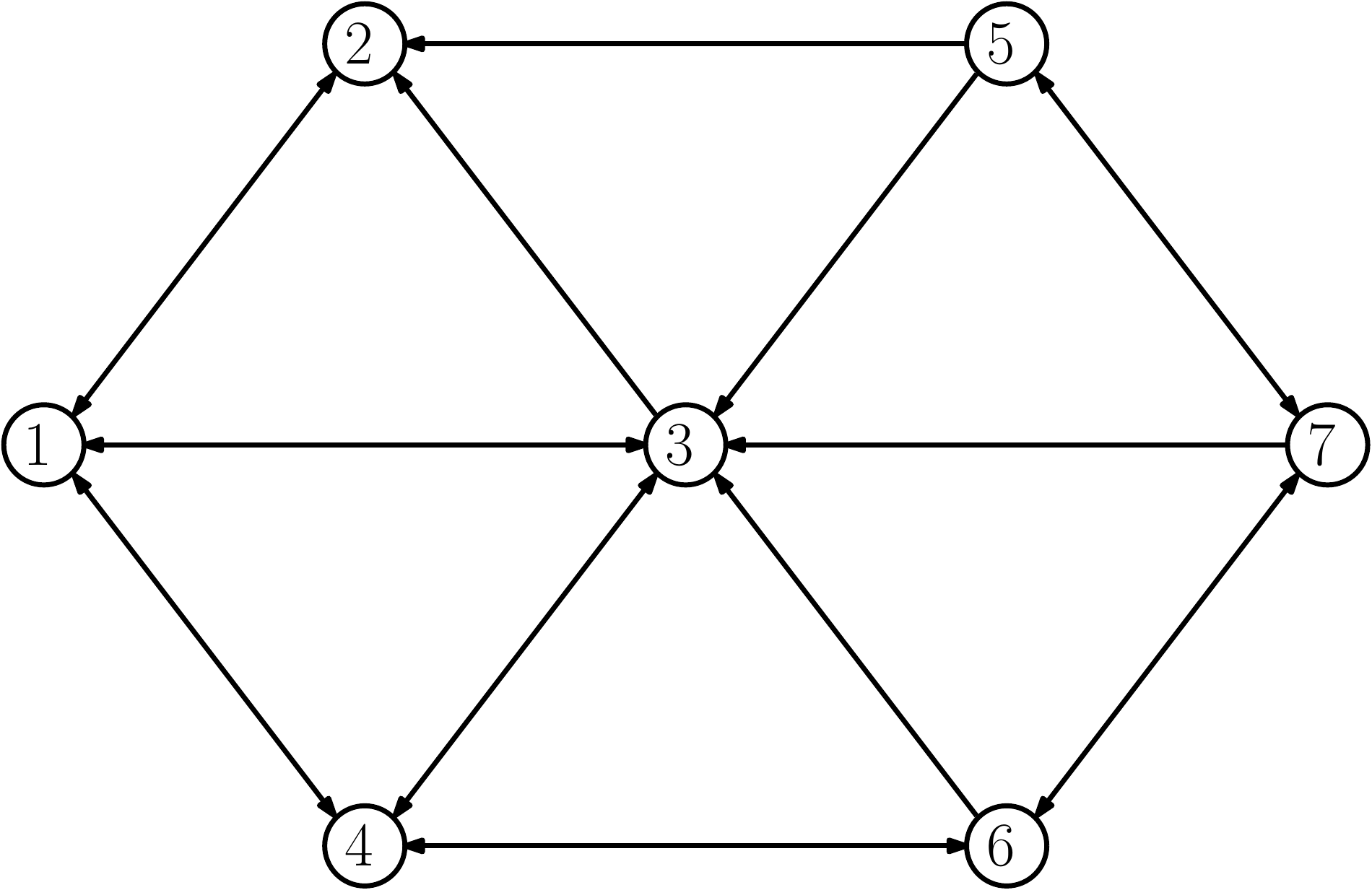}
\end{center}
\caption{\label{fig:graph-7}}
\end{figure}

\begin{figure}
	\begin{center}
		\includegraphics[width=9cm]{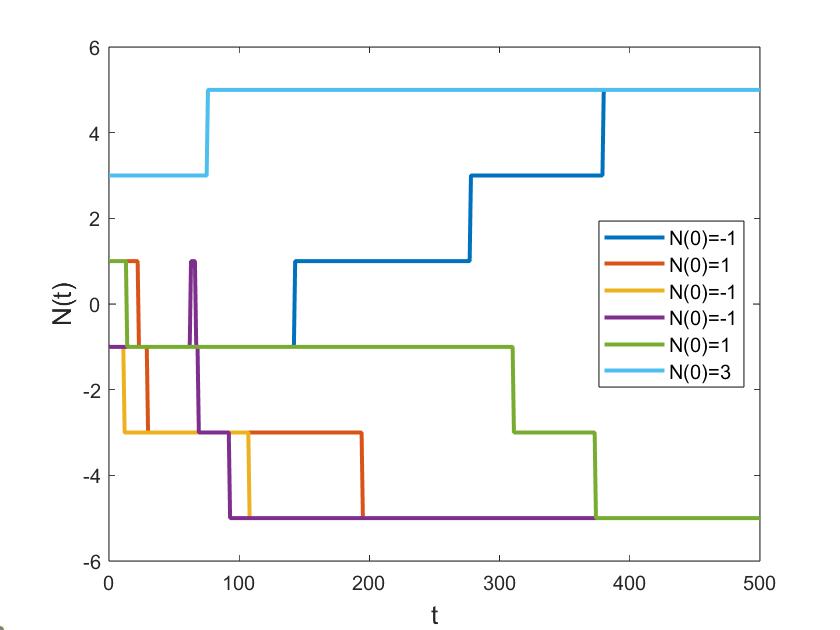}
		\includegraphics[width=9cm]{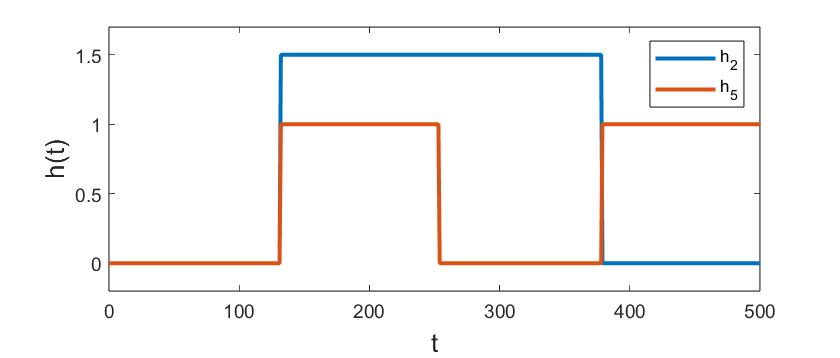}
	\end{center}
	\caption{In the upper panel, dynamics of $N(t)=\sum_iX_i(t)$ for the network in Figure \ref{fig:graph-5} and $h(t)=(0,h_2(t), 0,0,h_5(t))$ with $h_2$ and $h_5$ as in the lower panel. ATV-LTD get absorbed in consensus configurations for different initial conditions (see Example \ref{ex:graph-5-ter}). \label{fig:ltd_ex3}}
\end{figure}
\begin{example}\label{ex:graph-5-ter}
	Consider once again the network $\mc G$ shown in in Figure \ref{fig:graph-5} 
	%
	and  let $h^+=(0,2,0,0,2)$. As is shown in Example \ref{ex:graph-5}, $\mc G$ is $(0,h^+)$-indecomposable. Hence, since $w>-h^-$ and $w>h^+$, Theorem \ref{theo:time-varying} (ii) implies that the ATV-LTD on $\mc G$ with any external field $0\le h(t)\le h^+$ for $t\ge0$ gets absorbed with probability $1$ in finite time in a consensus configuration.
	
	In Figure \ref{fig:ltd_ex3}, we simulated the dynamics of $N(t)=\sum_iX_i(t)$ for different initial conditions when the external field is $h(t)=(0,h_2(t), 0,0,h_5(t))$ with $h_2$ and $h_5$ as in Figure \ref{fig:ltd_ex3} (notice that $0\le h(t)\le h^+$). ATV-LTD dynamics get absorbed in consensus configurations. 
\end{example}

\begin{figure}
	\begin{center}
		\includegraphics[width=9cm]{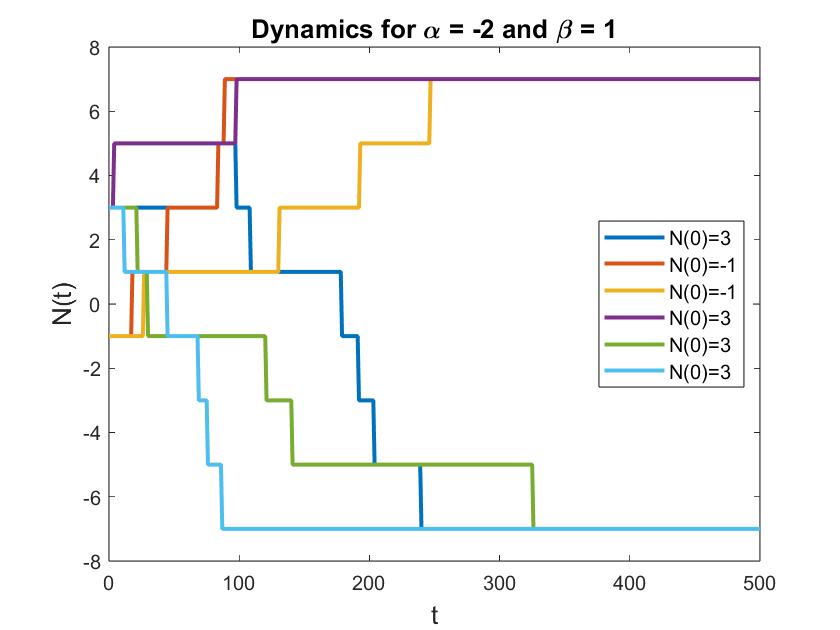}
		\includegraphics[width=9cm]{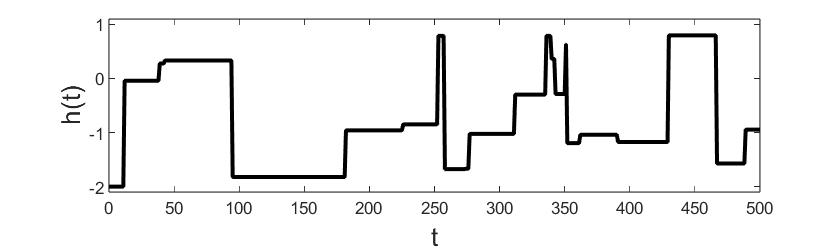}
	\end{center}
	\caption{In the upper panel, dynamics of $N(t)=\sum_iX_i(t)$ for the network in Figure \ref{fig:graph-7} and $h(t)=(h_1(t),0, 0,0,0)$ with $h_1$ as in the lower panel. ATV-LTD get absorbed in consensus configurations for different initial conditions (see Example \ref{ex:graph-7} with $\alpha=-2$ and $\beta=1$).  \label{fig:ex4_case1}}
\end{figure}
\begin{figure}
	\begin{center}
		\includegraphics[width=9cm]{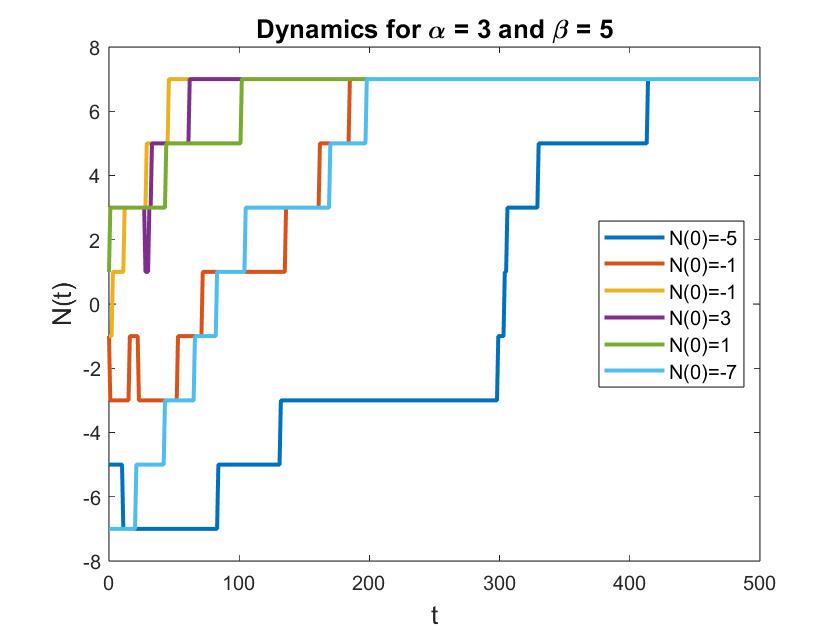}
		\includegraphics[width=9cm]{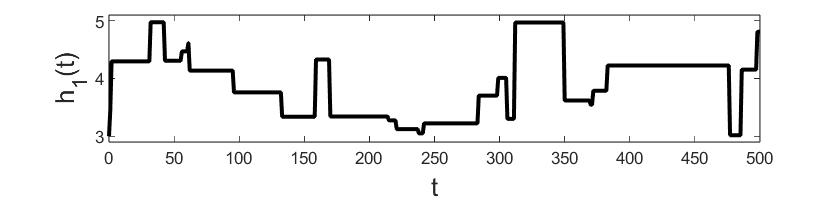}
	\end{center}
	\caption{In the upper panel, dynamics of $N(t)=\sum_iX_i(t)$ for the network in Figure \ref{fig:graph-7} and $h(t)=(h_1(t),0, 0,0,0)$ with $h_1$ as in the lower panel. ATV-LTD get absorbed in the consensus configuration $x^*=+\1$ for different initial conditions (see Example \ref{ex:graph-7} with  $\alpha=3$ and $\beta=5$). \label{fig:ex4_case2}}
\end{figure}		
\begin{figure}
	\begin{center}		
		\includegraphics[width=9cm]{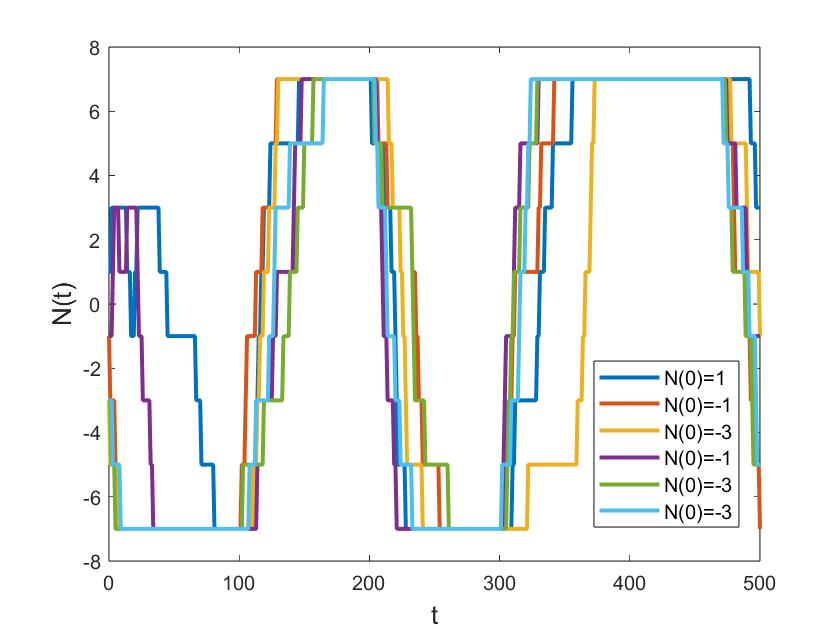}
		\includegraphics[width=9cm]{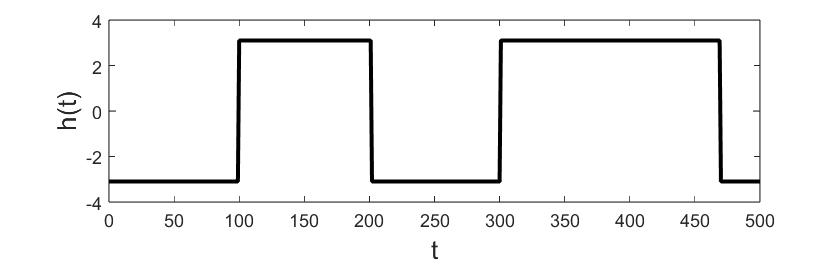}
	\end{center}
	\caption{In the upper panel, dynamics of $N(t)=\sum_iX_i(t)$ for the network in Figure \ref{fig:graph-7} and $h(t)=(h_1(t),0, 0,0,0)$ with $h_1$ as in the lower panel. ATV-LTD fluctuate for different initial conditions (see Example \ref{ex:graph-7} with $\alpha=-3.1$ and $\beta=3.1$). \label{fig:ex4_case3}}
\end{figure}

\begin{example}\label{ex:graph-7}
Consider the network $\mc G$ shown in in Figure \ref{fig:graph-7}, 
with node set $\mc V=\{1,\ldots,7\}$ and out-degree vector $w=(3,1,3,3,3,3,3)$. 
Let $$h^-=(\alpha,0,\ldots,0)\,,\qquad h^+=(\beta,0,\ldots,0)\,,$$  for $\alpha\le\beta$ in $\R$.
Then, for every nontrivial binary partition as in \eqref{eq:binary-partition}, let $s$ in $\{-,+\}$ be such that $1\in\mc V_{-s}$. If $2\in\mc V_{s}$, then $w_2^{s} + sh_2^{s}=0<1= w_2^{-s}$, so that \eqref{eq:robustly_indec} is satisfied. On the other hand, for $j=2,\ldots,6$, if $\{1,\ldots,j\}\subseteq\mc V_{-s}$ and $j+1\in\mc V_s$, then $w_{j+1}^{s} + sh_{j+1}^{s}\le 1<2\le w_{j+1}^{-s}$, so that \eqref{eq:robustly_indec} is satisfied. This proves that $\mc G$ is $(h^-,h^+)$-indecomposable for every $\alpha\le\beta$. Now consider three different cases.  

If $-3<\alpha\le\beta<3$, so that $w>-h^-$ and $w>h^+$, then Theorem \ref{theo:time-varying} (ii) ensures that, with probability $1$,  $X(t)$ gets absorbed in finite time in one of the two consensus configurations (see Figure \ref{fig:ex4_case1}). 

On the other hand, if $3<\alpha\le\beta$, so that $w>-h^-$ and $w\ngeq h^-$, then Theorem \ref{theo:time-varying} (iii) ensures that, with probability $1$, $X(t)$ gets absorbed in finite time in the consensus configuration $x^*=+\1$ (see Figure \ref{fig:ex4_case2}). 

Finally, if $\alpha<-3$ and $\beta>3$, so that $w\ngeq - h^{-}$ and $w\ngeq h^{+}$ then Theorem \ref{theo:time-varying} (iv) ensures that there exists a time-varying signal $h^-\le h(t)\le h^+$  such that, with probability $1$,  $X(t)$ fluctuates forever between the two consensus configurations visiting both of them infinitely many times (see Figure \ref{fig:ex4_case3}). 

 \end{example}

\section{Network coordination games}\label{sec:coordination-games}
In this section, we study the special case of LTD with constant external field. 
As we shall see, in this special case, the LTD can be reinterpreted as the asynchronous best response dynamics of a network coordination game and its fixed points correspond to the (pure strategy Nash) equilibria of such game. We shall provide a full characterization of such equilibria and of the asymptotic behavior of the corresponding LTD in terms of graph-theoretic properties of the network.

\subsection{Game-theoretic notions}\label{sec:game-theoretic}
We start by introducing some general game-theoretic notions. We consider strategic form games with binary action set $\mc A=\{\pm1\}$ played by rational agents $i$ in $\mc V$ whose aim is to maximize their own utility function $u_i:\mathcal{X}\to \mathbb{R}\,.$  As customary, for a strategy profile $x$ in $\mc X$ and a player $i$ in $\mc V$, we let $x_{-i}$ in  $\mathcal{X}_{-i}= \mc A^{\mathcal{V}\setminus\{i\}} $ stand for the strategy profile of all players except for player $i$. 
We shall then use the common abuse of notation 
$u_i(x)=u_i(x_i,x_{-i})$
for the utility perceived by player $i$ in configuration $x$. 
The \textit{best response} correspondence for a player $i$  in $\mc V$ is  then
$$\mc B_i(x_{-i})= \argmax_{x_i \in\mc A}u_i(x_i,x_{-i}) \,.$$
An action $a$ in $\mc A$ is \emph{dominant} (\emph{strictly dominant}) for a player $i$ if 
$a\in\mc B_i(x_{-i})$ ($\mc B_i(x_{-i})=\{a\}$) for every $ x_{-i}$ in $\mc X_{-i}$. 
A player $i$ having a strictly dominant action $a$ is referred to as an \emph{$a$-stubborn} agent. 
A (pure strategy Nash) \emph{equilibrium}  is a configuration $x^*$ in $\mc X$ such that 
$$x^*_i\in\mc B_i(x^*_{-i})\,,\qquad\forall i\in\mc V\,.$$
The set of equilibria of a game is denoted by $\mc X^*$. 
An equilibrium $x^*$ is \emph{strict} if 
$\mc B_i(x^*_{-i})=\{x^*_i\}$ for every $i$ in $\mc V$. 

We shall consider 
revision protocols whereby only one player modifies her action at a time and such modification never decreases the player's utility. A number of related concepts are reported below.
\begin{definition}
For $l\ge0$, a length-$l$ \emph{admissible path} from $x$ in $\mc X$ to $y$ in $\mc X$ is an $(l+1)$-tuple 
$(x^{(0)},x^{(1)},\ldots x^{(l)})$ in $\mc X^{l+1}$ 
of strategy profiles such that 
$x^{(0)}=x$, $x^{(l)}=y$, 
and, for $k=1,2,\ldots, l$, there exists a player $i_k$ in $\mc V$ such that  
\be\label{x-i_k=y-i_k}x^{(k)}_{-i_k}=x^{(k-1)}_{-i_k}\,, \qquad x_{i_k}^{(k)}\ne x_{i_k}^{(k-1)}\,.\ee
A length-$l$ admissible path as above is: 
\begin{itemize} 
\item \emph{monotone}  if $x^{(0)}\lneq x^{(1)}\lneq \cdots\lneq x^{(l)}$;
\item \emph{anti-monotone}  if $x^{(0)}\gneq x^{(1)}\gneq \cdots\gneq  x^{(l)}$.
%
%
\item an \emph{improvement path} (\emph{I-path}) if
\be\label{eq:improvement}u_{i_k}(x^{(k)})>u_{i_k}(x^{(k-1)})\,,\qquad k=1,2,\ldots, l\,;\ee
\item a \emph{best response path} (\emph{BR-path}) if
\be\label{eq:wk-improvement}u_{i_k}(x^{(k)})\geq u_{i_k}(x^{(k-1)})\,,\qquad k=1,2,\ldots, l\,.\ee
%
%
\end{itemize}
\end{definition}

In other words, a length-$l$ admissible path is an $(l+1)$-tuple of strategy profiles such that two consecutive strategy profiles differ just for the action of a single active player. Notice that an admissible path from a configuration $x$ to another configuration $y$ is completely determined by the sequence of active players $(i_1,i_2,\dots , i_l)$. In monotone paths, players can modify their actions from $-1$ to $+1$ only, conversely only modifications from $+1$ to $-1$ are allowed in anti-monotone paths.
Notice that, by convention, every I-path is also a BR-path (but not vice versa) and that the singleton $(x)$ is to be considered as a length-$0$ I-path from $x$ to $x$ with an empty sequence of active players. 
%
%
%

\begin{definition}
For $\alpha$ in $\{\text{I},\text{BR}\}$, a subset of strategy profiles $\mc Y\subseteq\mc X$ is:
\begin{itemize}
\item $\alpha$-\emph{reachable}  from strategy $x$ in $\mc X$ if there exists an $\alpha$-path from $x$ to some strategy profile $y$ in $\mc Y$;  
\item \emph{globally} $\alpha$-reachable if it is $\alpha$-reachable from every configuration $x$ in $\mc X$; 
\item $\alpha$-\emph{invariant} if there are no $\alpha$-paths from any $y$ in $\mc Y$ to any $z$ in $\mc X\setminus\mc Y$. 
\item \emph{globally $\alpha$-stable} if it is globally $\alpha$-reachable and $\alpha$-invariant. 
\end{itemize}
\end{definition}

To indicate that a configuration $y$ is I-reachable (BR-reachable) from a configuration $x$, we will use the notation 
$x\to y$ ($x\rightharpoonup y$).
If $y$ is reachable from $x$ by an  monotone or anti-monotone I-path (BR-path), we will use the notation
$$x\uparrow y,\quad x\downarrow y,\quad x\upharpoonleft y,\quad x\downharpoonright y\,,$$
respectively. Observe that all these relations are reflexive and transitive.

%
%

\subsection{Binary super-modular games}\label{sec:supermodular}
Super-modular games are an important class of games and have received a considerable amount of attention in the literature \cite{Topkins:1979,Milgrom.ROberts:1990,Vives:1990,Topkins:1998}. Below we recall the definition of super-modular games and report a number of their fundamental properties. 
We take advantage of the binary action setting considered here, while at the same time we recall that the theory of super-modular games can be developed in more general lattice action sets.

Let us first introduce some notation. For two vectors $x$ and $y$ in $\R^{\mc I}$, the entry-wise supremum (or least upper bound) $x\vee y$ in $\R^{\mc I}$ and infimum (or greatest lower bound)  $x\wedge y$ in $\R^{\mc I}$ have entries, respectively,  
$$(x\vee y)_i=\max\{x_i, y_i\},\quad (x\wedge y)_i=\min\{x_i, y_i\}\,,\quad \forall i\in\mc I\,.$$
We use the notation $\vee L$ and $\wedge L$ to indicate the entry-wise supremum and infimum, respectively, of a subset $L\subseteq \R^{\mc I}$. 
{ 
	A partially ordered set $(L, \geq)$ 
	is a \textit{lattice} if for all pairs $x, y \in L$ the least upper bound $x\vee y$ and the greatest lower bound $x\wedge y$ exist in $L$.  A lattice is \textit{complete} if in addition, for any subset $T$ of $L$, the elements $\vee T$ and $\wedge T$ lie in $L$. In this case,  $\vee L$ and $\wedge L$ are referred to as the \emph{greatest} and the \emph{least} elements of $L$, respectively.  
}


\begin{definition}
A binary action game is \emph{super-modular} if for every player $i$ in $\mc V$,
\be\label{increasing-differences}
u_i(1,x_{-i})-u_i(-1,x_{-i}) \geq u_i(1,y_{-i})-u_i(-1,y_{-i})\,,\ee
whenever $x_{-i}\ge y_{-i}$. 
\end{definition}

In other words, a game is super-modular if the marginal utility $u_i(1,x_{-i})-u_i(-1,x_{-i})$  of every player $i$ is a monotone nondecreasing function of the strategy profile  $x_{-i}$ of the other players. This is sometimes referred to as the \emph{increasing difference property}. 

Let us introduce the four maps \be\label{4maps}f^+, f^-,g^{+},g^{-}:\mc X\to\mc X\ee on the configuration space, respectively defined by 
$$\begin{array}{ll}f^+(x)=\,\bigvee\{y\in\mc X\,|\, x\uparrow y\}\,,\ &
f^-(x) =\,\bigwedge\{y\in\mc X\,|\,x\downarrow y\}\,,\\[5pt]
g^{+}(x)=\,\bigvee\{y\in\mc X\,|\, x\upharpoonleft y\}\,,\ &
g^{-}(x)=\,\bigwedge\{y\in\mc X\,|\, x\downharpoonright y\}\,,
\end{array}$$
for every $x$ in $\mc X$. 
Thanks to Lemma \ref{lemma:suprmodular-paths} in Appendix \ref{sec:proof-supermodular}, we have that $f^+(x)$ and $g^+(x)$ ($f^{-}(x)$ and $g^{-}(x)$) represent the maximal (minimal) configurations that are BR-reachable and, respectively, I-reachable from $x$ by a monotone (anti-monotone) path. Notice that all sets in the righthand side of the above are nonempty as they contain the strategy profile $x$. 

In the following statement, we gather a number of properties relating equilibria of super-modular games with the behavior of the maps \eqref{4maps} that will be used in the rest of the paper. 
\begin{proposition}\label{prop:stability-supermodular} 
Consider a finite super-modular game with binary action sets. Then, for every configuration $x$ in $\mc X$,
\begin{enumerate}
\item[(i)]  $f^-(f^{+}(x))\in\mc X^*$ and $f^+(f^{-}(x))\in\mc X^*$ are, respectively, the greatest and least equilibria that are I-reachable from $x$;
\item[(ii)] $f^-(g^{+}(x))\in\mc X^*$ and $f^+(g^{-}(x))\in\mc X^*$ are, respectively, the greatest and least  equilibria BR-reachable from $x$; 
\item[(iii)]if $x\in\mc X^*$, then $g^{+}(x),g^{-}(x)\in\mc X^*$ are, respectively, the greatest and least equilibria BR-reachable from $x$. 
\end{enumerate}
Moreover,
\begin{enumerate}
\item[(iv)] the set of equilibria $\mc X^*$ is a nonempty complete lattice and it is globally I-stable;
\item[(v)] the two configurations \be\label{extremalNash}\begin{array}{l} \ul x^*=f^+(g^{-}(-\1))=f^+(-\1)\,,\\[5pt]\ov x^*=f^-(g^{+}(+\1))=f^-(+\1)\,,\end{array}\ee
are, respectively, the least and greatest equilibria;
\item[(vi)] for two equilibria $x$ and $y$ in $\mc X^*$, $f^{+}(x\vee y)$ and $f^{-}(x\wedge y)$ are, respectively, the least and the greatest equilibria that are, respectively, above and below both $x$ and $y$;
\item[(vii)] if $\ul x^*= \ov x^*=x^*$, then  $\mc X^*=\{x^*\}$ is globally BR-stable.
\end{enumerate}
\end{proposition}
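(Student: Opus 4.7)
The plan is to deduce (i)--(iii), (vi), and (vii) from the structural results (iv) and (v), both of which rest on a Tarski-style fixed-point argument adapted to super-modular games.

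I would begin with (v). Introduce the componentwise max best-response operator $T^+:\mc X\to\mc X$ with $(T^+(y))_i=\max\mc B_i(y_{-i})$; by the increasing-differences property, $T^+$ is monotone nondecreasing and its fixed points coincide with $\mc X^*$. Since $T^+(+\1)\le+\1$, iterating yields a nonincreasing sequence $(T^+)^k(+\1)$ that stabilizes on the greatest fixed point $\ov x^*$. Each switch in the iteration is from $+1$ to $-1$ at a component where $-1$ is the \emph{unique} best response, hence a strict I-improvement; sequentializing gives an anti-monotone I-path from $+\1$ to $\ov x^*$, so $\ov x^*\ge f^-(+\1)$. Conversely, Lemma \ref{lemma:suprmodular-paths} gives that $f^-(+\1)$ is itself reached by an anti-monotone I-descent from $+\1$, and by super-modularity this endpoint is an equilibrium (increasing differences rules out any profitable upward deviation at the minimum), so $f^-(+\1)\le\ov x^*$. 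Combining yields $\ov x^*=f^-(+\1)=f^-(g^+(+\1))$ (the last equality since $g^+(+\1)=+\1$); the dual identities for $\ul x^*$ follow symmetrically. Then (iv) --- nonemptiness, lattice structure, and global I-stability --- follows from (v) together with (vi) and the fact, established in (i), that $f^-(f^+(x))\in\mc X^*$ is I-reachable from every $x$.

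For (i) (and (ii) which is analogous), I would first verify that $f^-(f^+(x))\in\mc X^*$: no monotone I-move is possible from $f^+(x)$ by maximality, and at $f^-(f^+(x))$ no upward deviation is possible either. Indeed, a player $i$ with value $-1$ who strictly preferred $+1$ would, in the case $(f^+(x))_i=+1$, contradict the strict downward improvement of $i$ at the intermediate step of the anti-monotone descent (apply increasing differences at $f^-(f^+(x))$ and the larger preceding configuration), and in the case $(f^+(x))_i=-1$, contradict the maximality of $f^+(x)$ (apply increasing differences at $f^+(x)\ge f^-(f^+(x))$). Maximality of $f^-(f^+(x))$ among I-reachable equilibria is the core argument: for any equilibrium $z$ with $x\to z$, I would show first that $x\vee z\le f^+(x)$ via a monotone rerouting of the given I-path through $x\vee z$, then that the descent from $f^+(x)$ to $f^-(f^+(x))$ never switches any player $i$ with $z_i=+1$. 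The latter uses increasing differences: such a switch at an intermediate configuration $y\ge z$ requires $\phi_i(y_{-i})<0$, whence $\phi_i(z_{-i})\le\phi_i(y_{-i})<0$, contradicting $z_i=+1\in\mc B_i(z_{-i})$. Hence every intermediate configuration remains $\ge z$, so $f^-(f^+(x))\ge z$. Statement (ii) follows by the same template with $g^+$ in place of $f^+$.

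Parts (iii), (vi), and (vii) are short consequences. For (iii), if $x\in\mc X^*$, any monotone BR-move from $x$ is a tied switch of an indifferent player; super-modularity propagates this tie so that $g^+(x)\in\mc X^*$, and the maximality claim follows from (ii) via $f^-(g^+(x))=g^+(x)$ since the anti-monotone I-descent from any equilibrium is trivial. For (vi), applying (i) to the starting configuration $x\vee y$ produces the least equilibrium above both $x$ and $y$. Finally, (vii) is a corollary of (v): $\ul x^*=\ov x^*=x^*$ forces $\max\mc B_i(x^*_{-i})=\min\mc B_i(x^*_{-i})=x^*_i$ for every $i$, so $x^*$ is strict; hence every BR-move from $x^*$ strictly decreases some player's utility and is thus impossible, making $\{x^*\}$ BR-invariant, while global BR-reachability from any $x$ follows from (ii). The main obstacle is the monotone rerouting step used in the maximality part of (i) --- establishing $x\vee z\le f^+(x)$ for every I-reachable equilibrium $z$ --- which is the only step that does not reduce to routine fixed-point calculus and requires a careful combination of the increasing-differences property with the equilibrium structure of $z$.
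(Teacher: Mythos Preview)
Your organization differs from the paper's: the paper proves (i) first via a closure argument (the set $\mc X^+=\{x:f^+(x)=x\}$ is shown to be invariant under anti-monotone I-paths, whence $f^-(f^+(x))\in\mc X^+\cap\mc X^-=\mc X^*$), and then deduces (iv)--(vii) from (i)--(iii) together with monotonicity of $f^{\pm},g^{\pm}$. Your Tarski-style route through the operator $T^+$ is a legitimate alternative that makes the existence of the extremal equilibria explicit up front; the maximality argument you give for (i) (the descent from $f^+(x)$ never crosses below an I-reachable equilibrium $z$) is a hands-on version of what the paper obtains in one line from monotonicity of $f^-$ and Lemma~\ref{lemma:suprmodular-paths}(v).

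There are, however, two concrete defects in your treatment of (v). First, the assertion that the fixed points of $T^+$ \emph{coincide} with $\mc X^*$ is false: an equilibrium $z$ with an indifferent player $i$ playing $z_i=-1$ has $(T^+(z))_i=+1\ne z_i$. Equilibria are only post-fixed points of $T^+$; this does not invalidate the conclusion that the Tarski iterate from $+\1$ is the greatest equilibrium, but the statement must be corrected. Second, and more importantly, the two inequalities you display are the \emph{same}: ``$+\1\downarrow\ov x^*$, hence $\ov x^*\ge f^-(+\1)$'' and the ``converse'' ``$f^-(+\1)\in\mc X^*$ and $\ov x^*$ is greatest, hence $f^-(+\1)\le\ov x^*$'' both say $\ov x^*\ge f^-(+\1)$. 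You never establish $\ov x^*\le f^-(+\1)$. The missing step is the one the paper uses: since $\ov x^*$ is an equilibrium, $f^-(\ov x^*)=\ov x^*$, and monotonicity of $f^-$ with $\ov x^*\le+\1$ gives $\ov x^*=f^-(\ov x^*)\le f^-(+\1)$. Your sketch of (vi) has a similar gap: applying (i) to $x\vee y$ yields the extremal equilibria \emph{I-reachable from} $x\vee y$, not the least equilibrium \emph{above} it; you need first $f^-(x\vee y)=x\vee y$ (no strict downward move from $x\vee y$, which follows from $x,y\in\mc X^*$ and increasing differences), and only then does $f^+(x\vee y)=f^+(f^-(x\vee y))$ land in $\mc X^*$.
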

\begin{proof}
See Appendix \ref{sec:proof-supermodular}. 
\end{proof}
\begin{remark}
In general the set of equilibria $\mc X^*$ is not BR-invariant for a super-modular game: e.g., a 2-player game with utilities $u_1(x_1,x_2)=x_1x_2$ and $u_2(x_1,x_2)=0$ for $x_1$ and $x_2$ in $\{\pm1\}$ is super-modular but its equilibrium set $\mc X^*=\{\pm\1\}$ is not BR-invariant since $\mc B_2(+1)=\mc B_2(-1)=\{\pm1\}$. 
\end{remark}

\subsection{Analysis of network coordination games}
We now introduce binary network coordination games on general directed weighted networks. 
\begin{definition}\label{def:coordination} For a network $\mc G=(\mc V,\mc E,W)$ and a vector $h$ in $\R^{\mc V}$, the \emph{coordination game} on $\mc G$  with \emph{external field} $h$ is the game with player set $\mc V$, whereby every player $i$ in $\mc V$ has  binary action set $\mc A=\{\pm1\}$ and utility function 
\be\label{utility:coordination} u_i(x)=x_i\sum_{j\in\mc V}W_{ij}x_j+h_ix_i\,.\ee 
\end{definition}


To emphasize the dependance on the external field $h$, we shall use the notation $\nash$ for the set of equilibria of the network coordination game with external field $h$, and write $x\stackrel{h}{\to}y$ if a configuration $y$ is I-reachable from $x$ in this game.   Observe that the utility function \eqref{utility:coordination} may be rewritten as 
\be\label{eq:utility2}u_i(x)=x_i\left(h_i+w_i^+(x)-w_i^-(x)\right)\,,\ee
where $w_i^+(x)$ and $w_i^-(x)$ are defined as in \eqref{wi+wi-}. 
Equation \eqref{eq:utility2} highlights the decomposition of the utility of a player $i$ in $\mc V$ as the sum of a term $h_ix_i$ that depends only on her own action and rewards its alignment  with the corresponding entry of the external field, plus a term $x_i(w_i^+(x)-w_i^-(x))$ that is the difference between the aggregate weight of links pointing to out-neighbors with the same action and the aggregate weight of links pointing to out-neighbors with the opposite action.

The following statement gathers a few simple results on coordination games.
\begin{lemma}\label{lemma:coordination}
Consider the coordination game on a network $\mc G=(\mc V,\mc E,W)$ with external field $h$ in $\R^{\mc V}$. Then, 
\begin{enumerate}
\item[(i)] for every strategy profile $x$ in $\mc X$ and every player $i$ in $\mc V$,
\be\label{eq:Nash2}x_i\in\mc B_i(x_{-i})\qquad\Longleftrightarrow\qquad u_i(x)\ge0\,;\ee
\item[(ii)]  the best response correspondence has the threshold form 
\be\label{eq:BR}\mc B_i(x_{-i})=\left\{\ba{lcl}\{+1\}&\se&w_i^+(x)>r_iw_i\\[3pt]\{\pm1\}&\se&w_i^+(x)=r_iw_i\\[3pt]\{-1\}&\se&w_i^+(x)<r_iw_i\,,\ea\right.\ee
for every $i$ in $\mc V$, where 
$r_i=\frac12-\frac{h_i}{2w_i}$
is the threshold of player $i$; 
\item[(iii)] action $a=\pm1$ is a strictly dominant strategy for a player $i$ in $\mc V$ if and only if 
$ah_i>w_i$; 
\item[(iv)] the game is super-modular; 
\item[(v)]  $\nash$ is a nonempty complete lattice and it is globally I-stable. 
\end{enumerate}
\end{lemma}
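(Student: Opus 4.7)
The plan is to derive (i)--(iii) by direct algebraic manipulation of the utility \eqref{eq:utility2}, verify (iv) by computing the marginal utility, and obtain (v) as an immediate corollary of Proposition \ref{prop:stability-supermodular}. None of the steps requires a genuinely new idea; the role of the lemma is to translate the best-response conditions for the coordination game into the threshold language used by the LTD update rule.

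First, I would observe that \eqref{utility:coordination} is linear in $x_i$, so $u_i(-x_i, x_{-i}) = -u_i(x_i, x_{-i})$. Consequently $x_i \in \mc B_i(x_{-i})$ is equivalent to $u_i(x) \geq -u_i(x)$, i.e., $u_i(x) \geq 0$, which gives (i). For (ii), using $w_i^+(x) + w_i^-(x) = w_i$ I would rewrite $u_i(+1, x_{-i}) = h_i + 2w_i^+(x) - w_i$; applying the three cases of (i) to both $x_i = +1$ and $x_i = -1$ then yields the threshold form \eqref{eq:BR} with $r_i = \tfrac{1}{2} - \tfrac{h_i}{2w_i}$. For (iii), strict dominance of $a$ amounts to $u_i(a, x_{-i}) > 0$ for every $x_{-i}$; the minimum of this quantity over $x_{-i}$ is attained when every out-neighbor plays $-a$, in which case $u_i(a, x_{-i}) = a h_i - w_i$, so the sharp condition is $a h_i > w_i$.

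For (iv), a direct computation gives
\[
u_i(+1, x_{-i}) - u_i(-1, x_{-i}) = 2 h_i + 2 w_i^+(x) - 2 w_i^-(x) = 2 h_i - w_i + 2 w_i^+(x),
\]
and since $w_i^+(x) = \sum_{j : x_j = +1} W_{ij}$ is monotone nondecreasing in $x_{-i}$ (because each $W_{ij} \geq 0$), the increasing-difference property \eqref{increasing-differences} holds. Part (v) is then immediate from Proposition \ref{prop:stability-supermodular}(iv) applied to this super-modular game. No step presents a real obstacle; the only subtlety is bookkeeping in (i), namely ensuring that $\mc B_i(x_{-i}) = \{\pm 1\}$ exactly when $u_i(x) = 0$, i.e., when $w_i^+(x) = r_i w_i$, so that the threshold cases in (ii) line up with the tie-breaking convention in the LTD update rule \eqref{dynamics-bis}.
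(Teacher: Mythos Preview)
Your proposal is correct and follows essentially the same route as the paper: direct computation from \eqref{eq:utility2} for (i)--(iii), the marginal-utility calculation for (iv), and Proposition~\ref{prop:stability-supermodular}(iv) for (v). One minor arithmetic slip: in your display for (iv) the second equality should read $2h_i - 2w_i + 4w_i^+(x)$ rather than $2h_i - w_i + 2w_i^+(x)$, but this does not affect the monotonicity argument.
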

\begin{proof}
See Appendix \ref{sec:proof-lemma-coordination}. 
%
%
%
\end{proof}

The next result explicitly connects the notions of equilibrium and I-reachability for coordination games to those of absorbing configurations and finite time reachability for the LTD with constant external field.

\begin{lemma}\label{lemma:coordination=LTM}
Let $X(t)$ be the LTD on a network $\mc G=(\mc V,\mc E,W)$ with constant external field $h$ in $\R^{\mc V}$ and initial configuration $X(0)= x^{(0)}$ in $\mc X$. Then, for every $x^*$ in $\mc X$: 
\begin{enumerate}
\item[(i)]  $ x^{(0)}\stackrel{h}{\to} x^*$ if and only if with probability $1$ there exists $t\ge0$ such that $X(t)=x^*$; 
\item[(ii)]  $ x^{(0)}\in\nash$ if and only if $ x^{(0)}$ is an absorbing configuration for $X(t)$, i.e., with probability $1$, $X(t)= x^{(0)}$ for every $t\ge0$. 
\end{enumerate}
\end{lemma}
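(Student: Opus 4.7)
The core idea is that with a constant external field the LTD update rule \eqref{dynamics-bis} is precisely the asynchronous best-response dynamics of the coordination game of Definition \ref{def:coordination}, with a tie-breaking rule that retains the current state. Indeed, the time-invariant threshold $r_i=\tfrac12-h_i/(2w_i)$ from \eqref{eq:ri} coincides with the best-response threshold in Lemma \ref{lemma:coordination}(ii), so juxtaposing \eqref{dynamics-bis} with \eqref{eq:BR} shows that at each clock tick the LTD relocates the active player to her unique best response when one exists and keeps her current action otherwise.

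For part (ii), if $x^{(0)}\in\nash$ then every player's current action lies in her best response set, and neither of the two LTD cases (strict best response or indifference) ever flips a coordinate, so $X(t)=x^{(0)}$ almost surely. Conversely, if $x^{(0)}\notin\nash$ then some player $i$ satisfies $x^{(0)}_i\notin\mc B_i(x^{(0)}_{-i})$, and \eqref{dynamics-bis} forces her to flip the first time her Poisson clock rings, so $x^{(0)}$ is not absorbing.

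For part (i), the pivotal observation is a parity identity: since $W$ has zero diagonal, $w_i^\pm(x)$ depends only on $x_{-i}$, whence $u_i(x_{-i},-x_i)=-u_i(x_{-i},x_i)$. Combined with Lemma \ref{lemma:coordination}(i), this yields that a single-coordinate flip of player $i$ is a strict utility improvement if and only if the post-flip action is her unique best response at $x_{-i}$, which is exactly the LTD's flip condition. Thus I-path steps and actual LTD state changes are in one-to-one correspondence. Given an I-path $(x^{(0)},\ldots,x^{(l)}=x^*)$ with activating players $(i_1,\ldots,i_l)$, any realization of the Poisson clocks in which these agents tick in this order (with no intervening ticks from other agents) reproduces the path deterministically, yielding $X(t)=x^*$ at the $l$-th tick; such realizations occur on an event of positive probability in any given time window. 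Conversely, from an LTD sample trajectory visiting $x^*$ at some finite time, recording the instants at which the state actually changes together with the corresponding acting players produces an I-path from $x^{(0)}$ to $x^*$ by the same equivalence.

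The main obstacle is conceptual rather than computational: correctly matching the strict-improvement condition of I-paths with the strict-best-response flip condition of the LTD via the parity identity above. Once this bridge is in place, both parts follow by direct inspection of \eqref{dynamics-bis} and application of Lemma \ref{lemma:coordination}, without having to invoke super-modularity or the machinery of Proposition \ref{prop:stability-supermodular}.
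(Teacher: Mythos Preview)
Your proof is correct and follows the same approach as the paper: both identify single LTD transitions with strict best-response flips (equivalently, I-path steps) via Lemma~\ref{lemma:coordination}(i) and the threshold form of the update rule, and then read off the equivalence in both directions. Your parity identity $u_i(-x_i,x_{-i})=-u_i(x_i,x_{-i})$ makes explicit the bridge between ``strict improvement'' and ``unique best response'' that the paper's proof leaves implicit, and your probabilistic construction (ordering the Poisson ticks) spells out what the paper summarizes as ``$X(t)$ can undergo a finite sequence of transitions.''
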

\begin{proof} See Appendix \ref{sec:proof-lemma-coordination=LTM}. 
\end{proof}

Lemmas \ref{lemma:coordination} and \ref{lemma:coordination=LTM} imply the following result, ensuring that, for every initial configuration $X(0)$, with probability $1$ the LTD $X(t)$ on a network $\mc G$ with constant external field $h$ always get absorbed if finite time in the set of equilibria $\mc N_h$ of the coordination game on $\mc G$ with external field $h$. 

\begin{theorem}\label{theo:coordination=LTM}
Let $X(t)$ be the LTD on a network $\mc G=(\mc V,\mc E,W)$ with constant external field $h$ in $\R^{\mc V}$. Then, 
with probability $1$, there exists $t_0\ge0$ such that 
$$X(t)\in\nash\,,\qquad \forall t\ge t_0\,.$$  
\end{theorem}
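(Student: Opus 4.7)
The plan is to obtain Theorem \ref{theo:coordination=LTM} as an essentially immediate consequence of the three preceding results: Lemma \ref{lemma:coordination}(v), Lemma \ref{lemma:coordination=LTM}(i), and Lemma \ref{lemma:coordination=LTM}(ii). The strategy is to translate the deterministic game-theoretic statement ``an equilibrium is I-reachable from every configuration'' into the probabilistic statement ``the LTD enters $\nash$ and stays there almost surely''.

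Concretely, I would first fix an arbitrary initial configuration $x^{(0)} \in \mc X$ and argue conditionally on $X(0) = x^{(0)}$. By Lemma \ref{lemma:coordination}(v), the equilibrium set $\nash$ is globally I-stable, and in particular globally I-reachable, so there exists some $x^* \in \nash$ with $x^{(0)} \stackrel{h}{\to} x^*$. Applying Lemma \ref{lemma:coordination=LTM}(i) to this I-path then yields that, with probability one, there is a random time $t_0 \ge 0$ at which $X(t_0) = x^*$. Since $x^* \in \nash$, Lemma \ref{lemma:coordination=LTM}(ii) identifies $x^*$ as an absorbing configuration for the LTD, so $X(t) = x^* \in \nash$ for every $t \ge t_0$, almost surely. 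The general case of an arbitrary (possibly random) initial configuration follows by conditioning on $X(0)$ and averaging over its distribution on the finite set $\mc X$.

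The main conceptual obstacle has already been dispatched by the earlier material, so the step I expect to be ``hard'' here is essentially none: Lemma \ref{lemma:coordination}(v) uses super-modularity through Proposition \ref{prop:stability-supermodular} and the monotone paths induced by the maps $f^{\pm}, g^{\pm}$ in \eqref{4maps} to produce, from every configuration, a monotone I-path leading into $\nash$, while Lemma \ref{lemma:coordination=LTM} aligns the LTD update rule \eqref{dynamics-bis} with the asynchronous best-response dynamics of the coordination game under a constant external field. The one subtlety worth being explicit about when writing the proof is that the particular equilibrium $x^*$ eventually reached depends jointly on $x^{(0)}$ and on the random activation schedule of the Poisson clocks; this is precisely why the conclusion is phrased as eventual membership in $\nash$ rather than convergence to a prescribed configuration.
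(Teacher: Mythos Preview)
Your proposal is correct and follows essentially the same route as the paper, which simply states that the claim follows directly from Lemma~\ref{lemma:coordination}(v) and Lemma~\ref{lemma:coordination=LTM}. You have merely unpacked these two citations into the explicit chain ``global I-reachability of $\nash$ $\Rightarrow$ hitting some $x^*\in\nash$ a.s.\ $\Rightarrow$ absorption by Lemma~\ref{lemma:coordination=LTM}(ii)''.
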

\begin{proof} The claim follows directly from Lemma \ref{lemma:coordination} (v) and Lemma \ref{lemma:coordination=LTM}. \end{proof}

Theorem \ref{theo:coordination=LTM} implies that 
convergence with probability $1$ to consensus configurations is guaranteed for LTD  on a network $\mc G$ with constant external field $h$  is equivalent to the non-existence of coexistent equilibria of the corresponding coordination game. To study this, it is first convenient to introduce the notation
$$\nashconsensus=\nash\cap\{\pm\1\}\,,\qquad\nashcoexistent=\nash\setminus\{\pm\1\}\,,$$ for the subsets of consensus and, respectively, co-existent equilibria of the coordination game on $\mc G$ with external field $h$.  
We then introduce the following notion. 

\begin{definition}\label{def1} A coordination game on a network $\mc G$ with external field $h$ is
\begin{itemize}
\item \emph{regular} if $|\nashconsensus|=2$;
\item \emph{biased} if $|\nashconsensus|=1$; more precisely, for $a=\pm1$, the network coordination game is $a$-biased if $\nashconsensus=\{a\}$;
\item \emph{frustrated} if $|\nashconsensus|=0$. 
\end{itemize}
\end{definition}
Notice that, in a frustrated network coordination game, neither of the consensus configurations is an equilibrium. Since $\nash$ is never empty, 
a frustrated network coordination game always admits at least one co-existent equilibrium. In contrast, when the game is not frustrated (either regular or biased) at least one consensus configuration is an equilibrium. Besides consensus,  there might or might not exist co-existent equilibria. 
To distinguish these cases, the following further classification proves useful.
\begin{definition}\label{def2}  A coordination game on a network $\mc G$ with external field $h$ is
\begin{itemize}
\item \emph{unpolarizable} if $ \nashcoexistent=\emptyset$;
\item \emph{polarizable} if $ \nashcoexistent\neq\emptyset$.
\end{itemize}
\end{definition}
The set of equilibria of an unpolarizable regular network coordination game contains both consensus configurations $\pm\1$ and no other configurations, whereas unpolarizable biased network coordination games admit a single (consensus) equilibrium: $x^*=+\1$ in the positively biased case and $x^*=-\1$ in the negatively biased one. On the other hand, polarizable network coordination games always admit co-existent equilibria possibly in addition to consensus ones (if they are regular or biased). In the sequel, we shall identify necessary and sufficient conditions for a network coordination game to be regular, biased, or frustrated, and for it to be polarizable or unpolarizable.

%
%
%
%
%
%
%
We now introduce two sets that will play a key role in our analysis:
\be\label{stubborn}\mc S_a(h)=\{i\in \mc V\,|\, ah_i>w_i\}\,,\qquad a=\pm1\,.\ee
By Lemma \ref{lemma:coordination} (iii), $\mc S_a(h)$ coincides with the set of players for which $a$ is a strictly dominant action, i.e., the set of $a$-stubborn agents. The following simple result relates the presence of $a$-stubborn players with that of the consensus equilibrium $-a\1$. 

\begin{lemma}\label{lemma:consensus}
Consider a coordination game on a network $\mc G=(\mc V,\mc E,W)$ with external field $h$ and let $\ul x^*(h)$ and $\ov x^*(h)$ be its least and greatest equilibria, respectively. 
Then, 
\begin{enumerate}
\item[(i)] $\ul x^*(h)=-\1\;\Leftrightarrow\; \mc S_{+1}(h)=\emptyset\;\Leftrightarrow\;h\le w$ 
\item[(ii)] $\ov x^*(h)=+\1\;\Leftrightarrow\; \mc S_{-1}(h)=\emptyset\;\Leftrightarrow\;h\ge -w$ 
\end{enumerate}
\end{lemma}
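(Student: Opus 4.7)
The plan is to dispatch each of (i) and (ii) by first verifying the trivial equivalence with the stubbornness condition, which is immediate from the definition \eqref{stubborn} of $\mc S_{+1}(h)$ and $\mc S_{-1}(h)$: indeed $\mc S_{+1}(h)=\emptyset$ iff $h_i\le w_i$ for every $i\in\mc V$, i.e., $h\le w$, and similarly for $\mc S_{-1}(h)=\emptyset$ iff $h\ge-w$. By symmetry (replacing $h$ with $-h$ and swapping the roles of $+1$ and $-1$), it then suffices to establish the first equivalence in (i).

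The strategy for the equivalence $\ul x^*(h)=-\1\Leftrightarrow h\le w$ is to characterize when the consensus configuration $-\1$ itself is an equilibrium using Lemma \ref{lemma:coordination}(i), and then invoke the lattice structure guaranteed by Lemma \ref{lemma:coordination}(v). For the forward computation, I would simply evaluate \eqref{eq:utility2} at $x=-\1$, which yields $w_i^+(-\1)=0$ and $w_i^-(-\1)=w_i$, so that $u_i(-\1)=-h_i+w_i$. By \eqref{eq:Nash2}, $-\1\in\nash$ if and only if $u_i(-\1)\ge0$ for all $i\in\mc V$, i.e., $h_i\le w_i$ for all $i$, which is exactly $h\le w$.

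Having established that $-\1\in\nash\Leftrightarrow h\le w$, the conclusion follows from a lattice argument. Since $-\1$ is the global minimum of the configuration space $\mc X$ and $\ul x^*(h)$ is (by Proposition \ref{prop:stability-supermodular}(v), which applies thanks to the super-modularity in Lemma \ref{lemma:coordination}(iv)) the least element of $\nash$, we have that $\ul x^*(h)=-\1$ iff $-\1\in\nash$. Combining with the previous paragraph closes (i), and the symmetric argument applied to the greatest equilibrium $\ov x^*(h)=f^-(+\1)$ at the top of the lattice $\mc X$ closes (ii).

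There is no real technical obstacle here: the argument is a direct unfolding of definitions plus one lattice observation. The only subtlety worth flagging in the write-up is the logical reduction that turns four equivalences into one short computation—namely making explicit the symmetry that lets us skip the analogous computation $u_i(+\1)=h_i+w_i\ge0\Leftrightarrow h\ge-w$ in (ii)—and being careful to cite Lemma \ref{lemma:coordination}(i) rather than appealing directly to the threshold form \eqref{eq:BR}, since at the consensus $-\1$ the best response may in principle be either strict or indifferent depending on whether $h_i<w_i$ or $h_i=w_i$.
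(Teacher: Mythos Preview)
Your proposal is correct and essentially matches the paper's own proof: both arguments reduce to checking that $-\1\in\nash$ iff $h\le w$ (the paper via the threshold form in Lemma~\ref{lemma:coordination}(ii), you via the utility criterion in Lemma~\ref{lemma:coordination}(i)), and then observe that the least equilibrium equals $-\1$ precisely when $-\1$ is an equilibrium. The only cosmetic difference is that the paper phrases the forward implication through stubborn agents, while you go straight to the utility computation; the content is the same.
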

\begin{proof} See Appendix \ref{sec:proof-lemma-consensus}
\end{proof}

In fact, Lemma \ref{lemma:consensus} directly implies the following result.

\begin{proposition} \label{prop:regular-biased-frustrated} 
Let $\mc G$ be a network with out-degree vector $w$. 
Then, the coordination game on $\mc G$ with external field $h$ is:
\begin{enumerate}
\item[(i)] regular if and only if  
\be\label{Nostubborn}-w\le h\le w\,;\ee
\item[(ii)] $a$-biased for $a=\pm1$ if and only if 
\be\label{1stubborn}  w\ge-ah\,,\qquad w\ngeq ah
 \,;\ee 
\item[(iii)] \emph{frustrated} if and only if 
\be\label{2stubborn}w\ngeq-h\,,\qquad w\ngeq h\,.\ee
\end{enumerate}
\end{proposition}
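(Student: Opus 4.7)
The plan is to deduce all three characterizations as immediate corollaries of Lemma \ref{lemma:consensus}, once we translate the defining properties of regular, biased, and frustrated games into statements about membership of $+\1$ and $-\1$ in the equilibrium set $\nash$.

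The key observation is that $+\1$ is the greatest element of the configuration space $\mc X$ and $-\1$ is the least. Hence $+\1\in\nash$ if and only if $+\1$ coincides with the greatest equilibrium $\ov x^*(h)$, since otherwise $\ov x^*(h) \lneq +\1$ and $+\1$ cannot belong to $\nash$; a symmetric statement holds for $-\1$ and $\ul x^*(h)$. Applying Lemma \ref{lemma:consensus}(i)--(ii), I then get the two equivalences
\begin{equation}\label{eq:planeq1}
+\1\in\nash \;\Longleftrightarrow\; w\ge -h\,, \qquad -\1\in\nash \;\Longleftrightarrow\; w\ge h\,.
\end{equation}

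From here, the three cases of the proposition follow by simple case analysis on the set $\nashconsensus=\nash\cap\{\pm\1\}$. For (i), the game is regular precisely when both $+\1$ and $-\1$ belong to $\nash$, which by \eqref{eq:planeq1} amounts to $w\ge h$ and $w\ge -h$ holding simultaneously, i.e., $-w\le h\le w$. For (ii), the game is $a$-biased precisely when $a\1\in\nash$ and $-a\1\notin\nash$; for $a=+1$ this is $w\ge -h$ and $w\ngeq h$, while for $a=-1$ it is $w\ge h$ and $w\ngeq -h$, both of which can be written uniformly as $w\ge -ah$ and $w\ngeq ah$. For (iii), the game is frustrated precisely when neither consensus belongs to $\nash$, which by \eqref{eq:planeq1} is equivalent to $w\ngeq -h$ and $w\ngeq h$.

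There is no real obstacle here: the work has been done in Lemma \ref{lemma:consensus}, and the only thing to be careful about is keeping the inequalities in the correct direction (note the sign flip between the statement of Lemma \ref{lemma:consensus}, phrased in terms of $h\le w$ and $h\ge -w$, and Proposition \ref{prop:regular-biased-frustrated}, phrased in terms of $w\ge \pm h$, which are of course the same conditions). The cleanest presentation is to state \eqref{eq:planeq1} once as a consequence of Lemma \ref{lemma:consensus} and then dispatch the three items in a single sentence each.
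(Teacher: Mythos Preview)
Your proposal is correct and follows exactly the approach the paper takes: the paper simply states that ``Lemma \ref{lemma:consensus} directly implies the following result'' and gives no further proof, so you have supplied precisely the details that the paper omits. The one-line bridge you make explicit---that $a\1\in\nash$ if and only if it equals the extremal equilibrium on that side, since $\pm\1$ are the extremal elements of $\mc X$---is the only thing needed to pass from Lemma \ref{lemma:consensus} to the three cases, and your case analysis is accurate.
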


\begin{remark}\label{remark:stubborn}
The necessary and sufficient conditions in Proposition \ref{prop:regular-biased-frustrated} can be readily interpreted in terms of the presence of stubborn agents, as introduced in Section \ref{sec:game-theoretic}. 
In fact, Lemma \ref{lemma:consensus} implies that \eqref{Nostubborn} is equivalent to the fact that no player is stubborn, \eqref{1stubborn} is equivalent to the existence of at least one $a$-stubborn agent but no $-a$-stubborn agents, and \eqref{2stubborn} is equivalent to the existence of both $+1$- and $-1$-stubborn agents. Hence, Proposition \ref{prop:regular-biased-frustrated} states that a coordination game is regular if and only if there are no stubborn agents, biased if and only if it contains stubborn agents of one type only, and frustrated if it contains stubborn agents of both types. 
\end{remark}

In contrast to the relative simplicity of the characterization above, necessary and sufficient conditions for polarizability of  network coordination games as per Definition \ref{def2}  are in general more involved and rely on the notion of indecomposability introduced in Definition \ref{def:decomposition}.  



\begin{proposition}\label{prop:polarization} The coordination game on a network $\mc G$ with external field $h$ is unpolarizable if and only if  $\mc G$ is $h$-indecomposable.
\end{proposition}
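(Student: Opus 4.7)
The plan is to establish a direct correspondence between coexistent equilibria of the coordination game and nontrivial binary partitions violating the indecomposability condition. The key observation is that for a configuration $x^*$ in $\mc X$, the associated partition $\mc V^+=\{i:x^*_i=+1\}$ and $\mc V^-=\{i:x^*_i=-1\}$ is nontrivial exactly when $x^*$ is coexistent, and the restricted out-degrees from Definition \ref{def:decomposition} satisfy $w_i^s=w_i^{s}(x^*)$ with notation as in \eqref{wi+wi-}. Given this dictionary, the equilibrium inequalities from Lemma \ref{lemma:coordination}(i) translate exactly into the negation of \eqref{eq:robustly_indec}.

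For the direction ``$h$-indecomposable $\Rightarrow$ unpolarizable'', I would argue by contradiction. Suppose $x^*\in\nashcoexistent$ exists, and take the induced partition $\mc V^+\cup\mc V^-$, which is nontrivial. For each $s\in\{-,+\}$ and each $i\in\mc V^s$, the equilibrium condition $u_i(x^*)\ge 0$ together with the expression \eqref{eq:utility2} gives $s(h_i+w_i^s-w_i^{-s})\ge 0$, which in both cases $s=+1$ and $s=-1$ reduces to
\[
w_i^s+s h_i\ge w_i^{-s}.
\]
This contradicts the indecomposability hypothesis, since that hypothesis demands at least one $s$ and one $i\in\mc V^s$ for which the strict reverse inequality holds.

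For the converse ``unpolarizable $\Rightarrow$ $h$-indecomposable'', I would prove the contrapositive: if $\mc G$ fails to be $h$-indecomposable, then there is a nontrivial partition $\mc V=\mc V^+\cup\mc V^-$ such that $w_i^s+s h_i\ge w_i^{-s}$ holds for every $s$ and every $i\in\mc V^s$. Defining $x^*$ by $x^*_i=s$ for $i\in\mc V^s$ yields a coexistent configuration (nontriviality of the partition), and the same algebraic reduction as above, run in reverse, shows $u_i(x^*)\ge 0$ for every $i$, hence $x^*\in\nashcoexistent$ by Lemma \ref{lemma:coordination}(i). This witnesses polarizability.

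No step is really an obstacle here; the whole argument is essentially the observation that the equilibrium inequalities are termwise the negations of the indecomposability inequalities when one passes through the partition associated with $x^*$. The only care needed is to check the two cases $s=+1$ and $s=-1$ symmetrically, and to use that restricting the sum in \eqref{wi+wi-} to $\mc V^\pm$ matches the $\mc S$-restricted out-degree notation of Definition \ref{def:decomposition}.
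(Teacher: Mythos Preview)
Your approach is correct and essentially identical to the paper's: both directions proceed by identifying the coexistent equilibrium $x^*$ with its induced partition and translating the equilibrium inequality $u_i(x^*)\ge 0$ into the negation of \eqref{eq:robustly_indec}. The paper routes the converse through Proposition~\ref{prop:indecomposable-necessary} and Lemma~\ref{lemma:coordination=LTM}, but unwinding those gives exactly your direct construction.

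One small slip: the intermediate expression $s(h_i+w_i^s-w_i^{-s})$ is not equal to $u_i(x^*)$ when $s=-1$ (check the signs: from \eqref{eq:utility2} you get $s(h_i+w_i^{+}-w_i^{-})$, with fixed superscripts, which then rewrites as $sh_i+w_i^{s}-w_i^{-s}$). Your final inequality $w_i^s+sh_i\ge w_i^{-s}$ is nonetheless correct, so this is a bookkeeping fix, not a gap.
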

\begin{proof} See Appendix \ref{sec:proof-prop-polarization}.
\end{proof}

\begin{remark} \label{remark:cohesive}
Given a graph $\mc G=(\mc V,\mc E,W)$ and $r$ in $[0,1]$, a subset of nodes $\mc S\subseteq\mc V$ is called $r$-\emph{cohesive} \cite{Morris:2000}  if 
\be\label{eq:cohesiveness}w_i^{\mc S}\ge rw_i\,,\qquad\forall i\in\mc S\,,\ee
and $r$-\emph{closed} if its complement $\mc V\setminus\mc S$ is $(1-r)$-cohesive.
Notice that, using the identity $w_i=w_i^{s}+w_i^{-s}$, condition \eqref{eq:robustly_indec} in the special case $h^-=h^+=h$ can be rewritten as 
$2w_i^{s} < w_i-sh_i\,.$
In the special case when players have homogeneous thresholds $r_i=r$ in $[0,1]$, equivalently when the external field is proportional to the node degree vector, i.e., 
$h=(1-2r)w$, 
\eqref{eq:robustly_indec} is equivalent to  $w_i^{s}<rw_i$. 
Hence, in this special case, $h$-indecomposability of a graph $\mc G$ is equivalent to the non-existence of nonempty proper subsets of nodes $\mc S$ that are both $r$-cohesive and $r$-closed. In this sense, Proposition \ref{prop:polarization} generalizes \cite[Proposition 9.7]{Jackson:2008} to coordination games on weighted directed networks with heterogeneous thresholds.
\end{remark}

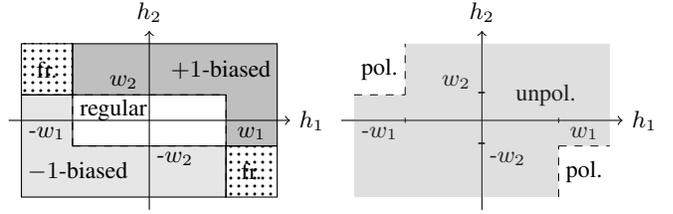
\begin{figure}
	\centering
	\begin{tikzpicture}[scale=0.34]
		\draw[fill=white!50!gray](-3,1) -- (3,1) -- (3,-1) -- (5,-1) -- (5,3)-- (-3,3) --cycle;
		\draw[fill=white!80!gray] (-3,-1) -- (3,-1) -- (3,-3) -- (-5,-3) -- (-5,1)-- (-3,1) --cycle; 
		
		\draw[->] (-5.5, 0) -- (5.5, 0) node[right] {\small$h_1$};
		\draw[->] (0, -3.5) -- (0, 3.5) node[above] {\small$h_2$};
		\draw[scale=1, domain=-5:3, dashed, variable=\x] plot ( {\x},{1});
		\draw[scale=1, domain=-3:5, dashed, variable=\x] plot ( {\x},{-1});
		\draw[scale=1, domain=-3:1, dashed, variable=\y] plot ( {3},{\y});
		\draw[scale=1, domain=-1:3, dashed, variable=\y] plot ({-3}, {\y});	
		\draw[pattern=dots] (3,-1) -- (5,-1) -- (5,-3) -- (3,-3) --cycle;
		\draw[pattern=dots] (-3,1) -- (-3,3) -- (-5,3) -- (-5,1) --cycle;
		
		
		
		\node at (4,-0.5) {\small$w_1$};
		\node at (-4,-0.5) {\small-$w_1$};
		\node at (-1,1.5) {\small$w_2$};
		\node at (1,-1.5) {\small-$w_2$};
		\node at (-4,2) {\small fr.};
		\node at (4,-2) {\small fr.};
		\node at (2.8,2) {\small$+1$-biased};
		\node at (-2.8,-2) {\small$-1$-biased};
		\node at (-1.4,0.4) {\small regular};
	\end{tikzpicture}\hspace{0.1cm}\begin{tikzpicture}[scale=0.34]
		\draw[->] (-5.5, 0) -- (5.5, 0) node[right] {\small$h_1$};
		\draw[->] (0, -3.5) -- (0, 3.5) node[above] {\small$h_2$}; 
		\draw[scale=1, domain=-5:-3, dashed, variable=\x] plot ( {\x},{1});
		\draw[scale=1, domain=3:5, dashed, variable=\x] plot ( {\x},{-1});
		\draw[scale=1, domain=-3:-1, dashed, variable=\y] plot ( {3},{\y});
		\draw[scale=1, domain=1:3, dashed, variable=\y] plot ({-3}, {\y});	
		\node at (4,-0.5) {\small$w_1$};
		\node at (-4,-0.5) {\small-$w_1$};
		\node at (-1,1.5) {\small$w_2$};
		\node at (1,-1.5) {\small-$w_2$};
		\node at (0,1) {-};
		\node at (0,-1) {-};
		\draw[scale=1, domain=-0.1:0.1, variable=\y] plot ({-3}, {\y});	
		\draw[scale=1, domain=-0.1:0.1, variable=\y] plot ({3}, {\y});	
		\node at (-4,2) {\small pol.};
		\node at (4,-2) {\small pol.};
		\node at (2.5,1) {\small unpol.};
		\fill[gray, nearly transparent] (3,-1) -- (5,-1) -- (5,3)-- (-3,3) --(-3,1)--(-5,1)--(-5,-3)-- (3,-3)--cycle;
	\end{tikzpicture}
	
	\caption{Classification  of two-player network coordination games as in Example \ref{ex:regular-biased-frustrated}, based on Proposition \ref{prop:regular-biased-frustrated} and \ref{prop:polarization}. 
	}
	\label{fig:regular-biased-frustrated}
\end{figure}

\begin{example}\label{ex:regular-biased-frustrated}
		Let $\mc G=(\mc V,\mc E,W)$ be a graph with two nodes $\mc V = \{1,2\}$ connected by two directed links of weight $W_{12}=w_1$ and $W_{21}=w_2$, respectively. 
		Consider a network coordination game on $\mc G$ with external field $h=(h_1,h_2)$. 
		
		As illustrated in Figure \ref{fig:regular-biased-frustrated}, by Proposition \ref{prop:regular-biased-frustrated}, the network coordination game is: 
		regular if $\abs{h}\leq w$ (white region); 
		$+1$-biased if $h\geq-w$ and $h\nleq w$ (dark gray region);
		$-1$-biased if $h\leq w$ and $h\ngeq-w$ (light gray region);
                 frustrated  if $h_1>w_1$ and $h_2<-w_2$ or $h_1<-w_1$ and $h_2>w_2$ (dotted region). 
                 
		On the other hand, Proposition \ref{prop:polarization} 
		ensures that the network coordination game on $\mc G$ is  unpolarizable if and only if one of the following holds true: (i) $h<w$, (ii) $h>-w$, (iii) $\abs{h_1}<w_1$, (iv) $\abs{h_2}<w_2$, as illustrated in Figure \ref{fig:regular-biased-frustrated}. Notice that, for the special case of only two-players, the network coordination game is polarizable if and only if it is frustrated.
\end{example}

Biased unpolarizable coordination games can be characterized in an equivalent simpler form.

\begin{proposition}\label{prop:biased-indecomposable} The coordination game on a network $\mc G$ with external field $h$ has unique equilibrium $x^*=a\1$ for a in $\{\pm1\}$ if and only if the following conditions are both satisfied: 
\begin{enumerate}
\item[(a)] $w\ngeq ah$; 
\item[(b)] every non-empty subset $\mc R\subseteq\mc V\setminus\mc S_a(h)$ contains some node $i$ such that 
\be\label{indecomposable-simple}w^{\mc R}_i<w^{\mc V\setminus\mc R}_i+ah_i\,.\ee
%
%
%
%
%
%
%
%
%
%
\end{enumerate}
\end{proposition}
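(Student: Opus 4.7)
The plan is to split ``unique equilibrium $x^*=a\1$'' into three sub-conditions: $a\1\in\nash$, $-a\1\notin\nash$, and $\nashcoexistent=\emptyset$. By Lemma \ref{lemma:consensus}, $-a\1\notin\nash$ is exactly condition (a) $w\ngeq ah$, which in turn amounts to $\mc S_a(h)\ne\emptyset$; and by Proposition \ref{prop:polarization}, $\nashcoexistent=\emptyset$ is equivalent to $h$-indecomposability of $\mc G$. So the proposition reduces to showing that, in the presence of (a), condition (b) is equivalent to the combination of $a\1\in\nash$ and $h$-indecomposability. The guiding insight is that (b) is an asymmetric refinement of $h$-indecomposability that only tests partitions whose $-a$-side lies in $\mc V\setminus\mc S_a(h)$, with the witness sought on that side; this restriction is justified because, in any other partition, an $a$-stubborn node located on the $-a$-side automatically witnesses decomposability.

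For the sufficiency direction I would first deduce $a\1\in\nash$ from (b). If $w_j<-ah_j$ for some $j$, then $ah_j<0\le w_j$ forces $j\in\mc V\setminus\mc S_a(h)$, so applying (b) to $\mc R=\{j\}$ yields $0=w_j^{\{j\}}<w_j+ah_j$, contradicting $w_j+ah_j<0$. Thus $w\ge -ah$, and by Lemma \ref{lemma:consensus} $a\1\in\nash$. Next, to obtain $h$-indecomposability, I would take any nontrivial partition $\mc V=\mc V^+\cup\mc V^-$ and consider two cases. If some $i\in\mc V^{-a}$ is $a$-stubborn, then $ah_i>w_i\ge w_i^{-a}$ immediately gives $w_i^{-a}-ah_i<0\le w_i^a$, so $i$ witnesses decomposability with $s=-a$. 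Otherwise $\mc V^{-a}\subseteq\mc V\setminus\mc S_a(h)$, and since the partition is nontrivial $\mc V^{-a}\ne\emptyset$, so (b) applied to $\mc R=\mc V^{-a}$ supplies the witness directly.

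For the necessity direction, condition (a) is again immediate from Lemma \ref{lemma:consensus}. For (b), I argue by contrapositive: if (b) fails, among all non-empty $\mc R\subseteq\mc V\setminus\mc S_a(h)$ satisfying $w_i^{\mc R}\ge w_i^{\mc V\setminus\mc R}+ah_i$ for every $i\in\mc R$, pick an inclusion-maximal one $\mc R^*$. Since (a) forces $\mc S_a(h)\ne\emptyset$, we have $\mc R^*\subsetneq\mc V$, and the configuration $x$ given by $x_i=-a$ on $\mc R^*$ and $x_i=a$ on $\mc V\setminus\mc R^*$ is coexistent. Using Lemma \ref{lemma:coordination}(i), I would check that $x$ is an equilibrium: players in $\mc R^*$ best-respond by the defining inequality of $\mc R^*$, players in $\mc V\setminus\mc R^*$ that are $a$-stubborn best-respond trivially, and the remaining players in $\mc V\setminus\mc R^*$ best-respond because otherwise they could be added to $\mc R^*$ without spoiling the defining inequalities, contradicting maximality. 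The resulting coexistent equilibrium contradicts uniqueness of $a\1$.

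The crux of the argument is the inclusion-maximality step in the necessity direction: one must verify that enlarging $\mc R^*$ by a candidate $i\in\mc V\setminus\mc S_a(h)$ preserves the inequality $w_{i'}^{\mc R}\ge w_{i'}^{\mc V\setminus\mc R}+ah_{i'}$ for every pre-existing $i'\in\mc R^*$. This uses only $W_{i'i}\ge 0$, but conceptually it is the set-theoretic shadow of the super-modular (increasing-differences) property already exploited in Proposition \ref{prop:stability-supermodular}: moving a node from the $a$-side to the $-a$-side weakly increases every player's incentive to also play $-a$.
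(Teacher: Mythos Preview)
Your proof is correct. The sufficiency direction matches the paper's closely: both apply (b) to singletons to get $w\ge -ah$, and your detour through $h$-indecomposability (then invoking Proposition~\ref{prop:polarization}) is equivalent to the paper's direct argument that any coexistent equilibrium $x^*$ would make $\mc R=\mc V_{x^*}^{-a}$ violate (b). One terminological slip: where you write ``$i$ witnesses decomposability with $s=-a$'' you mean that $i$ satisfies the indecomposability condition \eqref{eq:robustly_indec}.

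The genuine difference is in the necessity direction. The paper does not take a maximal violating set; it takes \emph{any} $\mc R$ violating (b), forms the corresponding configuration $x$, observes that no monotone-towards-$a$ I-path leaves $x$ (so $f^a(x)=x$), and then invokes Proposition~\ref{prop:stability-supermodular}(i) to conclude that $f^{-a}(x)$ is an equilibrium, necessarily coexistent since it agrees with $-a$ on $\mc R$ and with $a$ on $\mc S_a(h)$. Your argument instead selects an inclusion-maximal $\mc R^*$ and shows that the associated configuration is itself an equilibrium, using maximality to certify the $a$-side players. This is more elementary---it avoids the improvement-path machinery entirely---and your observation that enlarging $\mc R^*$ by one node preserves the defining inequalities for all existing members (via $W_{i'i}\ge0$) is exactly right. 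The paper's route, by contrast, stays within the super-modular framework it has already developed and does not need the extremal choice. Both arguments are clean; yours is self-contained, the paper's is better integrated with its lattice-theoretic toolkit.
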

\begin{proof} See Appendix \ref{sec:proof-prop-biased-indecomposable}. 
\end{proof}

\begin{remark}
In the special case $h=(1-2r)w$ considered in Remark \ref{remark:cohesive}, i.e., when players have homogeneous thresholds $r_i=r$ in $[0,1]$, \eqref{indecomposable-simple} is equivalent to  $w_i^{\mc R}<(1-r)w_i$, so 
condition (b) of Proposition \ref{prop:biased-indecomposable} reduces to the non-existence of $(1-r)$-cohesive subsets of $\mc V\setminus\mc S_a(h)$. In the literature \cite{Morris:2000}, such property is referred to as the set $\mc V\setminus\mc S_a(h)$ being uniformly not $(1-r)$-cohesive. In this sense, Proposition \ref{prop:biased-indecomposable} generalizes \cite[Proposition 9.8]{Jackson:2008} to network coordination games with heterogeneous thresholds.
\end{remark}
We conclude this section with the statement below, gathering some results on global I- and BR-stability of consensus equilibria for network coordination games that directly follow from the analysis just developed.
\begin{corollary}\label{coro:stability} For a graph $\mc G$ with out-degree vector $w$,   
consider the network coordination game on it with external field $h$. Assume that $\mc G$ is $h$-indecomposable. Then: 
\begin{enumerate}
\item[(i)] if $|h|\leq w$, then $\nash=\{\pm\1\}$ is globally I-stable; 
\item[(ii)] if $|h|< w$, then $\nash=\{\pm\1\}$ is globally BR-stable; 
\item[(iii)] if 
$w\ngeq ah$, 
then $\nash=\{a\1\}$ is globally BR-stable.
\end{enumerate}
\end{corollary}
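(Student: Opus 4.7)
The plan is to reduce each of (i)--(iii) to a combination of the equilibrium-set characterizations of Propositions~\ref{prop:regular-biased-frustrated} and~\ref{prop:polarization} and Lemma~\ref{lemma:consensus}, together with the general supermodular stability results of Proposition~\ref{prop:stability-supermodular}. A common first step: by Proposition~\ref{prop:polarization}, $h$-indecomposability of $\mc G$ is equivalent to unpolarizability of the coordination game, so $\nashcoexistent=\emptyset$, and hence $\nash\subseteq\{\pm\1\}$.

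For (i), since $-w\le h\le w$, Proposition~\ref{prop:regular-biased-frustrated}(i) yields $\nashconsensus=\{\pm\1\}$, and therefore $\nash=\{\pm\1\}$; global I-stability then follows immediately from Proposition~\ref{prop:stability-supermodular}(iv).

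For (ii), the same argument gives $\nash=\{\pm\1\}$. Evaluating \eqref{eq:utility2} at the consensus configurations yields $u_i(+\1)=w_i+h_i$ and $u_i(-\1)=w_i-h_i$, both strictly positive under $|h|<w$; hence $\pm\1$ are both strict equilibria, which makes $\{\pm\1\}$ BR-invariant, because any unilateral flip from a strict equilibrium strictly decreases the deviator's utility and so is not a valid BR move. Global BR-reachability then follows by applying Proposition~\ref{prop:stability-supermodular}(ii) to an arbitrary $x\in\mc X$: the equilibria $f^-(g^+(x))$ and $f^+(g^-(x))$ are BR-reachable from $x$ and must lie in $\nash=\{\pm\1\}$.

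For (iii), the hypothesis $w\ngeq ah$ means $\mc S_a(h)\ne\emptyset$ by the definition in~\eqref{stubborn}. Invoking Lemma~\ref{lemma:consensus} (its first part for $a=+1$, its second for $a=-1$) then rules $-a\1$ out of $\nash$. Combined with unpolarizability and nonemptiness of $\nash$ (guaranteed by Proposition~\ref{prop:stability-supermodular}(iv)), this forces $\nash=\{a\1\}$, so $\ul x^*=\ov x^*=a\1$, and Proposition~\ref{prop:stability-supermodular}(vii) delivers global BR-stability. The only place a direct calculation is needed is the BR-invariance check in (ii), where the strict inequalities $|h|<w$ are essential; everything else is bookkeeping with structural lemmas already in hand.
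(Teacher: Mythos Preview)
Your proof is correct and follows essentially the same route as the paper: you invoke Proposition~\ref{prop:polarization} for unpolarizability, Proposition~\ref{prop:regular-biased-frustrated} and Lemma~\ref{lemma:consensus} to pin down $\nash$, and then Proposition~\ref{prop:stability-supermodular}(iv) and (vii) for the stability conclusions. The only cosmetic difference is in (ii), where you explicitly compute the utilities at $\pm\1$ and cite Proposition~\ref{prop:stability-supermodular}(ii) for BR-reachability, whereas the paper simply notes that $|h|<w$ makes both consensus equilibria strict and relies on the I-stability already established in (i) (every I-path being a BR-path) for reachability.
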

\begin{proof}
(i) Proposition \ref{prop:regular-biased-frustrated} (i) implies that when $|h|\leq w$, the game is regular so that the two consensus configurations $-\1$ and $+\1$ are both equilibria. Since $\mc G$ is $h$-indecomposable, Proposition \ref{prop:polarization} guarantees that the game is unpolarizable. Then, the set of equilibria is  $\nash=\{\pm\1\}$ and Proposition \ref{prop:stability-supermodular} (iv) guarantees that it is globally I-stable.

(ii) We already know from point (i) that $\nash=\{\pm\1\}$ is globally I-stable. The BR-stability follows from the fact that when $|h|< w$ these two equilibria are strict.

(iii) By Lemma \ref{lemma:consensus}, $\mc S_a(h)\neq \emptyset$, so that the configuration $-a\1$ is not an equilibrium. 
Since $\mc G$ is $h$-indecomposable, Proposition \ref{prop:polarization} implies that $ \nashcoexistent=\emptyset$. 
Hence,  $\nash=\{a\1\}$. BR-stability then follows from Proposition \ref{prop:stability-supermodular} (vii).
\end{proof}

\section{Robust stability}

In this section, we first introduce and characterize robust versions of the notions introduced in Definitions \ref{def1} and \ref{def2}.  This will pave the way to the proof of Theorem \ref{theo:time-varying} on the asymptotic behavior of the ATV-LTD. 

\subsection{Robustness network coordination}
\label{sec:robust-coordination}
For a graph $\mc G= (\mc V,\mc E,W)$  and  a subset of vectors $\mc H\subseteq \R^{\mc V}$, we  say that a property of the network coordination game on $\mc G$ is
satisfied $\mc H$-\emph{robustly} if it is satisfied \emph{for every}  external field $h$ in $\mc H$.
In what follows we concentrate on the special case when, for two vectors $h^-$ and $h^+$ in $\R^{\mc V}$ such that $h^-\le h^+$, the set $\mc H$ is the hyper-rectangle \eqref{H}. 
In this case, the verification that certain important properties are $\mc H$-robustly satisfied can be significantly simplified with respect to checking the property for every single value of $h$ in $\mc H$. Below we report the results in this sense. We start with the following robust version of 
Proposition \ref{prop:regular-biased-frustrated}.

\begin{corollary} \label{coro:robustly-regular+biased} 
Let $\mc G$ be a graph with out-degree vector $w$ and let $\mc H$ be as in \eqref{H} for two vectors $h^-\le h^+$.
Then, the network coordination game on $\mc G$  is:
\begin{enumerate}
\item[(i)] $\mc H$-robustly regular if and only if  
\be\label{w>=max}w\ge h^+\,,\qquad w\ge-h^-\,;\ee
\item[(ii)] $\mc H$-robustly $a$-biased for an action $a=\pm1$ if and only if 
\be\label{1stubborn-bis} 
 w\ge-ah^{-a}\,,\qquad w\ngeq a h^{-a}
 \,;\ee 
\item[(iii)] $\mc H$-robustly frustrated if and only if 
$$
w\ngeq-h^{+}\,,\qquad w\ngeq h^{-}\,.$$
\end{enumerate}
\end{corollary}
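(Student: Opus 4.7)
The plan is to combine Proposition \ref{prop:regular-biased-frustrated} applied pointwise with elementary monotonicity arguments that reduce universal quantification over $h\in\mc H$ to conditions at the extremes $h^-$ and $h^+$. The main observation is the following four equivalences, valid for any vector $v\in\R^{\mc V}$:
\begin{itemize}
\item $v \ge h$ for every $h\in\mc H$ $\iff$ $v \ge h^+$;
\item $v \ge -h$ for every $h\in\mc H$ $\iff$ $v \ge -h^-$;
\item $v \ngeq h$ for every $h\in\mc H$ $\iff$ $v \ngeq h^-$;
\item $v \ngeq -h$ for every $h\in\mc H$ $\iff$ $v \ngeq -h^+$.
\end{itemize}
The first two follow directly from the fact that $h^-\le h\le h^+$ with $h^\pm\in\mc H$. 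For the third, if $v\ngeq h^-$ then some coordinate $i_0$ satisfies $v_{i_0}<h^-_{i_0}\le h_{i_0}$ for every $h\in\mc H$, so $v\ngeq h$; conversely, specialize to $h=h^-$. The fourth is analogous using $h=h^+$.

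Next, for each of the three cases I would simply translate the corresponding characterization from Proposition \ref{prop:regular-biased-frustrated} and apply the equivalences above. For (i), $\mc H$-robust regularity asks $-w\le h\le w$ for all $h\in\mc H$; the two halves become $w\ge -h^-$ and $w\ge h^+$, yielding \eqref{w>=max}. For (iii), $\mc H$-robust frustration asks $w\ngeq -h$ and $w\ngeq h$ for all $h\in\mc H$, which by the third and fourth equivalences become $w\ngeq -h^+$ and $w\ngeq h^-$.

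For (ii) the only mild subtlety is to keep track of the sign of $a$. Robust $a$-bias asks $w\ge -ah$ and $w\ngeq ah$ for every $h\in\mc H$. When $a=+1$, these read $w\ge -h$ and $w\ngeq h$ for all $h$, equivalent by the second and third equivalences to $w\ge -h^-$ and $w\ngeq h^-$; when $a=-1$, they read $w\ge h$ and $w\ngeq -h$ for all $h$, equivalent to $w\ge h^+$ and $w\ngeq -h^+$. In both cases this is precisely $w\ge -ah^{-a}$ and $w\ngeq ah^{-a}$, since the choice $-a$ of the superscript correctly picks out the extreme of $\mc H$ that witnesses the (negated) inequality.

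There is no genuine obstacle: the only bookkeeping item is checking that the compact notation $h^{-a}$ in \eqref{1stubborn-bis} selects the correct extreme in both signs of $a$, which is verified by the explicit substitutions above. The whole argument is a one-line corollary once the four monotonicity equivalences are recorded.
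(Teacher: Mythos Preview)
Your proposal is correct and is precisely the implicit argument the paper has in mind: the corollary is stated as a direct robust version of Proposition~\ref{prop:regular-biased-frustrated} without separate proof, and your reduction of the universal quantifier over $h\in\mc H$ to the extremes $h^-,h^+$ via the four monotonicity equivalences is exactly the intended one-line justification. The bookkeeping for part~(ii) with the $h^{-a}$ notation is also handled correctly.
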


Another interesting property is the robust unpolarizability, which can be interpreted as the resilience of a network coordination game against getting co-existent equilibria. By virtue of Proposition \ref{prop:polarization}, this can equivalently be expressed as a robust indecomposability of the graph $\mc G$. However, it is useful to reformulate this in a form analogous to Definition \ref{def:decomposition}, as in the following result.

\begin{theorem}\label{theorem:robust-indecomposability}Let $\mc G=(\mc V,\mc E,W)$ be a graph and let $\mc H$ be as in \eqref{H} for two vectors $h^-\le h^+$. Then, the following conditions are equivalent:
\begin{enumerate}
\item[(a)] the network coordination game on $\mc G$ is $\mc H$-robustly unpolarizable;
\item[(b)]  $\mc G$ is $\mc H$-robustly indecomposable;
\item[(c)] $\mc G$ is $(h^-,h^+)$-indecomposable. 
\end{enumerate}
\end{theorem}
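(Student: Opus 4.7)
The plan is to pass through the three implications via Proposition \ref{prop:polarization}, which already identifies unpolarizability with $h$-indecomposability for a single external field $h$. The equivalence (a) $\Leftrightarrow$ (b) is then immediate: unpolarizability holds $\mc H$-robustly iff it holds for every $h\in\mc H$, iff $\mc G$ is $h$-indecomposable for every $h\in\mc H$, which is exactly $\mc H$-robust indecomposability. So the real work is (b) $\Leftrightarrow$ (c), i.e., showing that robust indecomposability against the whole hyper-rectangle $\mc H$ reduces to the single ``adversarial'' condition \eqref{eq:robustly_indec} of Definition \ref{def:decomposition}.

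For (c) $\Rightarrow$ (b), I would fix an arbitrary $h\in\mc H$ and an arbitrary nontrivial binary partition $\mc V=\mc V^+\cup\mc V^-$. Condition (c) furnishes a sign $s\in\{-,+\}$ and a node $i\in\mc V^s$ with $w_i^s+sh_i^s<w_i^{-s}$. The key observation is that for any such $s$ the quantity $sh_i$ is maximized over $h\in\mc H$ precisely at $h_i=h_i^s$: indeed if $s=+$ then $h_i\le h_i^+ = h_i^s$, while if $s=-$ then $-h_i\le -h_i^- = sh_i^s$. Hence $sh_i\le sh_i^s$, which together with (c) gives $w_i^s+sh_i\le w_i^s+sh_i^s<w_i^{-s}$, verifying $h$-indecomposability of $\mc G$ for this particular $h$ and partition. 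Since $h$ and the partition were arbitrary, $\mc G$ is $\mc H$-robustly indecomposable.

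For (b) $\Rightarrow$ (c), given a nontrivial binary partition $\mc V=\mc V^+\cup\mc V^-$, I would construct the ``worst-case'' external field $h^\star\in\mc H$ depending on the partition by setting
\begin{equation*}
h^\star_i=h_i^s\qquad\text{whenever }i\in\mc V^s,\ s\in\{-,+\}.
\end{equation*}
Because $h_i^-\le h_i^+$, each coordinate lies in $[h_i^-,h_i^+]$, so $h^\star\in\mc H$. By (b), $\mc G$ is $h^\star$-indecomposable, so there exist $s\in\{-,+\}$ and $i\in\mc V^s$ such that $w_i^s+sh^\star_i<w_i^{-s}$. But by construction $h^\star_i=h_i^s$ since $i\in\mc V^s$, and this reads exactly $w_i^s+sh_i^s<w_i^{-s}$, which is \eqref{eq:robustly_indec}. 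Since the partition was arbitrary, $\mc G$ is $(h^-,h^+)$-indecomposable, i.e., (c) holds.

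The argument is essentially a worst-case/monotonicity observation: the inequality in \eqref{eq:robustly_indec} is tightest when each $sh_i$ is as large as possible, which happens exactly at the partition-dependent extreme point $h^\star$; so testing indecomposability at this extreme point is equivalent to testing it on the whole rectangle. No part of the argument is a real obstacle; the only care needed is to choose the correct extreme point for each partition (and to note that this choice depends on the partition, so (c) genuinely aggregates infinitely many single-$h$ indecomposability conditions into one succinct statement).
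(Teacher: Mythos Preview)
Your proof is correct. The equivalence (a) $\Leftrightarrow$ (b) is handled exactly as in the paper, via Proposition \ref{prop:polarization}. For the remaining equivalence, however, you and the paper take slightly different routes: the paper proves (a) $\Leftrightarrow$ (c) directly by invoking the game-theoretic machinery (if (a) fails, a coexistent equilibrium $x^*$ for some $h\in\mc H$ yields a partition violating \eqref{eq:robustly_indec} via $u_i(x^*)\ge0$; conversely, if (c) fails, Proposition \ref{prop:indecomposable-necessary} and Lemma \ref{lemma:coordination=LTM}(ii) produce a coexistent equilibrium), whereas you prove (b) $\Leftrightarrow$ (c) by a purely combinatorial monotonicity argument on the inequality in Definition \ref{def:decomposition}, never leaving the level of the graph condition. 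Your approach is more self-contained for this particular statement, since it bypasses the equilibrium characterization entirely; the paper's approach has the merit of tying the result back to the LTD/game interpretation that drives the rest of the analysis. It is worth noting that the ``worst-case'' field $h^\star$ you construct in the direction (b) $\Rightarrow$ (c) is exactly the same vector that appears in the proof of Proposition \ref{prop:indecomposable-necessary}, so the two arguments share the same core idea, just packaged differently.
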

\begin{proof}
Clearly, equivalence between conditions (a) and (b) directly follows from Proposition \ref{prop:polarization}. We shall now prove equivalence between conditions (a) and (c). 

First, assume that the network coordination game on $\mc G$ is not $\mc H$-robustly unpolarizable, i.e., there exists an external field $h$ in $\mc H$ such that the set of equilibria $\nash$ contains a co-existent configuration $x^*\ne\pm\1$. 
Notice that $h^-_i\le h_i\le h^+_i$ implies that $x_i^* h_i\le s_ih^{s_i}$, where $s_i=\sgn(x_i^*)$, for every player $i$ in $\mc V$. 
Then,  Lemma \ref{lemma:coordination} (i) and \eqref{eq:utility2} imply that
$$\ba{rcl}
0&\le& u_i(x^*)\\[5pt]
&=&x_i^* h_i+x_i^*w_i^+(x^*)-x_i^*w_i^-(x^*)\\[5pt]
&\le&s_ih^{s_i}+w_i^{{s_i}}-w_i^{-s_i}\,.
\ea$$
Hence, there exists no node $i$ in $\mc V$ satisfying \eqref{eq:robustly_indec} for the nontrivial binary partition $\mc V=\mc V_{x^*}^+\cup\mc V_{x^*}^-$.  This proves that condition (c) is not satisfied.

On the other hand, if condition (c) is not satisfied, then Proposition \ref{prop:indecomposable-necessary} and Lemma  \ref{lemma:coordination=LTM} (ii) imply that there exist $h^*$ in $\mc H$ and $x^*$ in $\mc X\circ_{h^*}$, so  that the network coordination game on $\mc G$ with external field $h^*$ is polarizable, hence condition (a) is not satisfied.
\end{proof}

We now focus on the stability results in Corollary \ref{coro:stability}, for which, besides their straightforward robust generalization, some deeper consequences can be derived as reported below. These will in turn prove instrumental for the analysis carried on in next section.

\begin{theorem}\label{theo:robustreachable} 	
	Let $\mc G=(\mc V,\mc E,W)$ be a graph of order $n$ and let $\mc H$ be as in \eqref{H} for two vectors $h^-\le h^+$. Assume that $\mc G$ is $\mc H$-robustly indecomposable. Then,
	\begin{enumerate}
		\item[(i)] for every configuration $x$ in $\mc X$ there exists an $\mc H$-robust $I$-path from $x$ to $\{\pm\1\}$ of length at most $n$. In particular, $\{\pm\1\}$ is $\mc H$-robustly globally I-reachable; 
		\item[(ii)]  if the network coordination game on $\mc G$  is $\mc H$-robustly regular, then the equilibrium set $\mc X^*=\{\pm\1\}$ is $\mc H$-robustly globally I-stable;  
		\item[(iii)]  if there exists an action $a=\pm1$ such that \eqref{1stubborn-bis} holds true, then there exists an $\mc H$-robust $I$-path from every configuration $x$ in $\mc X$ to $a\1$. In particular, in this case, the equilibrium set $\mc X^*=\{a\1\}$ is $\mc H$-robustly globally BR-stable.
	\end{enumerate} 
	
\end{theorem}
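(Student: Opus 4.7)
The plan is to prove (i) as the core technical statement and then to deduce (ii) and (iii) quickly from it and from the earlier lattice-theoretic results. The pivotal observation is that a $-1\to +1$ flip at a node $i$ that is strictly improving in the game with external field $h^-$ is automatically an $\mc H$-robust strict improvement (since raising $h_i$ only rewards playing $+1$), and symmetrically a $+1\to -1$ flip that is strictly improving in the $h^+$-game is $\mc H$-robust. Hence every monotone I-path of the $h^-$-game and every anti-monotone I-path of the $h^+$-game is automatically $\mc H$-robust.

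For (i), I would set $x^+:=f^+_{h^-}(x)$ and $x^-:=f^-_{h^+}(x)$, the extremal configurations I-reachable from $x$ via a monotone $h^-$-path and, respectively, an anti-monotone $h^+$-path (Proposition~\ref{prop:stability-supermodular}). These are reached by $\mc H$-robust I-paths of length at most $|\{i:x_i=-1\}|\le n$ and $|\{i:x_i=+1\}|\le n$, so it suffices to show that either $x^+=+\1$ or $x^-=-\1$. Assuming $x^+\ne +\1$, I would apply $(h^-,h^+)$-indecomposability to the nontrivial partition induced by $x^+$: the case $s=-$ would yield a strict $h^-$-improvement at $x^+$, contradicting monotone-maximality, so $s=+$ and the path can be extended anti-monotonically in the $h^+$-game to some $x^{++}:=f^-_{h^+}(x^+)<x^+$. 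A second application of indecomposability at $x^{++}$ --- combined with the sub-claim below --- forces $x^{++}\in\{\pm\1\}$; since $x^{++}\le x^+\ne +\1$, necessarily $x^{++}=-\1$, and monotonicity of $f^-_{h^+}$ (a consequence of the increasing-differences property, Lemma~\ref{lemma:coordination}(iv)) yields $x^-\le x^{++}=-\1$, hence $x^-=-\1$.

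The main obstacle is the sub-claim $x^{++}\in\{\pm\1\}$. When indecomposability at $x^{++}$ returns a node $i$ with $x^{++}_i=-1$ and $h_i^-+w_i^+(x^{++})>w_i^-(x^{++})$ (the $s=-$ case), one must transport this $h^-$-improvement back to $x^+$ or to the moment $i$ flipped along the anti-monotone path. I would split on $x^+_i$: if $x^+_i=-1$, then monotone-maximality of $x^+$ gives $h_i^-+w_i^+(x^+)\le w_i^-(x^+)$, and because the path $x^+\to x^{++}$ only weakens $w_i^+$ and strengthens $w_i^-$, the assumed inequality at $x^{++}$ is contradicted; if $x^+_i=+1$, then $i$ flipped down at some intermediate configuration $z$ where $h_i^++w_i^+(z)<w_i^-(z)$, and the same monotonicity along the tail of the path, combined with $h_i^-\le h_i^+$, produces the contradiction.

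Parts (ii) and (iii) follow quickly. For (ii), $\mc H$-robust regularity ($h^+\le w$ and $-h^-\le w$) makes $\pm\1$ equilibria for every $h\in\mc H$ by Lemma~\ref{lemma:consensus}, so no I-path can leave $\{\pm\1\}$; combined with (i) this gives global I-stability. For (iii) with $a=+1$, Corollary~\ref{coro:stability}(iii) yields $\nash=\{+\1\}$ for every $h\in\mc H$; in particular in the $h^-$-game $\ul x^*=\ov x^*=+\1$, so Proposition~\ref{prop:stability-supermodular}(v) and monotonicity of $f^+_{h^-}$ give $f^+_{h^-}(x)=+\1$ for every $x$, producing an $\mc H$-robust monotone I-path from $x$ to $+\1$ of length at most $n$; BR-stability then follows from Proposition~\ref{prop:stability-supermodular}(vii) applied in each $h$-game, and the case $a=-1$ is symmetric.
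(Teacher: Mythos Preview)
Your proof of (i) is correct, but it takes a genuinely different route from the paper's. The paper introduces, for each starting configuration $x$, a single \emph{hybrid} external field $h^x$ defined by $h^x_i=h^{x_i}_i$ (so $h^x_i=h^-_i$ where $x_i=-1$ and $h^x_i=h^+_i$ where $x_i=+1$). Since $h^x\in\mc H$, robust indecomposability implies the $h^x$-game is unpolarizable, and Proposition~\ref{prop:stability-supermodular}(i) immediately gives that $f^-(f^+(x,h^x),h^x)$ and $f^+(f^-(x,h^x),h^x)$ are consensus configurations. The key observation is then that along any monotone path from $x$ only players with $x_i=-1$ flip, and for those $h^x_i=h^-_i$; hence $f^+(x,h^x)=f^+(x,h^-)$, and similarly $f^-(x,h^x)=f^-(x,h^+)$. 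This reduces the whole argument to a single invocation of the equilibrium structure of one coordination game. Your approach instead works directly with the two extremal fields, applies $(h^-,h^+)$-indecomposability twice (at $x^+$ and at $x^{++}=f^-_{h^+}(x^+)$), and needs the sub-claim showing that a putative $s=-$ witness at $x^{++}$ contradicts either the monotone-maximality of $x^+$ or the strict $h^+$-improvement recorded when $i$ flipped down. That case analysis is correct and yields the same conclusion (indeed the same monotone/anti-monotone robust paths), but it does more combinatorial work by hand; the paper's hybrid-field trick trades that work for one application of Proposition~\ref{prop:stability-supermodular}(i) and Theorem~\ref{theorem:robust-indecomposability}. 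Your treatment of (ii) matches the paper's; your argument for (iii) is slightly more direct than the paper's (which routes through (i) and Corollary~\ref{coro:robustly-regular+biased}(ii)), since you observe straight away that $\mc X^*_{h^{-a}}=\{a\1\}$ forces $f^{a}_{h^{-a}}(-a\1)=a\1$ and hence $f^{a}_{h^{-a}}(x)=a\1$ for every $x$ by monotonicity. One minor point: in (i) you should dispose of the trivial case $x\in\{\pm\1\}$ (length-$0$ path) before applying indecomposability to the partition induced by $x^+$, since otherwise that partition need not be nontrivial.
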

\begin{proof}
(i)		Given an arbitrary configuration $x$ in $\mc X$, consider the external field $h^x$ in $\mc H$ with entries
		$$h^x_i=h^{x_i}_i\,,\qquad i\in\mc V\,,$$
and let 
		$$\ul x=f^+(f^{-}(x, h^x),h^x)\,,\qquad \ov x=f^-(f^{+}(x, h^x), h^x)\,.$$ 
		By Proposition \ref{prop:stability-supermodular} (i), the above are  
		the least and greatest equilibria of the network coordination game on $\mc G$ with external field $h^x$ that 
		are I-reachable from configuration $x$. 
		As the graph $\mc G$ is $\mc H$-robustly indecomposable, it follows from Theorem \ref{theorem:robust-indecomposability} that the network coordination game on $\mc G$ is $\mc H$-robustly unpolarizable. Since $h^x\in\mc H$, this implies  that the network coordination game on $\mc G$ with external field $h^x$ is unpolarizable, so that its equilibria $\ul x$ and $\ov x$  are both consensus configurations.  
		Since clearly $\ul x\le\ov x$, there are three possible alternative cases: 
		\begin{enumerate}
		\item[(a)] $\ul x=\ov x= +\1$; 
		 \item[(b)] $-\1= \ul x <\ov x=+\1$; 
		 \item[(c)]$\ul x=\ov x=-\mathbf{1}$. 
		 \end{enumerate}
In both cases (a) and (b) we have 
$$+\1=\ov x=f^{-}(f^{+}(x, h^x), h^x)=f^{+}(x, h^x)\,,$$ so that $x\uparrow+\1$ for the network coordination game on $\mc G$ with external field $h^x$.  
		We now show that in fact a monotone I-path from $x$ to $+\1$ exists $\mc H$-robustly by proving that $f^{+}(x, h^-)=f^{+}(x, h^x)$. Indeed, since in any monotone path only players originally playing action $-1$ can get activated, and since $h^-_i=h^x_i$ for every such player, we have that every monotone path is an $I$-path with respect to $h^-$ if and only if it is an $I$-path with respect to $h^x$. By the way the function $f^+$ is defined, this yields that 
		$$f^{+}(x, h^-)=f^{+}(x, h^x)=f^-(f^{+}(x, h^x), h^x) =+\1\,,$$ which shows that $+\mathbf{1}$ is reachable from $x$ by a monotone $I$-path for the network coordination game on $\mc G$ with external field $h^-$. We can then consider any monotone path $(x^{(0)},x^{(1)},\ldots x^{(l)})$ from $x^{(0)}=x$ to $x^{(l)}=f^{+}(x, h^-)=\mathbf{1}$ that is an $I$-path for the network coordination game on $\mc G$ with external field $h^-$, where clearly $l \leq |\mc V|$. A direct monotonicity argument shows that this is also a monotone $I$-path for the network coordination game on $\mc G$ with any external field $h$ in $\mc H$. 
		
	Similarly, in both cases (b) and (c) we have 	
	$$- \mathbf{1}=\ul x=f^-(f^{+}(x, h^x), h^x)=f^-(x, h^x)\,,$$ 
so that by an argument completely analogous to the one developed above we can find an $\mc H$-robust anti-monotone $I$-path from $x$ to $-\1$. The proof of point (i) is then completed by the observation that the length of monotone and anti-monotone is never larger than $n$. 
		
		(ii) From point (i), the equilibrium set $\mc X^*=\{\pm\1\}$ is $\mc H$-robustly globally I-reachable. Due to the $\mc H$-robust regularity assumption, $\mc X^*=\{\pm\1\}$ is also $\mc H$-robustly I-invariant. Hence, $\mc X^*=\{\pm\1\}$ is $\mc H$-robustly globally I-stable. 
		
		(iii)  Existence of an $\mc H$-robust $I$-path from every configuration $x$ in $\mc X$ to the equilibrium $a\1$ follows from point (i) and Corollary \ref{coro:robustly-regular+biased} (ii). On the other hand, Corollary \ref{coro:stability} (iii) ensures BR-invariance of $a\1$ for the network coordination game on $\mc G$ with any external field $h$ in $\mc H$. Therefore, the equilibrium set $\mc X^*=\{a\1\}$ is $\mc H$-robustly globally BR-stable.
\end{proof}

\subsection{Proof of Theorem \ref{theo:time-varying}}\label{sec:coordination-dynamics}
We now apply the results of Section \ref{sec:robust-coordination} to prove Theorem \ref{theo:time-varying}. Recall that the ATV-LTD $X(t)$ on a network $\mc G=(\mc V,\mc E,W)$ with external field $h(t)$ is a continuous-time inhomegeneous Markov chain $X(t)$, whereby agents $i$ in $\mc V$ get activated at the ticking of independent rate-$1$ Poisson clocks and, when activated at time $t\ge0$, they modify their state according to the update rule \eqref{dynamics}. 

We shall denote by $\Lambda(t)$ in $\R^{\mc X\times\mc X}$ the transition rate matrix of the continuous-time Markov chain $X(t)$, whose entries $\Lambda_{xy}(t)$ stand for the transition rates from configuration $x$ in $\mc X$ to configuration $y$ in $\mc X$ at time $t\ge0$. Notice that $\Lambda(t)$ depends on the external field $h(t)$ and for this reason it is time-varying.  We have that $\Lambda_{xy}(t)=0$ whenever $x$ and $y$ differ in more than one entry, reflecting the fact that with probability $1$ no two agents will modify their action simultaneously. On the other hand, if there exists $i$ in $\mc V$ such that $x_{-i}=y_{-i}$ and $x_{i}\ne y_i$, then 
$$\Lambda_{xy}(t)=\left\{\ba{lcl}1&\se&y_i(\sum_jW_{ij}x_i+h_i(t))>0\\[3pt]
0&\se&y_i(\sum_jW_{ij}x_i+h_i(t))\le0\,.\ea\right.$$ 
Finally, the diagonal entries of $\Lambda(t)$ are nonpositive and such that every row sum is zero, i.e., 
$\Lambda_{xx}(t)=-\sum_{y\ne x}\Lambda_{xy}(t)$.

Notice that the form of the update rule \eqref{dynamics} of the ATV-LTD implies that the following uniform bounds hold true for the transition rates of $X(t)$ at any time $t\ge0$:  
\be\label{estim} \Lambda_{xy}(t)>0\;\Rightarrow\; \Lambda_{xy}(t)\geq 1\,,\qquad \forall x,y\in\mc X\,,\ee 
\be\label{estim-2}\sum_{y\ne x}\Lambda_{xy}(t)\leq n\,,\qquad \forall x\in\mc X\,,\ee
where we recall that $n=\abs{\mc V}$ is the number of agents. 

(i) For every initial profile $X(0)=x^{(0)}$, Theorem \ref{theo:robustreachable} (i) guarantees the existence of an $\mc H$-robust $I$-path $(x^{(0)},x^{(1)},\ldots x^{(l)})$ 
of length  $l\leq n$ from $x^{(0)}$ to the set of consensus configurations $\{\pm\1\}$. 
Consider now the discrete-time jump chain \cite[p.~87]{norris1998markov} 
  associated to $X(t)$, defined by 
$$Y(k)=X(T_k)\,,\qquad k=0,1,\ldots$$
where $0=T_0< T_1< T_2<\ldots$ are the random times when the value of $X(t)$ changes. 
Using \eqref{estim} and \eqref{estim-2}, we can estimate the probability that the ATV-LTD follows this path at some time as follows  
$$\P\left(Y_{s+1}=x^{(1)},\dots, Y_{s+l}=x^{(l)}\Big|\, T_s=t, Y_s=x^{(0)}\right)\geq 1/n^{l}\,,$$
for every $s\ge0$. This implies that
$$\P(Y_{s+l}\in\{\pm\1\}\,|\, T_s=t,\,X_s=x_0)\geq1/n^l\ge 1/n^n\,,$$
for every $x_0$ in $\mc X$, $t\ge0$ and $s\ge0$.
A standard induction argument now yields that, for every initial condition $x^{(0)}$ in $\mc X$ and  for every $h=1,2,\dots$,
$$\P(Y_{s}\not\in\{\pm\1\}\,\forall s=0,\ldots,hn\,|\, X(0)=x^{(0)})\leq (1-1/n)^{hn}\,.$$
Since the bound above is uniform with respect to the initial configuration $x^{(0)}$ in $\mc X$, we get that 
$$\P(Y_{s}\not\in\{\pm\1\}\,\forall s=0,\ldots,hn)\leq (1-1/n)^{hn}\,.$$
Let now 
$T_{\pm\1}=\inf\{t\ge0:\,X(t)\in\{\pm\1\}\}$
be the (possibly infinite) first time that $X(t)$ is a consensus configuration.  
Then, 
$$\begin{array}{rcl}\P(T_{\pm\1}<+\infty)&=&1-\lim\limits_{t\to+\infty}\P(X(t)\ne\pm\1)\\
&=&1-\lim\limits_{h\to+\infty}\P(Y_{s}\ne\pm\1\,\forall s=0,\ldots,hn)\\
&\geq&1-\lim\limits_{h\to+\infty} (1-1/n)^{hn}\\
&=&1\,,\end{array}$$
thus proving that, with probability $1$, the set of consensus configurations $\{\pm\1\}$ is reached in finite time.

(ii)-(iii) We prove the two results at once. First, observe that the assumptions on $w$, $h^+$, and $h^-$ imply that in case (iii) $a\1$ is a strict equilibrium and in case (ii) $+\1$ and $-\1$ are both strict equilibria, for every $h^-\le h\le h^+$. Put $\mc X^*=\{a\1\}$ in the former case and $\mc X^*=\{\pm\1\}$ in the latter. Then, in both cases, $\Lambda_{xy}(t)=0$ for every $x\in\mc X^*$, $y\ne x$, and $t\ge0$, so that every  $x^*$ in $\mc X^*$ is an absorbing configuration for $X(t)$. In case (ii), \eqref{eq:caseii} then follows directly from point (i). 
In case (iii) instead, one can apply Theorem \ref{theo:robustreachable} (iii) and argue as in the proof of point (i) to show that 
the all-$a$ configuration $a\1$ is reached in finite time with probability $a$, i.e., $T_{a\1}=\inf\{t\ge0:\,X(t)=a\1\}$ satisfies 
$$\P(T_{a\1}<+\infty)=1\,.$$
Since configuration $a\1$ is absorbing in this case, \eqref{eq:caseii} follows. 
%

(iv) By assumption, there exist two players $i$ and $j$ in $\mc V$ such that $w_i< h_i^+$ and $w_j<-h_j^-$.  
Since $i\in\mc S_+(h^+)$ and $\mc G$ is $\mc H$-robustly indecomposable, Corollary \ref{coro:stability} implies that $\mc X^*_{h^+}=\{+\1\}$ is I-stable for the coordination game on $\mc G$ with external field $h^+$. This implies that, for every $\tau>0$, the ATV-LTD on $\mc G$ with  an external field $h(t)$  such that $h(t)=h^+$ for all $t$ in $[0,\tau)$, is such that  
$$\alpha_+=\min_{x\in\mc X}\P(X(\tau)=+\1|(X(0)=x))>0\,.$$ 
Analogously, since $j\in\mc S_-(h^-)$ and $\mc G$ is $\mc H$-robustly indecomposable, we get that the coordination dynamics with an external field that is constant $h(t)=h^-$ in the interval $[0,\tau)$ is such that  
$$\alpha_-=\min_{x\in\mc X}\P(X(\tau^-)=-\1|(X(0)=x))>0\,.$$ 
For the ATV-LTD with periodic piece-wise constant external field defined as follows
$$h(t)=\left\{\ba{lcll}h^+&\se& 2k\tau\le t<(2k+1)\tau&k\in\Z_+\\[5pt] h^-&\se&(2k+1)\tau\le t<(2k+2)\tau&k\in\Z_+\,,\ea\right.$$
we then have that 
$$\P(X((2k+1)\tau)=+\1,\,X((2k+2)\tau)=-\1|X(2k\tau)=x)\ge\alpha\,,$$
for every $k\in\Z_+$ and $x\in\mc X$, where 
$$\alpha=\alpha_+\alpha_->0\,.$$
It then follows that with probability $1$ there exist infinitely many nonnegative integer values of $k$ such that 
$$X((2k+1)\tau)=+\1\,,\qquad X((2k+2)\tau)=-\1\,$$ thus proving that $X(t)$ keeps fluctuating forever. 
\qed\medskip


\section{Conclusion}\label{sec:conclusion}
We have studied asynchronous time-varying linear threshold dynamics on general weighted directed networks of interacting agents, equipped with an external field modeling exogenous interventions or individual biases towards specific actions. We have proved necessary and sufficient conditions for global stability of consensus equilibria, robustly with respect to the (constant or time-varying) external field.

A key step in our analysis has consisted in the introduction of novel robust notions of improvement and best response paths. Our analysis has strongly relied on super-modularity of coordination games, but also their peculiar threshold structure of best response correspondences.  Extension of such concepts and results to more general super-modular games is a challenging problem that deserves further investigation. 

\appendices

\section{Proof of Proposition \ref{prop:indecomposable-necessary}}\label{sec:proof-prop-indecomposable-necessary}
That the network $\mc G$ is not $(h^-,h^+)$-indecomposable means that there exists a nontrivial binary partition of the node set as in \eqref{eq:binary-partition} such that 
$$w_i^{s}-w_i^{-s} + sh_i^{s}\ge 0\,,$$
for every $i$ in $\mc V_s$ and $s$ in $\{-,+\}$. 
Notice that the above can be rewritten as 
\be\label{non-dec}s(w_i^+-w_i^-+h_i^s)\,.\ee 
Now, let $h^*$ in $\R^{\mc V}$  be a vector with entries 
$$h^*_i=\left\{\ba{lcl}h^-_i&\se&i\in\mc V_-\\h^+_i&\se&i\in\mc V_+\,,\ea\right.$$
and let $x^*$ in $\mc X$ be a configuration with entries 
$$x^*_i=\left\{\ba{lcl}-1&\se&i\in\mc V_-\\+1&\se&i\in\mc V_+\,.\ea\right.$$
Clearly, $h^-\le h^*\le h^+$, so that $h$ belongs to $\mc H$. Moreover, the fact that $\mc V_-\ne\emptyset\ne\mc V_+$ and $\mc V_-\ne\mc V\ne\mc V_+$ implies that $x^*\ne\pm-\1$ is a coexistent configuration. 
Equation \eqref{non-dec} then implies that 
\be\label{fixed-point}x_i^*\left(\sum\nolimits_jW_{ij}x_j^*+h_i^*\right)=
x_i^*\left(w_i^+-w_i^-+h_i^*\right)
\ge0\,,\ee
for every $i$ in $\mc V$. Now, let $X(t)$ evolve according to the LTD on $\mc G$ with constant external field $h(t)=h^*$ and initial configuration $X(0)=x^*$.   
It then follows from \eqref{dynamics} and \eqref{fixed-point} that $X(t)=x^*$ for every $t\ge0$, thus proving the claim. \qed

\section{Properties of finite super-modular games}\label{sec:proof-supermodular}

In this Appendix, we prove the technical results of Section \ref{sec:supermodular} on super-modular games with binary actions. Throughout, we shall assume to have fixed a super-modular game with finite player set $\mc V$ and configuration space $\mc X=\{\pm1\}^{\mc V}$.

First, we state the following direct, though crucial, consequence of the increasing difference property \eqref{increasing-differences}.
\begin{lemma}\label{lemma:basic-supermod} For every player $i$ in $\mc V$, both $\mc B_i^+(x_{-i})$ and $\mc B_i^-(x_{-i})$ are monotone nondecreasing in $x_{-i}$.
\end{lemma}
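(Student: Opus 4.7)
The plan is to reduce the claim to monotonicity of the scalar marginal-utility function of player $i$.

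Concretely, I would introduce $\phi_i(x_{-i}) = u_i(+1,x_{-i}) - u_i(-1,x_{-i})$. Then the increasing-difference property \eqref{increasing-differences} is, verbatim, the statement that $\phi_i : \mc X_{-i} \to \R$ is monotone nondecreasing in $x_{-i}$. Next, since the action set $\mc A = \{\pm 1\}$ is binary, the best response correspondence admits the explicit description
$$
+1 \in \mc B_i(x_{-i}) \;\Longleftrightarrow\; \phi_i(x_{-i}) \ge 0, \qquad -1 \in \mc B_i(x_{-i}) \;\Longleftrightarrow\; \phi_i(x_{-i}) \le 0\,.
$$
In particular, writing $\mc B_i^+(x_{-i}) = \max \mc B_i(x_{-i})$ and $\mc B_i^-(x_{-i}) = \min \mc B_i(x_{-i})$, one has $\mc B_i^+(x_{-i}) = +1$ iff $\phi_i(x_{-i}) \ge 0$ (else $-1$) and $\mc B_i^-(x_{-i}) = -1$ iff $\phi_i(x_{-i}) \le 0$ (else $+1$).

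Now suppose $x_{-i} \le y_{-i}$. For monotonicity of $\mc B_i^+$: if $\mc B_i^+(x_{-i}) = -1$, there is nothing to check; if $\mc B_i^+(x_{-i}) = +1$, then $\phi_i(x_{-i}) \ge 0$, and monotonicity of $\phi_i$ gives $\phi_i(y_{-i}) \ge 0$, hence $\mc B_i^+(y_{-i}) = +1$. Symmetrically, for $\mc B_i^-$: if $\mc B_i^-(y_{-i}) = +1$, there is nothing to check; otherwise $\phi_i(y_{-i}) \le 0$ and then $\phi_i(x_{-i}) \le \phi_i(y_{-i}) \le 0$ forces $\mc B_i^-(x_{-i}) = -1$, giving $\mc B_i^-(x_{-i}) \le \mc B_i^-(y_{-i})$.

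There is really no serious obstacle here: the only subtle point is the tie case $\phi_i = 0$, which is handled cleanly because $\mc B_i^+$ and $\mc B_i^-$ are defined via sup/inf, so the characterization via $\phi_i \ge 0$ versus $\phi_i \le 0$ is inclusive of equality. The proof fits in a few lines and is essentially the specialization of Topkis's monotone selection theorem to the two-point lattice $\{\pm 1\}$.
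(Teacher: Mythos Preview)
Your argument is correct and is precisely the straightforward verification the paper has in mind: the paper does not actually write out a proof, stating the lemma only as a ``direct, though crucial, consequence of the increasing difference property \eqref{increasing-differences}.'' Your reduction via the marginal-utility function $\phi_i$ and the sign characterization of $\mc B_i^\pm$ on the two-point lattice is exactly how one unpacks that remark.
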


We now move on with the next result characterizing properties of monotone and anti-monotone I- and BR-paths of super-modular games. 

\begin{lemma}\label{lemma:suprmodular-paths} For $x, y, z$ in $\mc X$, the following relations hold true: 
\begin{enumerate}
\item[(i)] $x\uparrow y, x\uparrow z\Rightarrow x\uparrow (y\vee z)$
\item[(ii)] $x\downarrow y, x\downarrow z\Rightarrow x\downarrow (y\wedge z)$
\item[(iii)] $x\uparrow y, x'\geq x\Rightarrow x'\uparrow (y\vee x')$
\item[(iv)] $x\downarrow y, x'\leq x \Rightarrow x'\downarrow (y\wedge x')$
\item[(v)] $x\to y \Rightarrow x\uparrow y',\; x\downarrow y''$ for some $y''\leq y\leq y'$.
\end{enumerate}
Moreover, the analogous results hold true for the BR-case. 
\end{lemma}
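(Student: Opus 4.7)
My plan is to derive all five items from a single \emph{trimming} construction driven by the increasing-differences property (Lemma~\ref{lemma:basic-supermod}). Given an I-path $(x^{(0)},\ldots,x^{(l)})$ with active players $i_1,\ldots,i_l$, the upward-trimmed path is defined recursively by $\tilde x^{(0)}=x^{(0)}$ and, at step~$k$: if $x^{(k-1)}_{i_k}=+1$ (downward original step), then $\tilde x^{(k)}=\tilde x^{(k-1)}$; if $x^{(k-1)}_{i_k}=-1$ (upward original step), then we flip $\tilde x^{(k-1)}_{i_k}$ to $+1$ when it equals $-1$, and leave $\tilde x^{(k)}=\tilde x^{(k-1)}$ otherwise. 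A routine three-case induction on~$k$ establishes the key \emph{domination} $\tilde x^{(k)}\ge x^{(k)}$ pointwise. When step~$k$ is actually executed in the trimmed path, $\tilde x^{(k-1)}_{i_k}=x^{(k-1)}_{i_k}=-1$ together with the domination forces $\tilde x^{(k-1)}_{-i_k}\ge x^{(k-1)}_{-i_k}$, so Lemma~\ref{lemma:basic-supermod} gives
\[u_{i_k}(+1,\tilde x^{(k-1)}_{-i_k})-u_{i_k}(-1,\tilde x^{(k-1)}_{-i_k})\ \ge\ u_{i_k}(+1,x^{(k-1)}_{-i_k})-u_{i_k}(-1,x^{(k-1)}_{-i_k})\ >\ 0,\]
and the executed step is still strictly improving. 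This yields a monotone I-path from $x$ to some $y'=\tilde x^{(l)}\ge y$, proving the first half of~(v); the downward-trimmed construction is symmetric and yields a $y''\le y$ with $x\downarrow y''$.

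For~(i), I would concatenate the given $x\uparrow y$ with the trimmed continuation of $x\uparrow z$: execute the switches of $x\uparrow z$ in order, starting from~$y$ instead of~$x$, skipping each player already at~$+1$. The analogous domination $\bar x^{(r)}\ge \hat x^{(r)}$ of the trimmed continuation over the original configurations $\hat x^{(r)}$ of $x\uparrow z$ follows by the same induction (the base case $\bar x^{(0)}=y\ge x=\hat x^{(0)}$ uses $x\uparrow y\Rightarrow y\ge x$), and a coordinate check shows the endpoint is $y\vee z$. Item~(iii) is the same argument with the prefix path collapsed to the single configuration~$x'\ge x$: trim $x\uparrow y$ to skip players already at $+1$ in~$x'$, start the execution from~$x'$, and the endpoint is $y\vee x'$. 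Items~(ii) and~(iv) are the exact mirror images, built from the downward-trimming construction and using the $\wedge$ operation. For the BR-case, I would replace every strict inequality $>0$ by $\ge 0$; since increasing differences transfers weak inequalities as well, the same trimming argument goes through verbatim, producing monotone and anti-monotone BR-paths in items~(v), and BR-path concatenations in items~(i)--(iv).

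The main obstacle I anticipate is not any single step but the bookkeeping: verifying the domination $\tilde x^{(k)}\ge x^{(k)}$ (and its analogues $\bar x^{(r)}\ge \hat x^{(r)}$ for items~(i) and~(iii), together with the reversed inequalities for~(ii) and~(iv)) requires carefully enumerating the three ways a step can be treated in the trimmed path (executed upward switch, skipped upward switch because the active player is already at $+1$, and skipped downward switch) and checking the inequality coordinate-wise in each case. Once this bookkeeping is in place, Lemma~\ref{lemma:basic-supermod} does all the work of propagating strict (or weak) improvements from the original path to the trimmed one, and the five claims follow uniformly.
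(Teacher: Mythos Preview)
Your proof is correct. For items (i)--(iv) your ``trimming'' construction is essentially the paper's own argument in different packaging: the paper also concatenates the path to $y$ with the subsequence of the path to $z$ consisting of players not already flipped, and uses the domination $\bar x^{(s)}\ge \hat x^{(s)}$ together with increasing differences exactly as you do; likewise for (iii) it takes the $\vee x'$ of the original path and drops the redundant steps.

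The genuine difference is in (v). The paper does \emph{not} trim a general I-path directly; instead it inducts on the length of a shortest I-path from $x$ to $y$, splitting off a first monotone or anti-monotone segment $x\uparrow z$ or $x\downarrow z$ and then invoking (iii) (or its mirror) on the inductive hypothesis for $z\to y$ to lift the resulting $y'\ge y$ back to a monotone I-path from $x$. Your approach bypasses this induction entirely: by running the trimming on the full I-path in one pass and maintaining the pointwise domination $\tilde x^{(k)}\ge x^{(k)}$ through all three step types, you get $x\uparrow y'\ge y$ directly. This is cleaner and makes the dependence on Lemma~\ref{lemma:basic-supermod} more transparent, and it also makes the BR-extension immediate (just relax $>0$ to $\ge0$), whereas the paper has to note that each of (i)--(v) admits a parallel BR proof. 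The paper's route, on the other hand, makes the logical dependency $(\text{iii})\Rightarrow(\text{v})$ explicit, which is conceptually useful elsewhere. Both are valid; your unification is a mild improvement in economy.
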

\begin{proof}
We prove (i). Let $(y^{(0)},y^{(1)},\ldots y^{(l)})$ and $(z^{(0)},z^{(1)},\ldots z^{(r)})$ be two monotone I-paths from $x$ to, respectively, $y$ and $z$. Let $(i_1,\dots ,i_l)$ and $(j_1,\dots , j_r)$ be the two corresponding sequences of active players. Let $(j_{s_1}, \dots ,j_{s_k})$ be the subsequence of $(j_1,\dots , j_r)$ consisting of exactly those players that are not in the sequence $(i_1,\dots ,i_l)$. We claim that the sequence $(x^{(0)},\dots , x^{(l+r)})$ defined by
\begin{itemize}
\item $x^{(h)}=y^{(h)}$ for $h=0,\dots , l$,
\item $x^{(h+l)}=x^{(h+l-1)}+\delta^{j_{s_h}}$ for $h=1,\dots ,k$
\end{itemize}
is a monotone I-path from $x$ to $y\vee z$. By construction, the path is admissible and monotone. Moreover,  $x^{(h+l-1)}\geq z^{(s_h-1)}$ for every $h=1,\dots , k$.  Since $\{+1\}=\mc B_{j_{s_h}}(z^{(s_h-1)})$, by Lemma \ref{lemma:basic-supermod}, $\{+1\}= \mc B_{j_{s_h}}(x^{(h-1+l)})$. This implies that it is an I-path. Proof of (ii) is completely analogous. 

We prove (iii). Let $(x^{(0)},x^{(1)},\ldots ,x^{(l)})$ be a monotone I-path from $x$ to $y$ with set of active players $(i_1,\dots ,i_l)$. Consider the subsequence $(i_{s_1}, \dots ,i_{s_k})$ of those players for which $x$ and $x'$ coincide. Then, $(x^{(0)}\vee x',x^{(i_{s_1})}\vee x',\ldots ,x^{(i_{s_k})}\vee x')$ is a monotone I-path from $x'$ to $y\vee x'$.  Indeed, notice that, by construction, $x^{(i_{s_k})}\vee x'=x^{(l)}\vee x'=y\vee x'$. We only need to show that it is an I-path. 
Since  
$(x^{(i_{s_h})}\vee x')_{-i_{s_h}}\geq x^{(i_{s_h})}_{-i_{s_h}}$ and using the increasing difference property (\ref{increasing-differences}) we obtain that
$$\begin{array}{rcl}0&\le&u_{i_{s_h}}(x^{(i_{s_h})})-u_{i_{s_{h}}}(x^{(i_{s_{h}-1})})\\&\le&u_{i_{s_h}}(x^{(i_{s_h})}\vee x')-u_{i_{s_{h}}}(x^{(i_{s_{h-1}})}\vee x')\,.\end{array}$$
The proof of (iii) is complete. The proof of (iv) is completely analogous. 

(v): If $x\uparrow y$, then $y\vee x=y$. If $x\downarrow y$, then $x\geq y$ and $y\vee x=x$. In both cases the result is evident. The general case can be proven by induction on the length of a minimal I-path from $x$ to $y$. Indeed, by definition of an I-path, for sure we can find an intermediate configuration $z$ for which one of the two possible cases hold: $x\uparrow z\to y$ or $x\downarrow z\to y$. In the first case, using the induction hypothesis $z\uparrow y'\geq y$, we obtain by transitivity that $x\uparrow y'\geq y$. In the second case, using the induction hypothesis $z\uparrow y'\geq y$ and point (iii), we obtain that $x\uparrow (x'\vee y')\geq y$. Similarly we prove the other relation.

Finally, the proofs for the analogous results in the BR-case can be obtained by the same identical arguments.
\end{proof}

The following result gathers some elementary key facts connected to the maps \eqref{4maps}. 

\begin{lemma}\label{lemma:f+-} The following facts hold true:
\begin{enumerate}
\item[(i)] $f^+$, $f^-$, $g^{+}$, $g^{-}$ are monotone nondecreasing maps;
\item[(ii)] a configuration $x$ in $\mc X$ is an equilibrium if and only if $$f^{+}(x)=x=f^{-}(x)\,;$$
\item[(iii)] a configuration $x$ in $\mc X$ is a strict equilibrium if and only if $$g^{+}(x)=x=g^{-}(x)\,.$$
\end{enumerate}
\end{lemma}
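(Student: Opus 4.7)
All three parts follow by carefully unpacking the definitions of the maps $f^\pm$ and $g^\pm$ in \eqref{4maps} and combining them with the structural results of Lemma \ref{lemma:suprmodular-paths}. For part (i), I plan to verify that $f^+$ is monotone nondecreasing and observe that the arguments for $f^-$, $g^+$, $g^-$ are entirely symmetric. First, Lemma \ref{lemma:suprmodular-paths} (i) implies that the set $\{y \in \mc X : x \uparrow y\}$ is closed under binary joins, hence, since $\mc X$ is finite, it contains its own supremum; in particular, $x \uparrow f^+(x)$. Given any $x \leq x'$ in $\mc X$, applying Lemma \ref{lemma:suprmodular-paths} (iii) to $x \uparrow f^+(x)$ yields $x' \uparrow (f^+(x) \vee x')$, whence $f^+(x) \leq f^+(x) \vee x' \leq f^+(x')$. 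The argument for $f^-$ dually uses parts (ii) and (iv) of Lemma \ref{lemma:suprmodular-paths}, while the arguments for $g^\pm$ replicate the same pattern using the BR-versions of these facts stated at the end of Lemma \ref{lemma:suprmodular-paths}.

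For part (ii), the key observation is that $f^+(x) = x$ is equivalent to the nonexistence of any length-$1$ monotone I-path out of $x$. Indeed, if $f^+(x) \gneq x$ then there is a monotone I-path $(x^{(0)}, x^{(1)}, \ldots, x^{(l)})$ from $x$ to $f^+(x)$ with $l \geq 1$, and its first step $(x, x^{(1)})$ is itself a length-$1$ monotone I-path; conversely, any length-$1$ monotone I-path from $x$ to some $y \gneq x$ forces $f^+(x) \geq y \gneq x$. Now, a length-$1$ monotone I-path at $x$ exists precisely when some player $i$ with $x_i = -1$ strictly prefers switching to $+1$, so $f^+(x) = x$ is equivalent to the condition $u_i(x) \geq u_i(+1, x_{-i})$ for every $i$ with $x_i = -1$. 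By the symmetric argument, $f^-(x) = x$ is equivalent to $u_i(x) \geq u_i(-1, x_{-i})$ for every $i$ with $x_i = +1$. Conjoining the two yields $u_i(x) \geq u_i(-x_i, x_{-i})$ for every $i \in \mc V$, which in the binary action setting is precisely the Nash equilibrium condition $x_i \in \mc B_i(x_{-i})$ for all $i$.

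Part (iii) follows along the same route but with weak improvements replacing strict ones. The equality $g^+(x) = x$ rules out every length-$1$ monotone BR-path at $x$, i.e., no player $i$ with $x_i = -1$ weakly prefers switching to $+1$; equivalently, $u_i(x) > u_i(+1, x_{-i})$ strictly for every such $i$. Together with the analogous consequence of $g^-(x) = x$, this amounts to $\mc B_i(x_{-i}) = \{x_i\}$ for every $i \in \mc V$, which is strictness of the equilibrium; the converse direction is immediate since strictness precludes even weak one-step improvements. The only subtlety worth flagging is this strict-versus-weak duality: I-paths require strict utility improvement at each step, so $f^\pm$ are identities exactly at equilibria, whereas BR-paths permit weak improvements, so $g^\pm$ are identities exactly at strict equilibria. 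Once this duality is kept clearly in view, the whole lemma reduces to a direct translation of the path definitions.
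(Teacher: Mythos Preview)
Your proof is correct and follows essentially the same approach as the paper. For part (i) your argument is identical to the paper's (use Lemma \ref{lemma:suprmodular-paths}(i) to get $x\uparrow f^+(x)$, then Lemma \ref{lemma:suprmodular-paths}(iii) to lift this to $x'$); for parts (ii) and (iii) the paper simply remarks that the statements ``coincide with the definitions of equilibrium and, respectively, strict equilibrium,'' whereas you spell out the one-step characterization explicitly, which is the same underlying observation made more carefully.
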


\begin{proof}
(i): It follows from Lemma \ref{lemma:suprmodular-paths} (i) that $x\uparrow f^+(x)$ for every configuration $x$ in $\mc X$. If $x'\geq x$, Lemma \ref{lemma:suprmodular-paths} (iii) yields $x'\uparrow f^+(x)\vee x'$. Therefore $f^+(x')\geq  f^+(x)\vee x'\geq  f^+(x)$. Proofs for 
$f^-$, $g^{+}$, $g^{-}$ are analogous. 

(ii) and (iii) coincide with the definitions of equilibrium and, respectively, strict equilibrium.
\end{proof}

We are now ready to prove Proposition \ref{prop:stability-supermodular}, which we restate below for the reader's convenience.

\setcounter{proposition}{1}
\begin{proposition} Consider a finite super-modular game with binary action sets. Then, for every configuration $x$ in $\mc X$,
\begin{enumerate}
\item[(i)]  $f^-(f^{+}(x))\in\mc X^*$ and $f^+(f^{-}(x))\in\mc X^*$ are, respectively, the greatest and least equilibria that are I-reachable from $x$;
\item[(ii)] $f^-(g^{+}(x))\in\mc X^*$ and $f^+(g^{-}(x))\in\mc X^*$ are, respectively, the greatest and least  equilibria BR-reachable from $x$; 
\item[(iii)]if $x\in\mc X^*$, then $g^{+}(x),g^{-}(x)\in\mc X^*$ are, respectively, the greatest and least equilibria BR-reachable from $x$. 
\end{enumerate}
Moreover,
\begin{enumerate}
\item[(iv)] the set of equilibria $\mc X^*$ is a complete lattice and is globally stable;
\item[(v)] the two configurations $$\begin{array}{l} \ul x^*=f^+(g^{-}(-\1))=f^+(-\1)\,,\\[5pt]\ov x^*=f^-(g^{+}(+\1))=f^-(+\1)\,,\end{array}$$
are, respectively, the least and greatest equilibria;
\item[(vi)] for two equilibria $x$ and $y$ in $\mc X^*$, $f^{+}(x\vee y)$ and $f^{-}(x\wedge y)$ are, respectively, the least and the greatest equilibria that are, respectively, above and below both $x$ and $y$;
\item[(vii)] if $\ul x^*= \ov x^*=x^*$, then  $\mc X^*=\{x^*\}$ is globally BR-stable.
\end{enumerate}
\end{proposition}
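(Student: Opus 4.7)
The plan is to exploit two structural features repeatedly: the monotonicity and idempotence of the maps $f^{\pm}, g^{\pm}$ (both essentially immediate from their definition as extreme endpoints of maximal monotone or anti-monotone paths), together with the increasing-difference property used to transport best-response inequalities along comparable configurations.

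For (i), set $y = f^-(f^{+}(x))$ and first verify $y \in \mc X^*$. The key observation is that at $f^{+}(x)$ no player in state $-1$ can strictly improve by switching (else the monotone I-path from $x$ would be extendable), and by increasing differences this non-improvement persists at any $y \le f^{+}(x)$: directly for players $i$ with $y_i = f^{+}(x)_i = -1$, and by comparing with the switching step of the anti-monotone I-path from $f^{+}(x)$ to $y$ for players who switched $+1 \to -1$. Combined with the corresponding property at the top (built into $f^-$) for players in state $+1$, this gives $y \in \mc X^*$. For maximality: if $x \to z$ with $z \in \mc X^*$, Lemma \ref{lemma:suprmodular-paths}(v) yields $y' \ge z$ with $x \uparrow y'$, so $z \le y' \le f^{+}(x)$; monotonicity of $f^-$ and $z = f^-(z)$ then give $z \le f^-(f^{+}(x)) = y$. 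The least-equilibrium statement and (ii) follow by dual arguments, using that maximality of $g^{+}(x)$ in the monotone BR-order precludes even weakly improving upward switches from state $-1$.

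Part (iii) comes from a direct check that $x \in \mc X^*$ implies $g^{+}(x) \in \mc X^*$: players with state $-1$ at $g^{+}(x)$ strictly prefer $-1$ (same argument as above), while players with state $+1$ at $g^{+}(x)$ have $+1$ as a best response either because $x_i = +1$ already (preserved upward via increasing differences since $g^{+}(x)_{-i} \ge x_{-i}$) or because the switch $-1 \to +1$ occurred along the monotone BR-path at a smaller configuration (again preserved via increasing differences). Then $g^{+}(x)$ is an equilibrium BR-reachable from $x$, so by (ii) it coincides with $f^-(g^{+}(x))$, the greatest such. Part (v) is then an easy corollary: $g^{-}(-\1) = -\1$ trivially, and idempotence $f^{+}(f^{+}(-\1)) = f^{+}(-\1)$ together with Lemma \ref{lemma:f+-}(ii) gives $f^{+}(-\1) \in \mc X^*$; any equilibrium $z = f^{+}(z)$ satisfies $z \ge f^{+}(-\1)$ by monotonicity of $f^{+}$, so $f^{+}(-\1) = \ul x^*$, and dually for $\ov x^*$.

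For (iv), nonemptiness is (v), I-invariance of $\mc X^*$ is trivial (no strict improvement is available at an equilibrium), and global I-reachability is (i); together these give global I-stability. The complete-lattice structure is essentially (vi): for $x, y \in \mc X^*$, $f^{+}(x \vee y)$ is a fixed point of the idempotent monotone $f^{+}$ dominating both, and any equilibrium $z \ge x, y$ satisfies $z = f^{+}(z) \ge f^{+}(x \vee y)$ by monotonicity; $f^{-}(x \wedge y)$ is handled dually. Finally for (vii), $\mc X^* = \{x^*\}$ is immediate from $\ul x^* = \ov x^* = x^*$, and by (iii) $g^{\pm}(x^*) = x^*$; the BR-analogue of Lemma \ref{lemma:suprmodular-paths}(v) then forces any $z$ with $x^* \rightharpoonup z$ to satisfy $x^* \le z \le x^*$, giving BR-invariance, which combined with (ii) yields global BR-stability. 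The main obstacle I expect is the increasing-differences bookkeeping in (i)--(iii): one must correctly identify the comparison configuration (either an endpoint $f^{\pm}(x)$ or the switching step of the path) along which the relevant best-response sign is already known, and then propagate it via super-modularity; everything else is a routine combination of Tarski-style monotonicity, idempotence of the extremal-path maps, and Lemma \ref{lemma:suprmodular-paths}.
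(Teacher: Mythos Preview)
Your overall plan is sound and closely tracks the paper's own argument: the direct increasing-differences verification you use for (i) and (iii) is essentially a reorganization of the paper's device of showing that the set $\mc X^+=\{x:f^+(x)=x\}$ is closed under anti-monotone I-paths (and dually for $\mc X^-$), after which $\mc X^*=\mc X^+\cap\mc X^-$ does the work. The maximality arguments via Lemma \ref{lemma:suprmodular-paths}(v) and monotonicity of $f^{\pm}$ are exactly those in the paper, as is the derivation of (vii) from (iii).

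Two places in your sketch are not quite right as written. In (v), ``idempotence $f^{+}(f^{+}(-\1)) = f^{+}(-\1)$ together with Lemma \ref{lemma:f+-}(ii)'' does \emph{not} yield $f^{+}(-\1)\in\mc X^*$: Lemma \ref{lemma:f+-}(ii) requires both $f^+(x)=x$ and $f^-(x)=x$, and idempotence of $f^+$ gives only the first. The clean fix is to note $f^-(-\1)=-\1$ and invoke your part (i), so that $f^+(-\1)=f^+(f^-(-\1))\in\mc X^*$. In (vi), you assert that $f^{+}(x\vee y)$ is the join in $\mc X^*$ but never verify it lies in $\mc X^*$; being a fixed point of $f^+$ alone is not enough. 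The paper's trick is to first show $f^-(x\vee y)=x\vee y$ (from $x=f^-(x)\le f^-(x\vee y)$ and $y=f^-(y)\le f^-(x\vee y)$, hence $x\vee y\le f^-(x\vee y)\le x\vee y$), whence $f^+(x\vee y)=f^+(f^-(x\vee y))\in\mc X^*$ by (i). Your ``preserved upward'' technique from (iii) would also close this gap, but it needs to be stated.
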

\begin{proof}
(i)  Put  $\mc X^+=\{x\in\mc X\,|\, f^+(x)=x\}$. We notice that for every $x\in\mc X$, $f^+(x)\in\mc X^+$. Moreover, $\mc X^+$ is closed with respect to anti-monotone I-path. Namely, if $x\in\mc X^+$ and $x\downarrow y$, then also $y\in\mc X^+$. To see this, by induction, it is sufficient to prove it when $x$ and $y$ are connected by an anti-monotone I-path of length $1$, namely there exists $i\in\mc V$ such that $y_i<x_i$, $y_{-i}=x_{-i}$, and $\mc B_i(y)=\{-1\}$. If $y\not\in \mc X^+$, then it would exists $z\in\mc X$ and a player $j\in\mc V$ such that $z_j>y_j$, $z_{-j}=y_{-j}$, and $\mc B_j(y)=\{+1\}$. Evidently $j\neq i$ and from Lemma \ref{lemma:suprmodular-paths} (iii) applied to $y\uparrow z$ and $x\geq y$ we would obtain $x\uparrow x\vee z$. Since by construction $x\vee z\neq x$, this would imply that $f^+(x)\neq x$ contrarily to the assumption that $x\in\mc X^+$. Similarly, $\mc X^-=\{x\in\mc X\,|\, f^-(x)=x\}$ is closed with respect to monotone I-path. Notice that $\mc X^*=\mc X^+\cap\mc X^-$.

Consider now $y=f^-(f^{+}(x))$. Being in the image of  $f^-$, necessarily $y\in\mc X^-$. On the other hand, since $f^+(x)\in\mc X^+$, by the fact that $\mc X^+$ is closed with respect to anti-monotone I-path, we have that also $y\in \mc X^+$. Hence $y$ is a Nash. The argument for $f^+(f^{-}(x))$ is completely analogous.

If now $y$ in $\mc X^*$ is any equilibrium reachable from $x$, namely $x\to y$, by Lemma \ref{lemma:suprmodular-paths} (v) it follows that $x\uparrow y'\geq y$. By definition of $f^+(x)$ we have that $f^+(x)\geq  y'\geq y$. Therefore, by Lemma \ref{lemma:f+-} (i), we {have} that 
$$f^-(f^{+}(x))\geq f^{-}(y)=y\,,$$
with the last equality above following from Lemma \ref{lemma:f+-} (ii). Similarly, we can show that $f^+(f^{-}(x))\leq y$. This concludes the proof of point (i).

Points (ii) and (iii) can be proven similarly point (i). We omit the details for the sake of conciseness.

(v) It follows from point (ii) that the configurations $\ul x^*$ and $\ov x^*$ defined in \eqref{extremalNash} are indeed equilibria. Given any equilibrium $x$ in $\mc X^*$, let $x^-=g^{-}(x)$ and $x^+=g^{+}(x)$. Point (iii) implies that both $x^-$ and $x^+$ are equilibria, hence 
$f^+(x^-)=x^-$ and $f^-(x^+)=x^+$ by Lemma \ref{lemma:f+-} (ii). Since $x^-\le x\le x^+$, it follows from Lemma \ref{lemma:f+-} (i) that
$$\ul x^*=f^+(-\1)\le f^+(g^{-}(x))=x^-\leq x \,,$$
$$x\le x^+=f^-(g^{+}(x))\le f^-(+\1)=\ov x^*\,,$$
so that $\ul x^*\le x\le\ov x^*$. 

(vi) For $x,y$ in $\mc X^*$ we have  $f^-(x)=x$  and $f^-(y)=y$ by Lemma \ref{lemma:f+-} (ii). It then follows from Lemma \ref{lemma:f+-} (i) that 
$$x=f^-(x)\le f^-(x\vee y)\,,\qquad y=f^-(y)\le f^-(x\vee y)\,,$$ so that
$$x\vee y\le f^-(x\vee y)\le x\vee y\,.$$
Hence, $f^-(x\vee y)=x\vee y$, so that $$f^{+}(x\vee y)=f^{+}(f^{-}(x\vee y))\in\mc X^*\,.$$ On the other hand, every $z$ in $\mc X^*$ such that $z\geq x$ and $z\geq y$ is such that $z\geq x\vee y$ and thus $z=f^+(z)\geq f^{+}(x\vee y)$ proving that $f^{+}(x\vee y)$ is the smallest of the Nash above both $x$ and $y$. The claim on $f^{-}(x\wedge y)$ is completely analogous.

(iv) Point (vi) implies that the set of equilibria $\mc X^*$ is a complete lattice. By definition, $\mc X^*$ is invariant, while  global reachability follows from point (i).

(vii) Point (iii) that implies $\mc X^*=\{x^*\}$ is globally I-reachable and thus also globally BR-reachable. Moreover, notice that point (iii) implies that both $g^{-}(x^*)$ and $g^{+}(x^*)$ are equilibria, so that the assumption $\mc X^*=\{x^*\}$ implies that
$g^{-}(x^*)=x^*=g^{+}(x^*)$. Hence $\{x^*\}$ is BR-invariant.  
\end{proof}

\section{Proof of Lemma \ref{lemma:coordination}}\label{sec:proof-lemma-coordination} 
(i)  This follows directly from the equivalent form \eqref{eq:utility2} of the utility function. 

(ii) By substituting the identity $w_i=w_i^+(x)+w_i^-(x)$ into \eqref{eq:utility2}, we have that 
$$u_i(x)=x_i\left(h_i+2w_i^+(x)-w_i\right)\,,$$ 
from which \eqref{eq:BR} follows directly. 

(iii) This follows directly from point (ii). 

(iv) It immediately follows from the computation
$$u_i(1,x_{-i})-u_i(-1,x_{-i})=2\sum\nolimits_{j}W_{ij}x_j+2h_i\,,$$
and nonnegativity of the link weights $W_{ij}$ that  a network coordination game is super-modular. 

(v) It follows from point (iv) that the coordination game on $\mc G$ with external field $h$ is super-modular. Then, Proposition \ref{prop:stability-supermodular} (iv) implies the claim. \qed

\section{Proof of Lemma \ref{lemma:coordination=LTM}}\label{sec:proof-lemma-coordination=LTM} 
(i) It follows from Lemma \ref{lemma:coordination} (i), the form of the utility functions \eqref{utility:coordination} of the coordination game, and that of the update rule \eqref{dynamics} of the LTD, that $X(t)$ can have a transition from a configuration $x$ in $\mc X$ to another configuration $y$ in $\mc X$ if and only if there exists $i$ in $\mc V$ such that $x_{-i}=y_{-i}$, $y_i\ne x_i$, and $\mc B_i(y_{-i})=\{y_i\}$. It follows that $X(t)$ can undergo a finite sequence of transitions from $ x^{(0)}$ to $x^*$ if and only if $ x^{(0)}\stackrel{h}{\to}x^*$. 

(ii)  It follows from Lemma \ref{lemma:coordination} (i) and the form of the utility functions \eqref{utility:coordination} of the coordination game,  that $x_i^{\circ}\in\mc B_i( x^{(0)}_{-i})$ if and only if 
\be\label{xiBi}x_i^{\circ}\left(\sum\nolimits_{j}W_{ij}x_j^\circ+h_i\right)\ge0\,.\ee
Hence $ x^{(0)}\in\nash$ if and only if \eqref{xiBi} holds true for every $i$ in $\mc V$.
The form of the update rule \eqref{dynamics} of the LTD implies that this is the case if and only if with probability $1$ $X(t)= x^{(0)}$ for every $t\ge0$. \qed

\section{Proof of Lemma \ref{lemma:consensus}}
\label{sec:proof-lemma-consensus}
If $\ul x^*(h)=-\1$, then  there cannot be $+1$-stubborn players, i.e., $\mc S_{+1}(h)=\emptyset$, so that $h\le w$. On the other hand, if $h\le w$, then, using the threshold form of the best response in Lemma \ref{lemma:coordination} (ii), we deduce that $-\1$ is an equilibrium, namely $\ul x^*(h)=-\1$. This proves (i), while (ii) can be proven analogously.\qed

\section{Proof of Proposition \ref{prop:polarization}} \label{sec:proof-prop-polarization} 
Given any configuration $x^*$ in $\mc X$, for any player $i$ such that $x^*_i=s$,  from  \eqref{eq:utility2}  we can write that
\be\label{utility}\ba{rcl}
u_i(x^*)
&=&s(h_i+w_i^+(x^*)-w_i^{-}(x^*))\\
&=&s(h_i+sw_i^s(x^*)-sw_i^{-s}(x^*))\\
&=&sh_i+w_i^s(x^*)-w_i^{-s}(x^*)\,.
\ea\ee
We now argue as follows.
If the network coordination game on $\mc G$ with external field $h$ is polarizable, then there exists an equilibrium $x^*\ne\pm\1$. From (\ref{utility}) and Lemma \ref{lemma:coordination} (i) we derive that, for every $s$ and $i$ such that $x^*_i=s$,
$$sh_i+w_i^s(x^*)-w_i^{-s}(x^*)\geq 0\,.$$
This implies that relatively to the nontrivial binary partition $\mc V=\mc V_{x^*}^+\cup\mc V_{x^*}^-$, \eqref{eq:robustly_indec} is violated for every $i$ in $\mc V_{x^*}^s$ and $s$ in $\{\pm\}$. 
Hence, $\mc G$ is not $h$-indecomposable. 

On the other hand, if $\mc G$ is not $h$-indecomposable, then by Proposition \ref{prop:indecomposable-necessary} there exists a co-existent absorbing configuration $x^*$of the LTD on $\mc G$ with constant external field $h$. By Lemma \ref{lemma:coordination=LTM} (ii), $x^*$ in $\nashcoexistent$ is a coexistent equilibrium of the  network coordination game on $\mc G$ with external field $h$, which is then polarizable. \qed

\section{Proof of Proposition \ref{prop:biased-indecomposable}}\label{sec:proof-prop-biased-indecomposable} 
(Only if) Assume that the network coordination game is $a$-biased and unpolarizable. Then, condition (a) follows from Proposition \green{\ref{prop:regular-biased-frustrated}} (ii). To prove (b), assume by contradiction that there exists a non-empty subset $\mc R\subseteq\mc V\setminus\mc S_a(h)$ such that 
\be\label{cond-biased-indecomp}w^{\mc R}_i\ge w^{\mc V\setminus\mc R}_i+ah_i\,,\ee
for every $i$ in $\mc R$ and let $x$ in $\mc X$ be a configuration such that $x_i=a$ for every $i$ in $\mc V\setminus\mc R$ and $x_i=-a$ for every $i$ in $\mc R$.
Notice that 
\eqref{cond-biased-indecomp} and \eqref{utility} imply that  $u_i(x)\ge 0$, so that, by Lemma \ref{lemma:coordination} (i), $-a=x_i\in\mc B_i(x_{-i})$, for every $i$ in $\mc R=\mc V^{-a}_{x}$. If $a=+1$ ($a=-1$), this implies that there are no monotone (anti-monotone) I-paths of positive length starting at $x$, so that in particular 
$f^a(x)=x$. 
Then, by Proposition \ref{prop:stability-supermodular} (i), we get that 
$$x^*=f^{-a}(x)=f^{-a}(f^a(x))$$ 
is an equilibrium. Now, ntice that on the one hand $x^*_i=-a$ for every $i$ in $\mc R$ (since $x_i=-a$ and $x^*=f^{-a}(x)$), on the other hand $x^*_i=a$ for every $i$ in $\mc S_a(h)$ (since those are stubborn players). 
Hence, $x^*$ is a co-existent equilibrium, thus contradicting the assumption that the game is unpolarizable. Therefore, if the network coordination game is  $a$-biased and unpolarizable, both conditions (a) and (b) must be satisfied. 

(If) Given any player $i$ in $\mc V\setminus\mc S_a(h)$, from the application of \eqref{indecomposable-simple} with $\mc R=\{i\}$ we obtain that 
$$ah_i+ w_i\ge a h_i+ w_i^{\mc V\setminus\mc R}>w_i^{\mc R}\ge0\,. $$
Since $ah_i+ w_i>ah_i-w_i\ge0$ for every $i$ in $\mc S_a(h)$, we deduce that $ah+w\ge 0$. Together with assumption (a), by Proposition \ref{prop:regular-biased-frustrated} (ii), this yields that the game is $a$-biased. 
We finally prove that the game is unpolarizable. By contradiction, suppose there exists a co-existent equilibrium $x^*$. Necessarily $x^*_i=a$ for every $i$ in $\mc S_a(h)$. Put $\mc R=\mc V_{x^*}^{-a}$ and notice that (\ref{utility}) yields
$$\ba{rcl}0 &\leq& u_i(x^*)\\
&=&-ah_i+w_i^{-a}(x^*)-w_i^{a}(x^*)\\
&=&-ah_i+w_i^{\mc R}-w_i^{\mc V\setminus\mc R}\ea
$$
for every $i$ in $\mc R$ thus contradicting condition (b). The proof is then complete.\qed

\bibliographystyle{ieeetr}
\bibliography{bib,bib2}

\begin{IEEEbiography}
[{\includegraphics[width=1in,height=1.25in,clip,keepaspectratio]{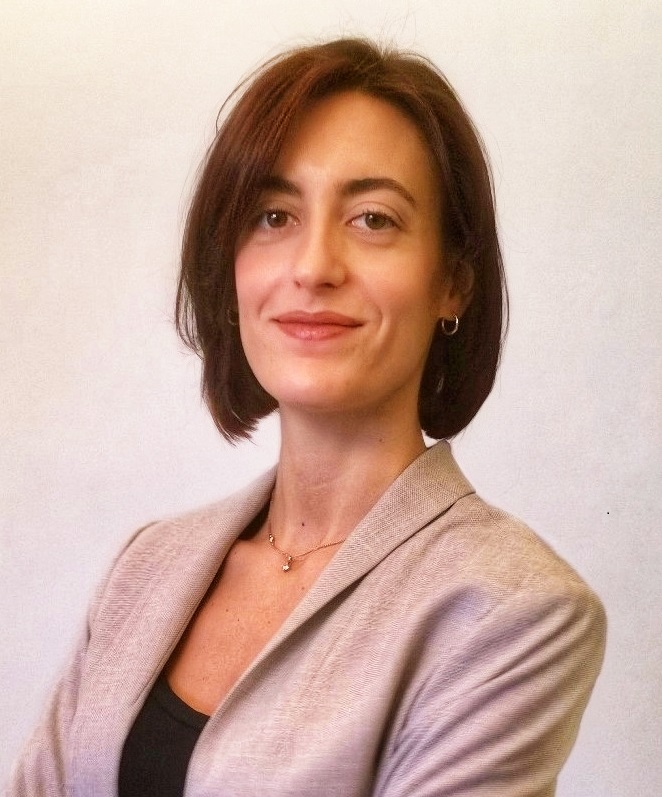}}]
{Laura Arditti} is a PhD student in Applied Mathematics at the Department  of  Mathematical  Sciences,  Politecnico di  Torino,  Italy. 
She  received the B.Sc. in Physics Engineering in 2016 and the M.S. in Mathematical Engineering in 2018, both {\it magna cum laude} from Politecnico di Torino. Her research is concentrated on game theory, its relationship with graphical models, and its applications to infrastructure, social, economic, and financial networks.

\end{IEEEbiography}

\begin{IEEEbiography}
[{\includegraphics[width=1in,height=1.25in,clip,keepaspectratio]{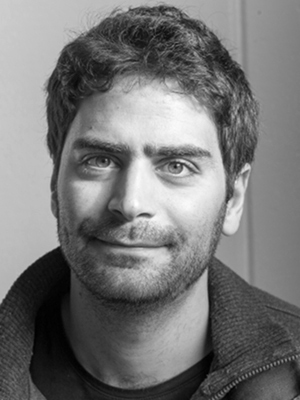}}]
{Giacomo Como}
is  a  Professor at  the Department  of  Mathematical  Sciences,  Politecnico di  Torino,  Italy,  and a Senior Lecturer  at  the  Automatic  Control  Department  of  Lund  University,  Sweden.  He  received the B.Sc., M.S., and Ph.D. degrees in Applied Mathematics  from  Politecnico  di  Torino,  in  2002,  2004, and 2008, respectively. He was a Visiting Assistant in  Research  at  Yale  University  in  2006--2007  and  a Postdoctoral  Associate  at  the  Laboratory  for  Information  and  Decision  Systems,  Massachusetts  Institute of Technology, from 2008 to 2011. 
He currently serves as Senior Editor  of the \textit{IEEE Transactions on Control of Network Systems}, as Associate Editor of \textit{Automatica}  and  as  chair  of the  {IEEE-CSS  Technical  Committee  on  Networks  and  Communications}. 
He has been serving as  Associate  Editor  of the  \textit{IEEE Transactions on Network Science and Engineering} and of the \textit{IEEE Transactions on Control of Network Systems}. 
 He was  the  IPC  chair  of  the  IFAC  Workshop  NecSys'15  and  a  semiplenary speaker  at  the  International  Symposium  MTNS'16.  He  is  recipient  of  the 2015  George S. ~Axelby  Outstanding Paper Award.  His  research interests  are in  dynamics,  information,  and  control  in  network  systems  with  applications to  cyber-physical  systems,  infrastructure  networks,  and  social  and  economic networks.
\end{IEEEbiography}

\begin{IEEEbiography}
[{\includegraphics[width=1in,height=1.25in,clip,keepaspectratio]{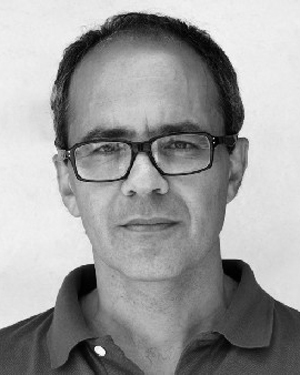}}]
{Fabio Fagnani}
received the Laurea degree in Mathematics from the University of Pisa and the Scuola Normale Superiore, Pisa, Italy, in 1986. He received the PhD degree in Mathematics from the University of Groningen,  Groningen,  The  Netherlands,  in 1991. From 1991 to 1998, he was an Assistant Professor of Mathematical Analysis at the Scuola Normale Superiore. In 1997, he was a Visiting Professor at the Massachusetts Institute of Technology (MIT), Cambridge, MA. Since 1998, he has been with the Politecnico of Torino, where since 2002 he has been a Full Professor of Mathematical Analysis. From 2006 to 2012, he has acted as Coordinator of the PhD program in Mathematics for Engineering Sciences at Politecnico di Torino. From June 2012 to September 2019, he served as the Head of the Department of Mathematical Sciences, Politecnico di Torino. His current research topics are on cooperative algorithms and dynamical systems over graphs, inferential distributed algorithms, and opinion dynamics. He is an Associate Editor of the \textit{IEEE Transactions on Automatic Control} and served in the same role for the \textit{IEEE Transactions on Network Science and Engineering} and of the \textit{IEEE Transactions on Control of Network Systems}.
\end{IEEEbiography}

\begin{IEEEbiography}
[{\includegraphics[width=1in,height=1.25in,clip,keepaspectratio]{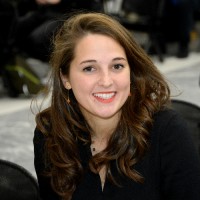}}]
{Martina Vanelli} is a PhD student in Applied Mathematics at the Department  of  Mathematical  Sciences (DISMA),  Politecnico di  Torino,  Italy. 
She  received the B.Sc. and M.S. degrees in Applied Mathematics  from  Politecnico  di  Torino,  in  2017 and  2019, respectively. From October 2018 to March 2019, she was a visiting student at  Technion, Israel. Her research interests include game theory, auction theory, and network systems with applications to social and economic networks and power markets.

\end{IEEEbiography}

\end{document}